\newcommand{\ra}[1]{\renewcommand{\arraystretch}{#1}}
\begin{document}
%section
\begin{comment}
\usepackage{titlesec}
\setcounter{secnumdepth}{4}
\titleformat{\paragraph}
{\normalfont\normalsize\bfseries}{\theparagraph}{1em}{}
\titlespacing*{\paragraph}
{0pt}{3.25ex plus 1ex minus .2ex}{1.5ex plus .2ex}    
\end{comment}

\title{Improved Privacy-Preserving PCA Using Optimized 
Homomorphic Matrix Multiplication}
\author{Xirong Ma\inst{1}\and Chuan Ma\inst{2}\and Yali Jiang \inst{1}\thanks{Corresponding author: Yali Jiang} \and Chunpeng Ge\inst{1,3}}
%\author[2]{Chuan Ma}
%\author[3]{Yali Jiang}
%\author[4]{Chunpeng Ge}
%\author{Xirong Ma\inst{1}}
\authorrunning{X. Ma et al.}
% \author[*]{Yali Jiang}
\institute{School of Software, Shandong University \and Zhejiang Lab \and Joint SDU-NTU Centre for Artificial Intelligence Research (C-FAIR)}
% \affil[*]{School of Software, Shandong University}

\maketitle

\begin{abstract}
Principal Component Analysis (PCA) is a pivotal technique widely utilized in the realms of machine learning and data analysis. It aims to reduce the dimensionality of a dataset while minimizing the loss of information. In recent years, there have been endeavors to utilize homomorphic encryption in privacy-preserving PCA algorithms for secure cloud computing. These approaches commonly employ a PCA routine known as PowerMethod, which takes the covariance matrix as input and generates an approximate eigenvector corresponding to the primary component of the dataset. However, their performance is constrained by the absence of an efficient homomorphic covariance matrix computation circuit and an accurate homomorphic vector normalization strategy in the PowerMethod algorithm. In this study, we propose a novel approach to privacy-preserving PCA that addresses these limitations, resulting in superior efficiency, accuracy, and scalability compared to previous approaches. %他们普遍使用一种称为 PowerMethod的 PCA routine， which takes the covariance matrix as input and produces an approximate eigenvector corresponding to the first principal component of the dataset. 然而他们的性能和准确率受限于无法高效地同态地计算出协方差矩阵，以及受限于没有一个通用的向量规范化策略来进行 PowerMethod 算法中的向量规范化。In this study, we present a novel approach for privacy-preserving PCA that tackles these limitations 而取得了比过往方案 \cite{panda2021principal} 更好的效率、准确性与可扩展性。

%However, their performance and accuracy are constrained by the incapability of homomorphic covariance matrix computation and the absence of a universal vector normalization strategy for the PowerMethod algorithm.

We attain such efficiency and precision through the following contributions: (i) We implement space and speed optimization techniques for a homomorphic matrix multiplication method, specifically tailored for parallel computing scenarios. (ii) Leveraging the benefits of this optimized matrix multiplication, we devise an efficient homomorphic circuit for computing the covariance matrix homomorphically. (iii) Utilizing the covariance matrix, we develop a novel and efficient homomorphic circuit for the PowerMethod that incorporates a universal homomorphic vector normalization strategy to enhance both its accuracy and practicality.
% (Jiang et al., SIGSAC 2018), 

Our privacy-preserving PCA scheme, implemented using our innovative homomorphic PowerMethod circuit, surpasses state-of-the-art approaches with an average speedup of 1.9 times on datasets with size $256\times 256$, along with an average improvement in R2 accuracy of $0.195$. Notably, our scheme demonstrates an even more remarkable estimated speedup of $25$ times when applied to larger datasets with size $60000\times 256$,  showcasing scalability that has not been reported by previous approaches.% Specifically, we successfully compute the top 8 principal components of the Fashion-MNIST dataset($60000\times 256$) in approximately 1 hour. Results obtained from such a dataset scale have not been reported by existing approaches.
% demonstrating an R2 accuracy of 0.9
% 在 $256\times 256$ 数据集上
%, and an average smaller error of 10 times between the encrypted and un-encrypted routines

% in terms of both scalability and accuracy. When compared to results obtained from datasets of similar scale, our scheme exhibits an average speed improvement of fourfold and achieves an average R2 accuracy improvement of 0.16. Additionally, the errors in our results, when compared to plaintext routines, are on average 10 times smaller. Furthermore, when applied to larger datasets, our scheme is estimated to achieve a speed improvement of up to 10 times. It is worth noting that our scheme successfully computes the top 8 principal components of the Fashion-MNIST dataset in approximately 1 hour, yielding an impressive R2 accuracy of 0.9. Such results have not been previously reported by existing approaches.

%Our privacy-preserving PCA scheme, based on our new homomorphic PowerMethod circuit, 在与 state of art 相同规模的result 比较时，我们的方案平均快 4 倍，并且 R2 准确率平均高 0.16，与明文routine结果的误差平均小 10 倍。在更大规模数据集上我们的方案甚至估算具有 10 倍更快的速度，Notably, our scheme successfully computes the top 8 principal components of the Fashion-MNIST dataset in approximately 1 hour, demonstrating an R2 accuracy of 0.9，这是过往方案所不曾报告过的效果。

% reported by Panda et al. (CSCML 21) 
%Our privacy-preserving PCA scheme based on our new homomorphic PowerMethod circuit 比汇报在 \cite{panda2021principal} 中的、当前最好的成绩平均快了 4 倍，并且有更高的准确率。我们的方案 successfully computes the top 8 principal components of the Fashion-MNIST dataset in approximately 1 hour, achieving an R2 accuracy of 0.9. 这样的可扩展性是未曾在过往研究中汇报过的。
\end{abstract}

\keywords{Privacy-preserving PCA\and CKKS \and Homomrophic matrix multiplication} \and Cloud computing \and Machine Learning as a service

\section{Introduction}

Principal Component Analysis (PCA) \cite{hotelling1933analysis}, \cite{pearson1901liii} is a widely employed dimensionality reduction technique. It maps high-dimensional data to a lower-dimensional space, preserving the most essential features while minimizing redundant information and noise. Specifically, PCA transforms the original data into a set of new orthogonal variables called principal components, representing the directions of maximum variance in the original data. PCA is widely adopted in data analysis and machine learning, including data preprocessing, feature extraction, data compression, and visualization. It helps us understand the relationships within a dataset and provides a simpler and more manageable representation. 

Due to the significance and versatility of PCA, it is leveraged as one of the supported technologies in cloud computing services, enabling users to perform enhanced data analysis leveraging the computational power of the cloud. However, concerns regarding the trustworthiness of cloud storage hinder users from directly performing analysis and processing of sensitive data in the cloud. Instead, it is needed to incorporate privacy protection measures into their data before outsourcing it to mitigate the risk of potential attacks from an untrusted cloud environment. Specifically, users and cloud servers can negotiate and employ privacy-preserving algorithm routines to process sensitive data in the cloud while ensuring privacy. Therefore, the proposition of a practical and efficient privacy-preserving PCA technique would greatly benefit users by enabling them to perform PCA on sensitive data in the cloud. %具体来说，用户和云服务器可以通过协商，采用隐私保护的算法例程来在云上处理敏感数据。那么 the proposition of a practical and efficient privacy-preserving PCA technique 将有利于用户在云上对敏感数据进行 PCA。

One possible approach for implementing such a privacy-preserving PCA technique involves utilizing homomorphic encryption. Homomorphic encryption is an important privacy-preserving technique that allows computations to be performed on encrypted data without decryption. This technology serves to safeguard data privacy while preserving data usability, rendering it extensively employed in domains such as cloud computing and cross-domain computation. Over the past fifteen years, a class of homomorphic schemes based on the Ring Learning With Errors (R-LWE) problem \cite{lyubashevsky2010ideal} has rapidly developed (e.g., but not limited to \cite{brakerski2014leveled}, \cite{fan2012somewhat}, \cite{cheon2017homomorphic}, \cite{cheon2019full}). These schemes naturally possess SIMD properties and support homomorphic addition and multiplication operations. As a result, a plethora of privacy-preserving data analysis algorithms based on these schemes have emerged, including privacy-preserving PCA schemes tailored for cloud service scenarios \cite{lu2016using}, \cite{rathee2018faster}, \cite{panda2021principal}. These existing homomorphic encryption-based privacy-preserving PCA methods utilized an iterative algorithm known as PowerMethod to compute the dominant eigenvector of the covariance matrix of the dataset, which corresponds to the first principal component. This algorithm selects an initial approximation of the dominant eigenvector and continually applies the covariance matrix transformation to refine its approximation (see Algorithm \ref{alg1} for detail). There are two major limitations among the homomorphic PoweMethod algorithms in the previous approaches. 

Firstly, these approaches lack the capability to compute the covariance matrix homomorphically. the PowerMethod necessitates the input of the dataset covariance matrix, but to our knowledge, previous schemes have not offered a homomorphic solution for computing this matrix. Instead, they resort to alternative methods. Some require users to compute the covariance matrix locally which burdens the users with additional computational tasks and deviates from the original intent of cloud services \cite{rathee2018faster}. Others decompose the covariance matrix transformation within the PowerMethod into dataset matrix transformations to avoid explicit involvement of the covariance matrix but introduce extra computational complexity\cite{panda2021principal}.

Secondly, there is a potential loss of accuracy due to the absence of a universal vector normalization strategy. In each iteration of PowerMethod, normalization is required to control the length of the vector. In the homomorphic context, iterative algorithms are typically used to approximate the inverse square root function for normalization. The accuracy of these algorithms heavily relies on the selection of parameters such as the evaluation interval and the number of iterations. To our knowledge, no prior research has presented a universal strategy to determine these parameter settings in the realm of homomorphic PowerMethod algorithms. Consequently, existing homomorphic PowerMethod algorithms may potentially suffer from inaccuracy due to this inherent limitation.

\subsection{Our Contributions}
We propose an efficient privacy-preserving PCA algorithm that overcomes the obstacles above. To achieve this, we make the following contributions.
\newline
\noindent \textbf{Homomorphic Covariance Matrix Computation with Optimized Matrix Multiplication (Section 4,5):} We enhance the efficiency of a cutting-edge homomorphic matrix multiplication algorithm, both in terms of time and space, to render it more apt for parallel computing scenarios. Subsequently, we employ it as a fundamental element to design a proficient algorithm for homomorphic covariance matrix computation, which harnesses the parallel computation capabilities of multiple matrix multiplication instances.
%and use it as a core component to design an efficient homomorphic covariance matrix computation circuit that takes advantages of the parallel computation of multiple instances of matrix multiplication. 

%\noindent \textbf{Improved Matrix Multiplication (Section 4)} In order to compute the covariance matrix homomorphically, we first explore the use of a state-of-art homomorphic matrix multiplication algorithm combined with some existing optimization techniques. However, we encountered memory saturation due to the large rotation key space required by the algorithm when attempting to perform parallel matrix multiplications for large datasets. This motivated us to seek ways to reduce the rotation key space for matrix multiplication while maintaining or even improving its computational efficiency.

%\noindent \textbf{Homomorphic Covariance Matrix Computation (Section 5)} Leveraging the improved homomorphic matrix multiplication, we design a circuit for homomorphically computing the covariance matrix of large datasets.  It takes advantage of parallel computation using multiple instances of matrix multiplication and exploits the symmetry of the covariance matrix to enhance efficiency.%借助改良搞的同态矩阵乘法，我们为同态计算大数据集的协方差矩阵设计了电路，它利用多矩阵乘法实例的并行计算以及协方差矩阵的对称性来提速。

\noindent \textbf{Privacy-preserving PCA using Homorphic PowerMethod (Section 6):} PowerMethod consists of two main components: covariance matrix transformation and vector normalization. We first design an efficient homomorphic circuit for covariance matrix transformation. Next, we employ an iterative algorithm approximating the inverse square root (InvSRT) function to perform the vector normalization and propose a systematic approach to parameterize the iterative algorithm. Finally, we present the process of performing privacy-preserving PCA utilizing our homomorphic PowerMethod circuit. %PowerMethod 主要由两个部分：协方差矩阵变换和向量规范化。我们为协方差矩阵变换设计了高效的计算电路，然后使用方根求逆的迭代算法对向量进行规范化，我们详细地讨论并给出迭代算法的求值区间和迭代轮数的选取方法，因此使 PowerMethod 在同态计算下有更精确的结果。discuss in detail the selection of evaluation range and number of iterations for the iterative algorithm,

% ---------------------------------- 2023/5/28 -----------------------------------------------

\noindent \textbf{Implementation (Section 7)} We compare our proposed solution with existing approaches, highlighting that our algorithm outperforms previous methods in terms of computational efficiency, precision, and scalability. %我们将我们的方案和过往的方案进行性能上的比较，并指出我们的算法在计算效率、精确度和可扩展性上都要优于过往的方案。

\section{Related Work}

\subsection{Privacy-preserving PCA in cloud computing scenario}

The initial attempts at performing PCA on an RLWE-based homomorphic encryption scheme are made by Lu \textit{et al.} and Rathee \textit{et al.} \cite{lu2016using}, \cite{rathee2018faster}. They propose using an iterative algorithm called PowerMethod to compute principal components in a homomorphic encryption setting. The PowerMethod iteratively computes $\mathbf{v} \leftarrow \text{Cov}\cdot \mathbf{v}$, and it can be proven that $\mathbf{v}$ converges to the dominant eigenvector of the covariance matrix $\text{Cov}$ after a finite number of steps, where the dominant eigenvector is equivalent to a principal component. However, they find the homomorphic computation of the covariance matrix to be inefficient for arbitrary sample sizes. Therefore, their scheme requires the client (of cloud service) to compute and send the encryption of $\sum \mathbf{x}_i^T \mathbf{x}_i$ (where $\mathbf{x}_i$ represents the row vector form of the samples in the dataset $X$), which not only causes the computational burden on the user but also becomes trivial because PowerMethod is no harder than computing $\sum \mathbf{x}_i^T \mathbf{x}_i$ in plaintext space. Furthermore, their PCA schemes cannot normalize the approximate eigenvector $\mathbf{v}$ during the iterations of PowerMethod, and the length of the approximate eigenvector is prone to overflow since the homomorphic encryption they used only supports homomorphic operations on modular integers \cite{brakerski2014leveled}.

Subsequently, A work by Panda \cite{panda2021principal} migrates the PowerMethod to another RLWE-based homomorphic encryption scheme known as CKKS that supports approximate computation on complex numbers \cite{cheon2017homomorphic}. The CKKS scheme enables them to perform vector normalization using an iterative algorithm of the inverse square-root function. However, their work provides only one possible implementation of the iterative algorithm without demonstrating how to choose its evaluation interval and the number of iterations. As a result, their solution lacks both accuracy and practicality. Furthermore, The PowerMethod in their work is done by iteratively computing $\mathbf{v}\leftarrow 1/N\cdot X^T(X\mathbf{v})$ (where $X$ is the centered dataset), rather than directly computing the covariance matrix transformation. This makes the complexity of the PowerMethod dependent on the size of the dataset rather than the size of the covariance matrix, which is inefficient for datasets with a large number of samples. 

% In addition, The aforementioned approaches lack discussions on scalability. They conduct experiments using datasets with a number of units less than two hundred thousand, without considering larger datasets with tens of millions of units. However, datasets of this scale are common in real-world scenarios. This motivates us to explore homomorphic PCA schemes that are suitable for larger-scale datasets.

\subsection{Homomorphic Encryption Matrix Multiplication}
Numerous approaches are proposed on how to construct secure matrix multiplication algorithms using homomorphic encryption schemes based on the RLWE problem \cite{duong2016efficient}, \cite{mishra2021fast}, \cite{rathee2018faster}, \cite{jiang2018secure}, \cite{wang2019secure}. Among them, the work by Jiang \textit{et al.} \cite{jiang2018secure} introduces a scheme for secure real number matrix multiplication based on the CKKS scheme. To the best of our knowledge, it remains the state-of-the-art RLWE homomorphic encryption-based approach for real number matrix multiplication. In a recent study by Jang \textit{et al}. \cite{jang2022privacy}, they further improve this scheme by migrating it to a variant of CKKS based on the multivariate polynomial learning with errors (m-RLWE) problem. 

Nevertheless, few have systematically discussed the performance of homomorphic matrix multiplication within parallel computation scenarios. We believe that delving into this matter is of utmost importance, as parallel computation plays a pivotal role in expediting extensive matrix operations that involve multiple ciphertexts. Efficiently executing multiple instances of homomorphic matrix multiplication in parallel is instrumental in accelerating large-scale computations, warranting a thorough exploration of this aspect.

\section{Preliminaries}

\subsection{General Notation}

We use italic letters, such as $a$, to represent polynomial elements or numbers. We use bold letters, such as $\mathbf{a}$, to represent vectors, and uppercase letters, such as $A$, to represent matrices. The symbols $\oplus$ and $\odot$ denote elementwise addition and multiplication. The notation $\rho(\mathbf{a};k)$ represents the result of cyclically left-shifting (or rotating) the components of vector $\mathbf{a}$ by $k$ positions. Additionally, we use $(\mathbb{Z}/q\mathbb{Z},+)$ to represent a modulo $q$ additive group, where we use two sets of integers: $(-q/2,q/2]$ and $[0,q-1]$ to refer to two different representations of the residue class. We define $[a]_q$ as the result of reducing $a$ modulo $q$ that falls into $(-q/2,q/2]$, and $a \mod{q}$ as the result falling into $[0,q-1]$.

\subsection{Homomorphic Encryption for Arithmetic of Approximate Numbers}

Homomorphic encryption is a cutting-edge field in cryptography and serves as a powerful component for privacy-preserving computations. 
The CKKS scheme we mentioned earlier is a homomorphic encryption scheme supporting approximate arithmetic on complex vector space. It provides an encoding method to store up to $N/2$ floating point values in plaintext or ciphertext, which are represented by polynomials in the domain $R_{Q_L} = \mathbb{Z}_{Q_L}[X]/(X^N+1)$, where $Q_L=\prod_0^L q_i$, and $N$ is a power of 2. Operations over the domain are performed simultaneously on all encoded values, known as SIMD processing. The security of CKKS is based on the RLWE problem, so some noise is introduced into the encrypted values. We briefly introduce the operations provided by CKKS in the following.
\begin{itemize}
    \item $s\leftarrow \text{SecKeyGen}(\cdot)$ generates the secret key $s$
    \item $pk,swks\leftarrow \text{PubKeyGen}(s)$ generates public key $pk$ and a series of switching keys $swks=\{swk_{s_i,s_j}\}$ where $swk_{s_i,s_j}$ helps switch the secret inside a ciphertext from $s_j$ to $s_i$. Notice that switching keys can be quite large in size. Typically, a group of switching keys together can occupy a space ranging from tens of megabytes to several gigabytes or even more.
    \item $ct(m)\in R_{Q_L}^2\leftarrow \text{Enc}(pk,m)$ encryptes a plaintext vector $p$. 
\item $m'\leftarrow \text{Dec}(sk,ct(m))$ decrypts the ciphertext $ct(m)$ and output $m'\approx m$ 
\item $ct(m_1+m_2)\leftarrow \text{Add}(ct(m_1),ct(m_2))$ performs addition of the encrypted vectors.
\item $ct(m_1\odot m_2)\leftarrow \text{PMult}(ct(m_1),m_2)$ performs elementwise multiplication between encrypted vector $ct(m_1)$ and plaintext vector $m_2$. Rescaling is needed to control the noise growth in the result.
\item $ct(m_1\odot m_2)\leftarrow \text{Mult}(ct(m_1),ct(m_2))$ performs elementwise multiplication between encrypted vectors. Rescaling and relinearization operations are required to control the size and noise growth of the result. 
\item $ct(\rho(m,k))\leftarrow \text{Rot}(ct(m),rtk_k)$ performs cyclic rotation of step $k$ on the encrypted vector, where $rtk_k$ is an instance of switching key $swk_{s^{5^k},s}$ and $s$ is the secret inside the encrypted vector.
\end{itemize}

Notably, for a given set of modulus products $Q_L$, the homomorphic operations are limited to a multiplication depth of at most $L$ because rescaling consumes the modulus level of ciphertexts ( any ciphertext in $R_{Q_l}^2$ is at the $l$-th modulus level, and rescaling brings the ciphertext to a lower level $R_{Q_{l-1}}^2$). Once the modulus level of a ciphertext reaches 0, it can no longer participate in homomorphic multiplication operations. To perform operations with a greater multiplication depth, we need to conduct an operation called modulus refresh which raises the modulus level of the ciphertext. The modulus refresh can be achieved either by decrypting and re-encrypting the ciphertexts or by applying the bootstrapping function to the ciphertexts. 

We build our privacy-preserving PCA scheme upon a full Residue Number System (RNS) variant of the CKKS scheme \cite{cheon2019full} implemented in the Lattigo library \cite{lattigo}. This variant allows efficient polynomial multiplication by representing polynomials in both the RNS and NTT (Number Theoretic Transform) domains. We refer to Appendix \ref{apd:A} for more details of the full-RNS CKKS scheme.

% In summary, the full-RNS CKKS scheme enables us to perform Hadamard addition and multiplication on ciphertexts, as well as rotations of the message vector components. Our subsequent design of homomorphic algorithms will solely rely on the routines provided by the full-RNS CKKS scheme, ensuring that our algorithms are feasible within this framework. We use the parameter set  $\log{N},\log{QP},\sigma,\log{\text{Slots}}$ to identify a concrete CKKS instance, where $N,\log{QP},\sigma$ corresponds to the parameters $N,b,\sigma$ in function Setparams, and $Slots$ corresponds to the parameter $n$ in function Encode. The plaintext space of this instance is set as $\mathbb{C}^\text{Slots}$. %总的来说，full-RNS CKKS 方案实现了密文上的 SIMD 加法与乘法，以及向量分量的旋转。我们后续的同态算法设计只会使用上述 full-RNS CKKS 方案提供的 routine，使得我们的算法在该方案上是可实现的。

% The introduction of full-RNS CKKS can be divided into the following four modules. \\ %the general module, addition module, multiplication module, and rotation module. \\

% 2* ( (n_1+n_2')(MS+Pm) + (n_2'+1)(Dp+MD) + n/(n_1*n_2')(MS+Pm+Dp+MD) )

\subsection{Homomorphic Matrix Operation}
Our optimization for the homomorphic matrix multiplication is derived from the scheme proposed by Jiang \textit{et al.} \cite{jiang2018secure} for the CKKS scheme. They utilize a homomorphic linear transformation technique to build up the homomorphic matrix multiplication.  We will introduce both of these techniques in the following. 
 %以下会先对该矩阵乘法中用到的同态线性变换技术进行介绍，然后对矩阵乘法进行阐述。

\subsubsection{Homomorphic Linear Transformation}

Halevi \textit{et al.} first propose an approach to achieve linear transformations in the context of homomorphic encryption \cite{halevi2014algorithms}. They pointed out that any linear transformation $U\mathbf{m}$ can be represented as:
\begin{align}\label{eq2}
U\cdot \mathbf{m} = \sum_{0\leq \ell< n}(\mathbf{u}_{\ell}\odot\rho(\mathbf{m};\ell)),
\end{align}
where $\mathbf{u}_\ell$ represents the $\ell$-th diagonal vector of $U$: $\mathbf{u}_\ell=(U_{0,\ell},U_{1,\ell+1},...,U_{n-1,(\ell+n-1)\mod{n}})$. By associating $\mathbf{m}$ with the ciphertext and $U$ with the plaintext matrix in (\ref{eq2}),  we can effectuate a homomorphic linear transformation $U\mathbf{m}$ through $n$ ciphertext rotations, $n$ plaintext multiplication, and $n-1$ ciphertext additions. An algorithm known as Baby Step Giant Step (BSGS) can be employed to minimize the number of rotation operations involved \cite{halevi2018faster}: if there exists an arithmetic progression $\{s_k= a\cdot k |-d< k <d,d=d_1d_2\}$ such that all indices $\ell$ corresponding to non-zero diagonal vectors in $U$ belong to this arithmetic progression, then Equation (\ref{eq2}) can be reformulated as follows:

\begin{align}\label{eq3'} 
\begin{split}
U \mathbf{m}  &=  \sum_{-d_2< i<d_2,0\leq j< d_1 } \left( \mathbf{u}_{a\cdot(d_1\cdot i+j)} \odot \rho(\mathbf{m};a\cdot (d_1\cdot i+j))    \right)\\ 
&= \sum_{-d_2< i < d_2} \rho \left(\sum_{0\leq j < d_1} \left( \rho(\mathbf{u}_{a\cdot (d_1\cdot i+j)};-a\cdot d_1\cdot i)\odot \rho(\mathbf{m};a\cdot j)\right) ; a\cdot d_1\cdot i \right).
\end{split}
\end{align} 

Further optimization can be achieved by analyzing the rotation operations. Halevi \textit{et al.} introduce a ciphertext rotation optimization technique called \textit{hoisting} which reconstructs the internal operations for rotating a ciphertext $c=(c_0,c_1)\in R_{Q_l}^2$ by $k$ steps in the following sequence. (i) \textbf{Decompose}: Decompose $c_1$ into a vector $\mathbf{c}$ based on the decomposition base $\mathbf{b}$ of the rotation key $rtk_k$. (ii)\textbf{Permute}: Perform automorphism $\phi_k:X\rightarrow X^{5^k}(\mod{X^N+1})$ on each component of $\mathbf{c}$. (iii)\textbf{MultSum}: Perform the inner product $(c_0',c_1')\leftarrow \left<\mathbf{c}, \text{rtk}_k\right>$, resulting in a \textit{hoisted} ciphertext in $R_{Q_lP}^2$. (iv) \textbf{ModDown}: Reduce the modulus of the hoisted ciphertext to bring it back to $R^2_{Q_l}$. Here, \textbf{Decompose} and \textbf{ModDown} involve NTT and CRT basis extensions, which dominate the computational complexity of the rotation operation. The rotation key is only involved in the \textbf{MultSum}. Therefore, multiple rotations performed on the same ciphertext can share the result of \textbf{Decompose}(ct) when all the corresponding rotation keys are available. 
%This technique 将对在 $R_{Q_l}$ 中的密文 $c$ 旋转 $k$ 步的内部操作重构为以下顺序：(i) $c$ 按照分解基 $\mahtbf{b}$ 被分解成 $\beta$ 个分量的向量 $\mathbf{c}$ (ii) 对 $\mathbf{c}$ 中的每个分量进行排列 (iii) 进行内积：$\left<\mathbf{c},rtk_k\right> $ 此时得到的是属于 $R_{Q_lP}$ 的 hoisted 的密文(iv) 将 hoisted 密文的模数降低，使其回到 $R_{Q_l}$。

Bossuat \textit{et al.} apply the hoisting idea to the BSGS-optimized linear transformation algorithm \cite{bossuat2021efficient}. They apply the hoisting technique to the rotations $\rho(\mathbf{m};a\cdot j),0\leq j \leq d_1$ in Equation (\ref{eq3'}) but do not perform the final step of the rotation, \textbf{ModDown}, immediately. Instead, they multiply the intermediate hoisted ciphertexts with the corresponding pre-rotated diagonal vectors $\rho(\mathbf{u}_{a\cdot (d_1\cdot i +j)}; -a\cdot d_1\cdot i)$ and aggregate the products in $R^2_{PQ_\ell}$. \textbf{ModDown} is finally applied to the aggregated result. This optimization method is referred to as the \textit{double-hoisting} technique. It reduces the computational complexity of the BSGS-optimized linear transformation from $(d_2+d_1)\cdot (\textbf{MultSum} + \textbf{ModDown} + \textbf{Decompose} + \textbf{Permute})$ to $(d_2+d_1) \cdot (\textbf{MultSum} + \textbf{Permute}) + (d_2+1) \cdot (\textbf{Decompose} + \textbf{ModDown})$. 

% 这里我们还想讨论一下 double hoisting 技术加持下的同态线性变换的空间复杂度，它与我们后续将要讨论的同态矩阵乘法的并行运算的空间复杂度有紧密的联系。The space complexity comprises two primary components: (i) 需要参与密文旋转的 $d_2+d_1-2$ 把 rotation keys 以及 (ii) the $d_1$ intermediate hoisted ciphertexts for the BSGS inner loop. hoisted ciphertexts 占据的空间 is much smaller than rotation keys 占据的空间，就单个线性变换实例而言。然而如果我们考虑多个相同的线性变换实例同时进行，那么 rotation keys 是可以在这些实例之间共享的，但 hoisted ciphertexts 则会随并行实例增多而增多，并可能占据与旋转密钥相同量级的空间。这种场景将出现在矩阵乘法的并行运算中，因为线性变换是矩阵乘法的重要构成部分。
%那么结合上述 double hoisting 加持的线性变换的计算复杂度，我们可知 a smaller inner loop results in fewer pre-rotated ciphertexts but a slower speed, and a larger inner loop leads to more extensive use of pre-rotated ciphertexts but faster speed. Meanwhile, the number of rotation keys is minimized when the inner and outer loops are closest to each other.  

%For the double hoisting technique, the choice of inner loop counts $d_1$ can be adjusted to balance space and computation complexity. The space complexity comprises two primary components: the $d_1$ pre-rotated ciphertexts for the BSGS inner loop and the rotation keys that can be shared among instances of the same linear transformation. It is evident that a smaller inner loop results in fewer pre-rotated ciphertexts but a slower speed, and a larger inner loop leads to more extensive use of pre-rotated ciphertexts but faster speed. Meanwhile, the number of rotation keys is minimized when the inner and outer loops are closest to each other. 

%我们可以看出，较小的内层循环意味着较小的预计算密文数与较慢的速度，反之则预计算密文占用空间过多，速度较快，而旋转密钥数总是在内外层循环最接近时最小。

In the subsequent discussion, we assume that all the rotation keys required for performing any homomorphic encryption linear transformation are loaded into memory prior to the computation. This is the current implementation strategy in the Lattigo library. Although it is theoretically possible to load a key into memory only when it is needed for computation and release it soon afterward to control a smaller memory footprint for keys, we do not recommend this practice. Because it may incur additional time for disk read/write operations and disrupt the compactness of the storage structure. Especially, it will significantly slow down the speed of consecutive linear transformation computation. %Moreover, when considering parallel computation of multiple linear transformations, controlling the execution order of threads and the access order to temporary retrieved keys would become increasingly difficult and complex, and the space benefits obtained would not compensate for the loss in computational speed.

\subsubsection{Hommorphic Matrix Multiplication by Jiang \textit{et al.}}

Jiang \textit{et al.} proposed a homomorphic matrix multiplication approach based on the aforementioned linear transformation algorithm and its BSGS optimization \cite{jiang2018secure}. A square matrix $A$ of size $n\times n$ can be encoded as a vector $\mathbf{a}$ with $n^2$ components using the row ordering encoding.  Considering the multiplication of matrices $A$ and $B$, Jiang \textit{et al.} first encode $A,B$ into vectors $\mathbf{a}$ $\mathbf{b}$ with row ordering, then compute the following equation:
% apply a series of different linear transformations to $\mathbf{a}$ and $\mathbf{b}$, then use coordinate-wise multiplication to pairwise multiply the transformed results and aggregate them to obtain the encoding vector $\mathbf{c}$ of the matrix $AB$ as follows:
%For simplicity, let $n^2$ be equal to the number of available slots in a ciphertext (recall that $N/2$ is the maximum number of available slots, where $N$ is the degree of the cyclotomic polynomial $\Phi_M(X)$).
\begin{align}\label{eq4}
\mathbf{c}=\sum_{k=0}^{n-1}(C^kZ\mathbf{a})\odot(R^kT\mathbf{b}),
\end{align}
where $C$, $Z$, $R$, and $T$ are specific permutation matrices, and the $\mathbf{c}$ is the encoding vector of the matrix $AB$. They also design another permutation matrix transformation $G$ for matrix transposition: $\mathbf{a}' = G\mathbf{a}$, such that $\mathbf{a}'$ represents the transpose $A^T$ of the matrix $A$ represented by $\mathbf{a}$. The BSGS algorithm is employed for the transformations $Z, T$ and $G$ to greatly enhance efficiency. However, Jiang \textit{et al.} treat BSGS as a black-box operation without considering internal optimization by hoisting techniques. 

\subsection{Principal Component Analysis using PowerMethod}

Principal Component Analysis (PCA) is frequently employed for the dimensionality reduction of a dataset. Its fundamental concept involves identifying orthogonal axes that maximize the variance of the dataset projected onto these axes, which are referred to as principal components. Finding the top $k$ principal components is equivalent to determining the $k$ largest eigenvectors of the covariance matrix of the dataset.

% Recall figure \ref{fig:intro}, 
The covariance matrix $\Sigma$ of the dataset $X$ is given by $\Sigma = \frac{1}{N}X^TX - \mu \mu^T$, where $u^T = \frac{1}{N}\sum_{i=0}^{N-1}\mathbf{x}_i^T$. The PowerMethod is an approximate algorithm for computing the dominant eigenvector of the covariance matrix (see Algorithm \ref{alg1}). It accepts a covariance matrix and its initial approximation of the dominant eigenvector and continuously applies the covariance matrix transformation to the approximate eigenvector to make it more and more accurate. The overflow should be prevented by normalizing the approximate vector after each matrix transformation. The EigenShift algorithm is used to shift the covariance matrix in terms of eigenvectors (see Algorithm \ref{alg2}). It takes the covariance matrix and the top $k$ eigenvectors as inputs and outputs the $k$-shifted covariance matrix, where the $(k+1)$-th dominant eigenvector of the original covariance matrix becomes the $1$-th dominant eigenvector of the shifted one. By combining the PowerMethod and the EigenShift, we can compute the top principal components of the dataset. 
\begin{algorithm}
	%\textsl{}\setstretch{1.8}
	\renewcommand{\algorithmicrequire}{\textbf{Input:}}
	\renewcommand{\algorithmicensure}{\textbf{Output:}}
	\caption{PowerMethod}
	\label{alg1}
	\begin{algorithmic}[1]
        \REQUIRE
            $\Sigma$: covariance matrix of the dataset;
            $l_P$: number of iterations.
        \ENSURE
            $\mathbf{u}_1,\lambda_1$: dominant eigen-vector of $\Sigma$ and its eigen-value
		\STATE Choose a random vector $\mathbf{v}^{(0)}$ of size $d$
        \FOR{$i=1$ to $l_P$}
            \STATE $\mathbf{v}^{(i)}\leftarrow \Sigma\mathbf{v}^{(i-1)}$
            \STATE $\mathbf{v}^{(i)}\leftarrow \mathbf{v}^{(i)}/||\mathbf{v}^{(i)}||$
        \ENDFOR
        \RETURN $\mathbf{u}_1 = \mathbf{v}^{(l_P)}$ and $\lambda_1 = ||\mathbf{v}^{(l_P)}||/||\mathbf{v}^{(l_P-1)}||$ 
	\end{algorithmic}  
\end{algorithm}

\begin{algorithm}
    \renewcommand{\algorithmicrequire}{\textbf{Input:}}
    \renewcommand{\algorithmicensure}{\textbf{Output:}}
    \caption{EigenShift}
    \label{alg2}
    \begin{algorithmic}[1]
    \REQUIRE
        $\Sigma$: covariance matrix of the dataset;
        $\{\mathbf{u}_i,\lambda_i\}$: the $i$-th dominant eigenvector of $\Sigma$ and its eigenvalue.
    \ENSURE
        $\Sigma_k$: $k$-shifted covariance matrix of $X$
    \FOR{$i=1$ to $k-1$}
        \STATE $\Sigma_{i+1} = \Sigma_{i} - \lambda_i \mathbf{u}_i\mathbf{u}_i^T$
    \ENDFOR
    \RETURN $\Sigma_k$    
    \end{algorithmic}
\end{algorithm}

% As we mentioned in section 2, \cite{panda2021principal} changes the computation order from $\Sigma\mathbf{u} = X^TX\mathbf{u}$ to $X^T(X\mathbf{u})$, where $X$ represents the centered dataset. This approach eliminates the need to compute $\Sigma$ and avoids introducing matrix operations. However, the number of times it traverses all the samples is linearly proportional to the product of the target number of largest eigenvectors $k$ and the iterations of the Power Method $t$, with a multiplication complexity roughly estimated as $O(2tkd|X|)$. Furthermore, their computation depth required for the Power Method iteration is $t(\log(p)+3l+2)$, where $p$ is the smallest factor greater than or equal to the number of features $d$ and is also the maximum number of ciphertext slots for message encryption $N/2$, and $l$ is the total number of iterations required for the iterative algorithm of Inverse SquareRoot function. In our scheme, we first compute the covariance matrix, separating the iteration complexity of computing the Power Method from $|X|$. Additionally, our total multiplication depth is only linearly related to the iteration rounds and the target number of eigenvectors $k$, with a smaller constant factor and independent of the number of features.

\subsection{Iterative algorithm of inverse square root function} \label{sec: InvSRT}

The inverse square-root (InvSRT) function is used to perform vector normalization in the PowerMethod algorithm. One feasible approach to evaluate InvSRT homomorphically is to use iterative algorithms such as Newton's Method (see Algorithm \ref{alg:Newton}). These methods take the point $x$ for evaluation and an appropriate initial approximation of $\frac{1}{\sqrt{x}}$ as input and output a more accurate result after a finite number of iterations. We will use the term iterative InvSRT algorithm to refer to these methods in subsequent discussions. Providing a closer initial approximation for the iterative InvSRT algorithm can result in better convergence speed. Prior researches propose employing piecewise functions, Taylor expansions, and rational polynomials as functions to compute the initial approximation of the iterative InvSRT \cite{panda2022polynomial},\cite{qu4258571improvements}.

\begin{algorithm}
    \renewcommand{\algorithmicrequire}{\textbf{Input:}}
    \renewcommand{\algorithmicensure}{\textbf{Output:}}
    \caption{Newton's Method}
    \label{alg:Newton}
    \begin{algorithmic}[1]
    \REQUIRE
        $x_0$: target evaluation point of inverse square-root function;
        $y_0$: initial approximation of $\frac{1}{\sqrt{x_0}}$ 
    \ENSURE
        $y_d$: more precise approximation of $\frac{1}{\sqrt{x_0}}$
    \FOR{$i=1$ to $d$}
        \STATE $y_i\leftarrow \frac{1}{2}y_{i-1}(3-x_0y_{i-1}^2)$
    \ENDFOR
    \STATE return $y_d$         
    \end{algorithmic}
\end{algorithm}

\begin{comment}
\begin{theorem} 
\label{thm:GNequiv}
For any number of iterations $k$, algorithm \ref{alg:Goldschmidt} and \ref{alg:Newton} yield the same result.
\end{theorem}
\begin{proof} 
Assuming that for any iteration round less than $k$, we have $h_{i} = 1/2y_{i}$ and $g_{i} = x_{0}y_{i}$ for the algorithms mentioned above. For iteration round $i=1$, the above result holds. For iteration round $i = k$, we have $h_k = h_{k-1} + h_{k-1}(0.5-g_{k-1}h_{k-1}) = 1/2y_{k-1} + 1/2y_{k-1}(0.5 - x_0y_{k-1}\cdot 1/2y_{k-1})=1/2 y_k$, and $g_k = g_{k-1}+g_{k-1}(0.5-g_{k-1}h_{k-1})= x_{0}y_{k-1}+x_{0}y_{k-1}(0.5-1/2x_{0}y_{k-1}^2) = x_0y_{k}$.
\end{proof}    
\end{comment}

\section{Optimized Homomorphic Matrix Multiplication} \label{sec: Improved HMM}

In this section, we improve the homomorphic matrix multiplication proposed by Jinag \textit{et al.} to better serve our covariance matrix computation. Specifically, we first optimize the matrix multiplication with hoisting techniques to reduce its computation complexity (Section \ref{sec: Rotation compplexity}). Then, we analyze the space complexity of the hoisting-optimized matrix multiplication in parallel computation scenarios, during which we observe a problem that the large number of rotation keys required may occupy the memory and limit the performance of parallel computation (Section \ref{sec: parallel}). To address this problem, we introduce techniques to reduce these rotation keys (Section \ref{sec: RtkReduce}).

%We notice that hoisting techniques require a larger space in parallel computation scenarios.  We then introduce techniques to reduce the rotation keys required by the hoisting-optimized matrix multiplication since they may occupy the memory and limit the performance of parallel computation.

% and analyze the space complexity in parallel computation scenarios. We then observe a problem that the large number of rotation keys required by the matrix multiplication with hoisting techniques may occupy the memory and limit the parallel computation performance. To address this problem, we introduce techniques to reduce the rotation keys required by the matrix multiplication in Subsection \ref{sec: RtkReduce}.
%In this section, we improve the homomorphic matrix multiplication proposed by Jinag \textit{et al.} to better serve our covariance matrix computations. Specifically, we first optimize the matrix multiplication with hoisting techniques to reduce its computation complexity (in terms of the number of ciphertext rotations) and space complexity in parallel computation scenarios. We observe a problem that the large rotation key space required by the optimized matrix multiplication may occupy the memory and cause a memory saturation that limits the parallel computation performance. To address this problem, we introduce techniques for reducing the rotation key space for matrix multiplication in Subsection \ref{sec: RtkReduce}.

\subsection{Homomorphic Matrix Multiplication Equipped with Hoisting Techniques} \label{sec: HMMSec1}
%Therefore, in this subsection, we will incorporate these optimization techniques. We will first analyze the rotation complexity and rotation key space requirements of the hoisted matrix multiplication, and then discuss another crucial factor that affects space complexity, the number of intermediate ciphertexts in the context of matrix multiplication Parallelizability. %我们首先会分析 hoisting 技术加持的矩阵乘法的旋转复杂度和旋转密钥空间大小，另一重要的影响空间复杂度的因素：中间值密文数量将在接下来的矩阵乘法并行性中讨论与分析。

% We will first analyze the optimization principles of BSGS (Baby Step Giant Step) in the linear transformations used in Jiang \textit{et al.} (2018). Subsequently, we will evaluate the efficiency improvement and space requirements achieved by applying hoisting and double-hoisting to this matrix multiplication.

\subsubsection{Applying Hoisting Techniques to Linear Transformation Components} \label{sec: Rotation compplexity}

The hoisting techniques can be applied to the following linear transformations involved in the $n\times n$ matrix multiplication: $Z, T, \{C^k, R^k | 1\leq k < n\}$ (recall Equation \ref{eq4}). We initially focus on $Z$ and $T$. $Z$ comprises $2n-1$ non-zero diagonal vectors with indices ranging from $-n+1$ to $n-1$. On the other hand, $T$ consists of only $n$ non-zero diagonal vectors, with indices $k\cdot n$ for $0\leq k < n$. Both $Z$ and $T$ exhibit a well-behaved arithmetic progression in the indices of their non-zero diagonal vectors. This property enables the BSGS algorithm to reduce the number of rotations from $O(n)$ to $O(n_1+n_2)$, where $n_1,n_2$ denotes the inner and outer loop count. The double-hoisting technique can be naturally applied to the BSGS version of $Z$ and $T$, further reducing their rotation complexity.

The transformation $C^k$ has only two non-zero diagonal vectors: $k$ and $k-n$, while $R^k$ has only one non-zero diagonal vector: $n \cdot k$, for $1\leq k<n$. There is no need to apply BSGS optimization on these transformations due to the scarcity of their non-zero diagonals. Nonetheless, we can still enhance their efficiency using the hoisting technique. We observe that the rotation steps in the inner loop of $Z$ form a subsequence of the non-zero diagonal indices in the transformation set $\{C^k|0\leq k <n\}$. Similarly, rotation steps used for the inner loop of $T$ form a subsequence of the non-zero diagonal indices in the set $\{R^k|0\leq k <n\}$. This observation implies that the rotation keys of the double-hoisting $Z$ and $T$ are basically sufficient to perform the transformations $\{C^k, R^k|0\leq k <n\}$ with the hoisting technique (column shifting transformations require an extra rotation step $-n$ to perform rotations of steps $k-n,k=1,2,\dots,n-1$). In particular, we can achieve further optimization by emulating the strategy used in the double-hoisting technique for the transformations $C^k$. Deferring the final \textbf{ModDown} operations in the rotations associated with the two non-zero diagonal vectors, we multiply the two hoisted ciphertexts with the corresponding plaintext diagonal vectors in $R_{Q_lP}$ and aggregate the results. \textbf{ModDown} is then applied to the aggregated value. % 我们观察到对应旋转步数为 $1,2,\dots,n_{Z1}-1$ 的旋转密钥需要被提供给 double-hoisting 版本的 $Z$，他们正好构成了变换集 $\{C^k|0\leq k <n\}$ 的所有非零对角线序号的一个子序列。

We compared the ciphertext rotation complexity of various linear transformations in homomorphic matrix multiplication before and after applying the aforementioned hoisting technique (see Table \ref{tb:BSGS&dhBSGS}). %我们在表 1 中比较了同态矩阵乘法中各种线性变换在采取上述 hoisting 技术优化前后的计算复杂度（in terms of the ciphertext rotation complexity），
We can observe that the number of operations \textbf{Decompose} and \textbf{ModDown} is significantly decreased by the hoisting techniques. However, the minimal number of rotation keys required is increased by the inner loop counts of $Z$ and $T$. The increment of the rotation keys may cause memory limitation in a scenario where multiple matrix multiplication instances are desired to run in parallel. We will discuss this problem in the next subsection.

\subsubsection{Parallelizability and Space Complexity Analysis} \label{sec: parallel}
The matrix multiplication proposed by Jiang \textit{et al.} exhibits parallelizability. Considering Equation \ref{eq4}, we can perform transformation sets $\{C^kZ\mathbf{a}|0\leq k <n\}$ and $\{R^kT\mathbf{b}|0\leq k <n\}$ in parallel with two threads. This idea can be extended to a \textit{one-to-many} scenario where matrix $X$ is multiplied with multiple matrices $\{Y_i|i=1,2,\dots,d\}$. We can precompute and store $\{C^kZ\mathbf{x}|k=0,\dots,n\}$ (or $\{R^kT\mathbf{x}|k=0,\dots,n\}$), and perform the remaining steps of Equation \ref{eq4} in parallel for each $Y_i$ with $d$ threads. This approach saves the time required for $d-1$ computations of $\{C^kZ\mathbf{x}\}$ (or $\{R^kT\mathbf{x}\}$). We can further consider multiple parallel one-to-many computations, which can save more time when ${Y_i|i=1,2,\dots,d}$ also need to be repeatedly involved in calculations with a set of ${X_i|i=1,2,\dots,d}$. %We can further consider multiple parallel one-to-many computations, which 可以节省更多的时间当 $\{Y_i|i=1,2,\dots,d\}$ 也需要重复参与到与某组 $\{X_i|i=1,2,\dots,d\}$ 的计算中。 %考虑公式 \ref{eq4}，我们可以并行地计算 $\{C^kZ\mathbf{a}\}$ and $\{R^kT\mathbf{b}\}$. 我们可以把这种并行计算拓展到 “一对多” 的矩阵乘法场景中去：考虑 matrix $X$ is multiplied with multiple different matrices $\{Y_i|i=1,2,\dots,d\}$，那么我们可以 Precompute and store $\{C^kZ\mathbf{x}|k=0,\dots,n\}$ (or $\{R^kT\mathbf{x}|k=0,\dots,n\}$), and then compute the remaining steps of equation \ref{eq4} in parallel for each $Y_i$. 这样能节省 $d-1$ 次计算变换 $\{C^kZ\mathbf{x}\}$（或 $\{R^kT\mathbf{x}\}$）的时间。

However, we concern that sufficient space for the aforementioned parallel computation of matrix multiplication cannot be easily obtained. The space complexity of matrix multiplication is determined by the rotation key space of $Z$ and $T$, as well as the intermediate ciphertexts cached by all parallel instances of transformations. On one hand, the total number of rotation keys required in matrix multiplication is $n_{Z1} + 2n_{Z2} + n_{T1} + n_{T2} - 3$. On the other hand, the number of intermediate ciphertexts that need to be cached increases with the number of parallel instances of matrix multiplication. For example, $d$ instances of matrix multiplication collectively require caching $n$ regular ciphertexts and $d\cdot n_{T1}$ (or $d\cdot n_{Z1}$) hoisted ciphertexts under the one-to-many scenario. The scalability of increasing parallel instances might be hindered by the large space requirement for rotation keys.

\begin{table}
\renewcommand{\arraystretch}{1.5}
\begin{center}
\resizebox{\textwidth}{!} {
\begin{tabular}{@{}lll@{\hspace{0.2cm}}l@{}} \toprule

\text{Scheme} &  \text{LinTrans} &  \text{MinRotKeys} & \text{Complexity(Rotations)}\\ \midrule
$\text{Origin}$ & $Z$ & $2n_{2}+1$ &  $(n_{1}+2n_{2})\cdot(MS+Pm+Dp+MD)$\\ 
$\text{Optimizied}$ & $Z$ & $n_{1}+2n_{2}-2$ & \makecell[l]{ $(n_{1}+2n_{2})\cdot(MS+Pm)+$ $(2n_{2}+1)\cdot(Dp+MD)$} \\ 
$\text{Origin}$ & $T$ & $n_{2}+1$ &  $(n_{1}+n_{2})\cdot(MS+Pm+Dp+MD)$\\ 
$\text{Optimizied}$ & $T$ & $n_{1}+n_{2}-2$ &  \makecell[l]{$(n_{1}+n_{2})\cdot(MS+Pm)+$ $(n_{2}+1)\cdot(Dp+MD)$}\\ 
$\text{Origin}$ & $\{C^k|0\leq k <n\}$ & $2$ &  $(2n-1)\cdot (MS+Pm+Dp+MD)$\\ 
$\text{Optimized}$ & $\{C^k|0\leq k <n\}$ & $n_{Z1}+1$ & \makecell[l]{$(2n-1)\cdot (MS+Pm) + $$ (2n/n_{Z1}+2)\cdot Dp + (2n/n_{Z1}+n)MD$} \\ 
$\text{Origin}$ & $\{R^k|0\leq k <n\}$ & $1$ &  $(n-1)\cdot (MS+Pm+Dp+MD)$\\ 
$\text{Optimized}$ & $\{R^k|0\leq k <n\}$ & $n_{T1}$ & \makecell[l]{$(n-1)\cdot (MS+Pm)$ $+ (n/n_{T1}+1)\cdot Dp+ (n-1)\cdot MD $} \\ 
\bottomrule
\end{tabular}
}
\caption{Rotation Complexity and Minimal Rotation Keys requirement Comparison between BSGS and dh-BSGS scheme for Z, T linTrans. $n_1$ and $n_2$ denote the inner loop count and the outer loop count in the BSGS algorithm. The complexity of the linear transformations is indicated by the ciphertext rotations performed. More specifically, these rotations are decomposed into their four internal operations: \textbf{MultSum}$(MS)$, \textbf{Permute}$(Pm)$, \textbf{Decompose}$(Dp)$, and \textbf{ModDown}$(MD)$.} \label{tb:BSGS&dhBSGS}
\end{center}
\end{table}

\subsection{Rotation Key Reduction for Homomorphic Matrix Multiplication} \label{sec: RtkReduce}
We aim to minimize the number of rotation keys for homomorphic matrix multiplication, thereby freeing up space for the parallel computation of more instances. We introduce two types of rotation key reduction techniques in the following subsection. The first technique is a simple key substitution method. For an arithmetic sequence of rotations applied on a ciphertext, the simple key substitution constructs this sequence of rotation by repeatedly performing rotation with one of its subsequences, resulting in a reduction in the number of required rotation keys. However, this approach compromises the benefits brought by the hoisting technique, leading to a decrease in the overall speed of matrix multiplication. The second technique involves applying a decomposition called diagonal convergence decomposition to the transformations $Z$ and $T$. The number of non-zero diagonal vectors in the decomposed transformations is significantly reduced, resulting in a notable reduction in the number of keys while preserving the integrity of hoisting.

\subsubsection{Simple Key Substitution Method}

Recall that we only utilize the rotation keys from the inner loop of transformation $T$ to perform all ciphertext rotations of the row transformations (see Section \ref{sec: Rotation compplexity}). This approach is an instance of the simple key substitution method which can be described as follows: when applying a sequence of rotations to a ciphertext, if the rotation steps form an arithmetic progression starting from 0, then any subsequence starting from 0 of that arithmetic progression can be used to perform the rotations. 
However, using shorter subsequence results in fewer benefits from the hoisting technique, as the hoisting technique cannot be fully deployed across the entire sequence but only within individual subsequences. Employing this method to reduce the rotation keys required for the inner loop of $Z$ (or $T$) from $n_1$ to $n_1'$ leads to an additional cost of $n_1/n_1'$ \textbf{Decompose} and \textbf{ModDown} operations and a subsequent effect on the row (or column) transformations. Moreover, the simple key substitution method has marginal effectiveness when the inner loop count of the linear transformation is relatively small. This is because fewer iterations in the inner loop imply an increase in the number of outer loop iterations and the corresponding required keys.

\subsubsection{Decomposition of Diagonal Vector Convergence for Specific Permutation Matrix}
We present the second method for reducing the required rotation keys in matrix multiplication to overcome the limitations of the simple key substitution. Recall that the number of rotation keys required to perform a homomorphic linear transformation is essentially determined by the number of non-zero diagonal vectors in that linear transformation. This implies that if we can decompose $Z$ and $T$ into transformation with fewer non-zero diagonal vectors, we can reduce the number of rotation keys. Henceforth, we first establish a concept called diagonal convergence decomposition in definition \ref{def:2} to describe a decomposition method that reduces the number of non-zero diagonal vectors in linear transformations. This decomposition represents the original matrix as a sum and product of several matrices, such that the combined set of these decomposed matrices has fewer distinct non-zero diagonal indices compared to the original matrix. To better focus on the diagonals, we also define the diagonal coordinate of a matrix unit and the rules converting it back to the normal coordinate that locates units using the row and column index in Definitions \ref{def:1},\ref{def:3}.   %Thereby, 我们需要首先建立起一个概念来描述一种可以减少线性变换中的非零对角线数量的分解方法。在定义\ref{def:2}，我们定义了一种称为对角线收敛分解的方阵分解方法，它将原方阵分解成若干方阵的和与积的形式，分解后的这些方阵一共拥有的不同的非零对角线序号少于原方阵拥有的序号。由于我们在设计分解的时候主要关注非零对角线的数量，因此除了用行与列定位一个矩阵单位，我们还需要用对角线和行来定位一个矩阵单位，我们在定义 \ref{def:2,def:3} 中系统的给出了这两种定位的定义以及他们之间相互转换的方法。在我们的早期研究中我们发现，要构建一个对所有方阵都适用的对角线收敛分解方法并不是一件简单的事情，因此我们接下来只关注对 $Z$ 和 $T$ 设计对角线收敛分解。

Constructing a diagonal convergence decomposition method that applies to all matrices might not be a straightforward task. Therefore, we will specifically focus on designing the diagonal convergence decomposition for $Z$ and $T$. Both $Z$ and $T$ are permutation matrices, which are Boolean matrices with the properties that each row and each column contain exactly one 1 value. We observe that when decomposing a Boolean matrix $A$ of size $n\times n$ into a product of two Boolean matrices $A_2A_1$, for any non-zero diagonal $k$ in $A$ and its $i$-th element $A_{D(k,i)}$, if $A_{D(k,i)}$ is 1, there always exist a diagonal $k_1$ in $A_1$ and a diagonal $k_2$ in $A_2$ such that the elements in the $k_1$-th diagonal of $A_1$ are 1 in the same column as $A_{D(k,i)}$, the elements in the $k_2$-th diagonal of $A_2$ are 1 in the same row as $A_{D(k,i)}$, and $k = k_1+k_2 \mod{n}$ (see Theorem \ref{thm:2}). In Definition \ref{def:4}, we utilize Theorem \ref{thm:2} to construct a method for decomposing a single diagonal in $A$ into diagonals in $A_1$ and $A_2$.  %$Z$ 和 $T$ 都是置换矩阵，这种矩阵首先是布尔方阵，其次它们的每一行每一列都有且仅有一个 1 值。我们发现，在将一个 $n\times n$ 布尔矩阵 $A$ 分解成两个布尔矩阵乘积 $A_2A_1$ 形式的时候，对于 $A$ 中任意非零对角线 $k$ 的第 $i$ 个元素 $A_{D(k,i)}$，如果它是 1 值，那么总是存在 $A_1$ 上的对角线 $k_1$ 和 $A_2$ 上的对角线 $k_2$，使得 $k_1$ 对角线中和 $A_{D(k,i)}$ 同列的元素为 1，以及 $k_2$ 对角线中和 $A_{D(k,i)}$ 同行的元素为 1，且 $k = k_1+k_2 \mod{n}$（见定理 \ref{thm:2}）。继续考虑分解形式 $A=A_2A_1$，我们利用定理 \ref{thm:2} 在定义 4 中构造了一种针对将 $A$ 中单条对角线分解到 $A_1$ 和 $A_2$ 中去的方法。接下来我们将按照该分解方法对 $Z$ 和 $T$ 分别设计对角线收敛分解方法。 
\begin{definition}\label{def:2}
We refer to a matrix decomposition as a \textbf{diagonal convergence decomposition} when it expresses a square matrix as a product and sum of a series of matrices, and the total number of distinct diagonal vector indices in these matrices is smaller than the number of distinct diagonal vector indices in the original matrix.
\end{definition}

\begin{definition}\label{def:1}
For any $n\times n$ square matrix, we represent its column indices, row indices, and diagonal vector indices using the additive group $(\mathbb{Z}/n\mathbb{Z},+)$, where the indices of the diagonal vectors are defined based on the column indices of the elements in the first row of the matrix.
\end{definition}

\begin{definition}\label{def:3}
For an $n\times n$ square matrix, the element located at row $i$ and column $j$ possesses a \textit{normal coordinate} $N(i,j)\in (\mathbb{Z}/n\mathbb{Z})^2$, and a \textit{diagonal coordinate} $D(k,l)\in (\mathbb{Z}/n\mathbb{Z})^2$. Thus, there exists a mapping $f:(\mathbb{Z}/n\mathbb{Z})^2 \rightarrow (\mathbb{Z}/n\mathbb{Z})^2$ that transforms a normal coordinate to a diagonal coordinate:
    \begin{align}
    \begin{split}
        f(N(i,j))=D(j-i,i),\\
        f^{-1}(D(k,l))=N(l,k+l).
    \end{split}
    \end{align}
We denote the symbol $A_{N(i,j)}$ to represent the element of $A$ located at the normal coordinate $N(i,j)$, and $A_{D(k,l)}$ to represent the element of $A$ located at the diagonal coordinate $D(k,l)$.
\end{definition}

\begin{theorem}\label{thm:2}
Let $A$ be a boolean matrix that can be decomposed into $A = A_2A_1$, where $A_1$ and $A_2$ are both boolean matrices. For any 1 element $A_{N(i,j)}$ in $A$, there exists a unique index $0\leq a<n$ in $A_1$ and $A_2$ such that $A_{1N(a,j)}$ and $A_{2N(i,a)}$ are both 1. Furthermore, considering $D(k,i)=f(N(i,j))$, $D(k_1,a)=f(N(a,j))$, and $D(k_2,i)=f(N(i,a))$, we have $k_1+k_2 = k$.
\end{theorem}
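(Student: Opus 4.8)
The plan is to prove Theorem~\ref{thm:2} in two stages: first I would pin down the index $a$ from the entrywise form of the matrix product $A = A_2 A_1$, and then read off the diagonal-index identity $k_1 + k_2 = k$ directly from the coordinate map $f$ of Definition~\ref{def:3}.

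For the first stage I interpret the hypothesis $A = A_2 A_1$ as an ordinary integer matrix product; since $A$ is boolean this forces the product to be collision-free. Fixing a unit $A_{N(i,j)} = 1$ and expanding entrywise gives
\[
1 \;=\; A_{N(i,j)} \;=\; \sum_{a=0}^{n-1} (A_2)_{N(i,a)}\,(A_1)_{N(a,j)} .
\]
Because $A_1$ and $A_2$ are boolean, each summand $(A_2)_{N(i,a)}(A_1)_{N(a,j)}$ lies in $\{0,1\}$, so a sum of nonnegative integers equal to $1$ forces exactly one summand to equal $1$ and all others to vanish. That unique index $a \in \{0,\dots,n-1\}$ is the claimed one, and it satisfies $(A_2)_{N(i,a)} = 1$ and $(A_1)_{N(a,j)} = 1$ simultaneously.

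For the second stage I would simply apply $f$ from Definition~\ref{def:3}. By definition $D(k,i) = f(N(i,j)) = D(j - i,\, i)$, hence $k \equiv j - i$; likewise $D(k_1,a) = f(N(a,j)) = D(j - a,\, a)$ gives $k_1 \equiv j - a$, and $D(k_2,i) = f(N(i,a)) = D(a - i,\, i)$ gives $k_2 \equiv a - i$, all computed in $(\mathbb{Z}/n\mathbb{Z},+)$. Adding the last two identities telescopes away $a$: $k_1 + k_2 \equiv (j - a) + (a - i) = j - i \equiv k$, which is exactly the assertion.

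The argument is essentially mechanical, so there is no deep obstacle; the only point that needs care is the reading of "$A = A_2 A_1$" as an integer (equivalently, collision-free boolean) product, since this is precisely what upgrades the existence of $a$ to uniqueness, together with the routine bookkeeping of additive arithmetic modulo $n$ in the normal-to-diagonal coordinate conversions. In the intended application $A_1$ and $A_2$ are themselves permutation matrices, so the collision-free condition holds automatically and the uniqueness of $a$ is immediate.
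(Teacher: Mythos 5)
Your proposal is correct and follows essentially the same route as the paper's proof: expand the product entrywise, use the boolean (integer-sum) interpretation to force a unique pair $(A_{1N(a,j)},A_{2N(i,a)})=(1,1)$, and then compute $k_1+k_2=(j-a)+(a-i)=k$ via Definition~\ref{def:3}. Your explicit remark that the sum equaling $1$ as an integer is what yields uniqueness is a slightly more careful reading of the same argument, not a different approach.
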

\begin{proof}
    Given $A=A_2A_1$, we have $A_{N(i,j)} = \sum_{a=0}^{n-1} A_{1N(a,j)}\cdot A_{2N(i,a)}$. Since $A_1$ and $A_2$ are Boolean matrices, the equation $A_{N(i,j)}=1$ holds only when there exists a unique pair $(A_{1N(a,j)},A_{2N(i,a)})$ that equals $(1,1)$. According to Definition \ref{def:3}, we have $k_1+k_2 = (j-a)+(a-i) = k$.
\end{proof}

\begin{definition}\label{def:4}
For a boolean matrix $A$, $A_1$, and $A_2$ of size $n\times n$, decomposing the $k$th diagonal of $A$ onto the $k_1$th diagonal of the right matrix $A_1$, and the $k_2$th diagonal of the left matrix $A_2$ refers to the following process: For each 1 element $A_{D(k,l)}$ in the $k$th diagonal, set $A_{1D(k_1,k+l-k_1\mod{n})}$ to 1, which corresponds to the element on the $k_1$th diagonal with the same column index as $k$. Also, set $A_{2D(k_2,l)}$ to 1, which corresponds to the element on the $k_2$th diagonal with the same row index as $k$. Here, $k_1 + k_2 = k$.
\end{definition}

Before we proceed with the diagonal convergence decomposition of matrices $Z$ and $T$, let us establish some common ground. Let us assume that the discussions henceforth are based on the $n\times n$ matrix multiplication scenario, meaning matrices $Z$ and $T$ are always of size $n^2\times n^2$. Moreover, we define that if the maximum non-zero diagonal index of a matrix is $n'\leq 0$, it is equivalent to stating that all non-zero diagonal vectors of the resulting matrices belong to the set $[-n',n']\cap \mathbb{Z}$. \\

%\noindent\textbf{Decomposition of Linear Transformation $Z$} 
\paragraph{Decomposition of Linear Transformation $Z$}
The value of the $\ell$-th component in the $k$-th non-zero diagonal vector $\mathbf{z}_k[\ell]$ can be expressed as (all operations below involving non-zero diagonal vector indices are assumed to be in the residual class $(-n^2/2,n^2/2]$ of $(\mathbb{Z}/n^2\mathbb{Z},+)$):
\begin{align}\label{eq5}
    \begin{split}
          \mathbf{z}_k[\ell]=\begin{cases}1, &\text{if $k\geq 0$ }\text{ and $0\leq \ell-n\cdot k<(n-k)$}; \\ 1, & \text{if $k<0$ and $-k\leq \ell-(n+k)\cdot n<n;$} \\ 0,& \text{otherwise}.\end{cases}  
    \end{split}
\end{align}
We observe that the 1 values of $Z$ are symmetrically distributed on both sides of the 0th diagonal vector. They can be divided into $n$ submatrices of size $n\times n$, each having the 0th diagonal vector of $Z$ as its own 0th diagonal vector. These submatrices do not overlap with each other. Each submatrix contains $n$ elements with a value of 1. If we denote $S_i$, $0\leq i <n$, as the $n$ submatrices of diagonals of $Z$, then the 1 values in $S_i$ always fill the diagonals $i$ and $i-n$. Considering these characteristic, we design the diagonal convergence decomposition of $Z$ as follow: % have the following characteristic: $|[k_1]{n^2}|\leq \lceil\frac{n-1}{2}\rceil$ and $|[k_2]{n^2}|>\lceil\frac{n-1}{2}\rceil$.
\begin{enumerate}
    \item Set the expected maximum non-zero diagonal index as $n'$. Divide $Z$ into the format $Z = Z_1+Z_2$, where $Z_1$ is the matrix containing all non-zero diagonal vectors of $Z$ with an index no smaller than 0 and $Z_2$ contains the rest of the non-zero diagonal vectors. This step ensures that each submatrix of $Z_1$ and $Z_2$ has at most one non-zero diagonal vector, and each unit with value 1 belongs to a submatrix.
    \item For both $Z_1$ and $Z_2$, perform the following decomposition (assuming $Z_1$ or $Z_2$ is denoted as $Z'$):
        \begin{enumerate}
            \item For each non-zero diagonal with index $k$ in $Z'$, If the absolute value of the index is greater than $n'$, decompose it onto the $k-\text{sign}(k)\cdot n'$ diagonal of the left matrix $Z'_l$ and the $\text{sign}(k)\cdot n'$ diagonal of the right matrix $Z'_r$. If the absolute value of the index is no bigger than $n'$, decompose it onto the $n'$ diagonal of the $Z'_l$ and the $0$ diagonal of the $Z'_r$. %For each non-zero diagonal with index $k$ in $Z'$, 如果它的序号绝对值大于 $n'$，那么将其分解到左矩阵 $Z'_l$ 的第 $k-\text{sign}(k)\cdot n'$ 对角线，以及 $Z'_r$ 右矩阵的第 $\text{sign}(k)\cdot n'$ 对角线上。如果它的序号绝对值小于等于 $n'$ ，那么将其分解到 $Z'_l$ 的第 $n'$ 对角线，以及$Z'_r$ 的第 $0$ 对角线上。
            \item After iterating over all non-zero diagonals, if $Z'_l$ still contains non-zero diagonals greater than $n'$, replace $Z'$ with $Z'_l$ and repeat the decomposition. If no such diagonals exist, return the product of $Z'_l$ and all the right matrices obtained from the decomposition as the final decomposition result of $Z'$. %如果 $Z'_l$ 中仍存在大于 $n'$ 的非零对角线，那么用它替代 $Z'$，重复上述分解操作，如果不存在，则返回当前的 $Z'_l$ 与所有分解得到的右矩阵的乘积作为 $Z'$ 的分解结果。
        \end{enumerate}
Finally, we are able to decompose both $Z_1$ and $Z_2$ into the product of $\lceil\frac{n-1}{n'}\rceil$ matrix, respectively, as follows:
        \begin{align}
        Z = \prod_{i=1}^{\lceil\frac{n-1}{n'}\rceil} Z_{1i}+\prod_{i=1}^{\lceil\frac{n-1}{n'}\rceil} Z_{2i}.
        \end{align}
\end{enumerate}
The decomposed expressions of both products have the same characteristic: only the leftmost matrix has non-zero diagonal vectors with indices in the interval $[-n'+1,n'-1]\cap \mathbb{Z}$, while the remaining matrices have only non-zero diagonal vectors with indices $-n',0$ (or $n',0$). Therefore, the dominant complexity lies in the two leftmost linear transformations of the two matrix products. These two linear transformations still possess the same arithmetic progression property as $Z$, thus making them suitable for benefiting from the double-hoisting BSGS algorithm.

% FIXME:----------------------------------------- not complete statement for $Z$ ----------------------------

%todo: figure lack%

\begin{algorithm}
    \renewcommand{\algorithmicrequire}{\textbf{Input:}}
    \renewcommand{\algorithmicensure}{\textbf{Output:}}
    \caption{Diagonal Convergence Decompose for LinTrans Z (DCDforZ)}
    \label{alg5}
    \begin{algorithmic}[1]
    \REQUIRE \; \\
        $n$: dimension parameter to create a $Z$ with size $n^2\times n^2$;   \\
        $n'$: the target maximum non-zero diagonal index.
    \ENSURE
        $\{Z_{1i},Z_{2i}|Z = \prod_{i=1}^{\lceil\frac{n-1}{n'}\rceil} Z_i\}$: decomposed matrices. 
    \STATE Assume all indices of diagonal vectors are in residual class $(-n^2/2,n^2/2]$.
    \STATE Set $Z_1$ as the matrix containing all non-zero diagonal vectors of $Z$ with an index no smaller than $0$, and set $Z_2$ as the matrix containing the rest of the non-zero diagonal vectors of $Z$. 
    \FOR{t=1,2}
            \FOR{$i=1$ to $\lceil\frac{n-1}{n'}\rceil-1$}
                \STATE Initialize matrices $Z_{t}^{(l)},Z_{t}^{(r)}$ as zero matrices.
                \FOR{each non-zero diagonal vector with index $|k| \geq n'$ in $Z_t$} 
                    \STATE Decompose the vector to the $n'$-th non-zero diagonal vector of the right matrix $Z_{t}^{(r)}$ and the $(k-\text{sign}(k)\cdot n')$-th diagonal of the left matrix $Z_{t}^{(l)}$.
                \ENDFOR
                \FOR{each non-zero diagonal vector $k<n'$ in $Z_t$}
                    \STATE Decompose the vector to the $0$-th non-zero diagonal vector of the right matrix $Z_{t}^{(r)}$ and the $k$-th diagonal of the left matrix $Z_{t}^{(l)}$.
                \ENDFOR
                \STATE (At this point, $Z_t = Z_{t}^{(l)}Z_{t}^{(r)}$)    
                \IF{$i\not = \lceil\frac{n-1}{n'}\rceil-1$}
                    \STATE $Z_{t(\lceil\frac{n-1}{n'}\rceil-i+1)} \leftarrow Z_{t}^{(r)}$ , $Z_t\leftarrow Z_{t}^{(l)}$ 
                \ELSE
                    \STATE  $Z_{t(\lceil\frac{n-1}{n'}\rceil-i+1)} \leftarrow Z_{t}^{(r)}$ , $Z_{t1}\leftarrow Z_{t}^{(l)}$ 
                \ENDIF
            \ENDFOR
    \ENDFOR
    \RETURN $\{Z_{1i},Z_{2i}|Z = \prod_{i=1}^{\lceil\frac{n-1}{n'}\rceil} Z_i\}$    
    \end{algorithmic}
\end{algorithm}

%\noindent \textbf{Decomposition of Linear Transformation $T$} 
\paragraph{Decomposition of Linear Transformation $T$}

Transformation $T$ consists of $n$ non-zero diagonal vectors, whose indices are $\{k\cdot n|0\leq k < n\}$. For the $k\cdot n$-th non-zero diagonal vector $\mathbf{t}_{kn}$ of $T$, the value of its $\ell$-th component $\mathbf{t}_{kn}[\ell]$ can be expressed as (all operation below involving diagonal vector indices are assumed to be in the residual class $[0,n^2-1]$ of $(\mathbb{Z}/n^2\mathbb{Z},+))$):
\begin{align}\label{eq6}
    \begin{split}
        \mathbf{t}_{kn}[\ell] = \begin{cases}1, &\text{if $\ell \in \{k+n\cdot i|0\leq i < n\}$ };  \\ 0,& \text{otherwise}.\end{cases}
    \end{split}
\end{align}
% FIXME: ------------------------------------- Incomplete statement for T---------------------------------
Although the units with a value of 1 in $T$ can not be divided into submatrices like $Z$, the non-zero diagonals of $T$ have a consistent spacing between the units with a value of 1. Specifically, we observe that the row indices of units with a value of 1 in the same non-zero diagonal vector are congruent modulo $n$. This implies that the row sequence of 1-valued units on the $kn$-th non-zero diagonal of $T$ forms a coset $k+\left<n\right>$ of an $n$-order subgroup $\left<n\right>$ in $\mathbb{Z}/n^2\mathbb{Z}$. According to Definition \ref{def:3}, their column sequences also form the same coset. If we attempt to decompose any diagonal $kn$ in $T$ onto a diagonal of index divisible by $n$ in the right matrix $T_1$ and the left matrix $T_2$, then the 1-valued units on these two diagonals in the left and right matrices will inherit the property of forming the coset $k+\left<n\right>$. Furthermore, if any two diagonals $kn$ and $k'n$ with the coset properties $i+\left<n\right>$ and $j+\left<n\right>$, respectively, are decomposed onto the $k_1n$ diagonal of the right matrix (or left matrix), then the 1-valued units in that diagonal will form a set $(i+\left<n\right>)\cup j+\left<n\right>)$. Based on these observations, we can repeatedly decompose the diagonals in $T$ that have indices greater than the expected maximum non-zero diagonal index $n \cdot n'$ into diagonals $k-n \cdot n'$ and $n \cdot n'$, ultimately decomposing $T$ into the product of ${\lceil \frac{n-1}{n'} \rceil}$ matrices:
\begin{align}
T = \prod_{i=1}^{\lceil \frac{n-1}{n'} \rceil} T_{i}.
\end{align}
Similar to the decomposition of Z, Only the leftmost matrix in the product contains non-zero diagonals with indices in $\{kn|k=0,1,\dots,n'\}$, while all other right matrices have only non-zero diagonals with indices $n'n$ and $0$. This implies that the complexity of the product is dominated by the complexity of the leftmost linear transformation, whose diagonal indices belong to a smaller range compared to the original approach. The detailed decomposition process is provided in Algorithm \ref{alg:DCD4T}.
%Although the units with a value of 1 in $T$ can not be divided into submatrices like $Z$, the non-zero diagonals of $T$ have a consistent spacing between the units with a value of 1. Specifically, we observe that the row indices of units with a value of 1 in the same non-zero diagonal vector are equivalent to each other 模 $n$。 这暗示了 $T$ 的第 $kn$ 非零对角线中所有 1 值单元的行序列构成了 $\mathbb{Z}/n^2\mathbb{Z}$ 中 $n$ 阶子群 $\left<n\right>$的一个陪集 $k+\left<n\right> = \{k+i\cdot n | i\in \mathbb{Z}/n^2\mathbb{Z}\}$. 根据定义 \ref{def:3}，可知他们的列序列也构成了相同的陪集。如果我们尝试将 $T$ 中任意对角线 $kn$ 分解到右矩阵 $T_1$ 和 左矩阵 $T_2$ 序号为 $n$ 的倍数的对角线上，那么左右矩阵上的这两条对角线上的所有 1 值单元也将继承构成陪集 $k+\left<n\right>$ 的性质。%Furthermore, 如果任意两个具有陪集性质 $i+\left<n\right>,j+\left<n\right>$ 的对角线 $kn$ 和 $k'n$ 都被分解到了右矩阵（或左矩阵）的 $k_1n$ 对角线上时，该对角线将具有陪集性质 $i+j+\left<n\right>$。 基于上述观察，我们可以像分解 $Z$ 那样，将 $T$ 中大于期望的最大非零对角线序号 $n\cdot n'$ 的那些对角线 $k$ 不断地分解 $k-n\cdot n'$ 和 $n\cdot n'$ 上，最终将 $T$ 分解成若干矩阵的乘积的形式。详细的分解过程在算法 \ref{alg:DCD4T} 中给出。
\begin{algorithm}
    \renewcommand{\algorithmicrequire}{\textbf{Input:}}
    \renewcommand{\algorithmicensure}{\textbf{Output:}}
    \caption{Diagonal Convergence Decompose for LinTrans T(DCDforT)}
    \label{alg:DCD4T}
    \begin{algorithmic}[1]
    \REQUIRE \; \\
        $n$: the dimension parameter for creating a $T$ with size $n^2\times n^2$;\\
        $n'\cdot n$: the target maximum non-zero diagonal index.
    \ENSURE 
        $\{T_i| T= \prod_{i=1}^{\lceil \frac{n-1}{n'} \rceil} T_i\}$ decomposed matrices of linear transformation $T$. 
    \STATE Assume that all indices of diagonal vectors are in residual class $[0,n^2-1]$
    \FOR{$i=1$ to $\lceil\frac{n-1}{n'}\rceil-1$}
        \STATE Initialize $T^{(0)}$ and $T^{(1)}$ as zero matrices.
        \STATE Traverse the $i\cdot n$-th non-zero diagonal of $T$, where $0\leq i \leq n'$, and decompose it onto the 0th diagonal of the right matrix $T^{(r)}$ and the $i\cdot n$-th diagonal of the left matrix $T^{(l)}$.
        \STATE Traverse the $i\cdot n$-th non-zero diagonal of $T$, where $n' < i \leq (n-1)$, and decompose it onto the $n'\cdot n$-th diagonal of the right matrix $T^{(r)}$ and the $(i-n')\cdot n$-th diagonal of the left matrix $T^{(l)}$.
        \STATE (At this point,$T = T^{(l)}T^{(r)}$)
        \IF{$i\not = \lceil\frac{n-1}{n'}\rceil-1$}
            \STATE $T_{\lceil \frac{n-1}{n'} \rceil-i+1} \leftarrow T^{(r)}$, $T\leftarrow T^{(l)}$ 
        \ELSE
            \STATE  $T_{\lceil \frac{n-1}{n'} \rceil-i+1} \leftarrow T^{(r)}$, $T_{1}\leftarrow T^{(l)}$
        \ENDIF
    \ENDFOR
    \STATE return $\{T_i| T= \prod_{i=1}^{\lceil \frac{n-1}{n'} \rceil} T_i\}$   
    \end{algorithmic}
\end{algorithm}

After formulating the diagonal convergence decomposition for $Z$ and $T$, we proceed to analyze its impact on the number of rotation keys and rotation complexity. Let $n_1$ and $n_2$ continue to represent the inner loop and outer loop counts of the original double-hoisting BSGS applied to $Z$ (or $T$). Since both the diagonal convergence decompositions for $Z$ and $T$ are able to reduce the non-zero diagonal vectors from $n-1$ to $n'$, for a fixed $n_1$, the decomposition shrinks the number of keys required for the outer loop (of $Z$ and $T$) with a factor of $\frac{n_2'}{n_2-1}$, where $n_2' = n'/n_1$. As for the rotation complexity, the primary complexity for $Z$ and $T$ after decomposition arises from the leftmost matrices in the matrix products, which require $(n_1+n_2')(MS+Pm) + (n_2'+1)(Dp+MD)$. The extra $(n/n'-1)\cdot(MS+Pm+Dp+MD)$ is attributed to the $n/n'-1$ matrices on the right side. We compare the number of rotation keys and computational complexity (in terms of internal rotation operations) among the diagonal convergence decomposition, simple key substitution method, and the dh-BSGS algorithm without any key simplification strategy for the $Z$ and $T$ transformations, as shown in Table \ref{tb:ZTRotCplx}. More detailed experimental data obtained from specific configurations will be discussed in the implementation section. Additionally, we present the complete process of ciphertext decomposition matrix multiplication with diagonal convergence decomposition in Algorithm \ref{alg7}, omitting internal optimization details such as double-hoisting and hoisting, and providing a high-level illustrative flow. %After completing the design of diagonal convergence decomposition for $Z$ and $T$, we then analyze its impact on the number of rotation keys and the rotation complexity. 设 $n_1$ 和 $n_2$ 是应用在 $Z$ （或 $T$）上的原始 BSGS 算法的内层循环和外层循环数。Since both the diagonal convergence decompositions for $Z$ and $T$ can shrink the non-zero diagonal vectors from $n-1$ to $n'$， thus for a fixed inner loop count $n_1$, the decomposition 将 $Z$ 和 $T$ 中外层循环所需的密钥数量缩小至原来的 $\frac{n_2'}{n_2-1}$ 其中 $n_2' = n'/n_1$。 至于旋转复杂度，$Z$ 和 $T$ 在经过分解后，其主要的复杂度均来自于矩阵乘积式中的最左侧矩阵，它们都需要大致计算复杂度 $(n_1+n_2')(MS+Pm) + (n_2'+1)(Dp+MD)$，而剩下的 $(n/n'-1)\cdot(MS+Pm+Dp+MD)$ 则是由右侧的 $n/n'-1$ 个矩阵所贡献的。我们 compared the number of rotation keys and computational complexity (in terms of internal rotation operations) of the diagonal convergence decomposition, simple key substitution method, and the dh-BSGS algorithm without any key simplification strategy for the $Z$ and $T$ transformations in Table \ref{tb:ZTRotCplx}. More detailed experimental data produced by concrete settings will be discussed in the implementation section. We also provide the complete process of ciphertext decomposition matrix multiplication with diagonal convergence decomposition in Algorithm \ref{alg7}, omitting the internal optimization details such as double-hoisting and hoisting, and presenting a high-level illustrative flow.
%表中的 $n_1$ 和 $n_2$ 分别代表原始 BSGS 线性变换的内层和外层循环数。Simple key substitution 的复杂度我们已在上文分析过。由于 $Z$ 和 $T$ 在经过 diagonal convergence decomposition 后，其主要的复杂度均来自于矩阵乘积式中的最左侧矩阵，它们都需要大致计算复杂度 $(n_1+n_2')(MS+Pm) + (n_2'+1)(Dp+MD)$，而剩下的 $(n/n'-1)\cdot(MS+Pm+Dp+MD)$ 则是由右侧的 $n/n'-1$ 个矩阵所贡献的。可以看出，当 $n_1$ 相对 $n_2$ 较小时， diagonal convergence decomposition 对空间和计算复杂度都有更好的优化效果。detailed experimental data produced by concrete settings will be discussed in the implementation section. 

% 2* ( (n_1+n_2')(MS+Pm) + (n_2'+1)(Dp+MD) + n/(n_1*n_2')(MS+Pm+Dp+MD) )

\begin{table}
\renewcommand{\arraystretch}{1.5}
\resizebox{\textwidth}{!}{
\begin{tabular}{@{}lllll@{}} \toprule
\textbf{Scheme} &\textbf{LT}& \textbf{MaxDiagNo.} & \textbf{MinRotKeys} & \textbf{Complexity(Rotations)} \\
\midrule
$\text{dh-BSGS}$ & $Z$ & $n-1$ & $n_{1}+2n_{2}-2$ & \makecell[l]{$(n_{1}+2n_{2})\cdot(MS+Pm)+$ $(2n_{2}+1)\cdot(Dp+MD)$}\\

$\text{SmpKeySub}$ & $Z$ & $n-1$ & $n_1'+2n_2-2$ &  \makecell[l]{$(n_{1}+2n_{2})\cdot(MS+Pm)+$ $(2n_{2}+n_1/n_1'+1)\cdot(Dp+MD)$} \\ 

$\text{DCDmp}$ & $Z$ & $n'=n_1\cdot n_2'$  & $n_1+ 2n_2'-1$ & \makecell[l]{$2(n_1+n_2'+n/n'-1)\cdot (MS+Pm) + 2(n_2'+n/n')\cdot (Dp+MD)$}\\  

$\text{dh-BSGS}$ & $T$ & $(n-1)\cdot n$ & $n_1+n_2-2$ & \makecell[l]{$(n_1+n_2)\cdot(MS+Pm)+$$(n_2+1)\cdot (Dp+MD)$}    \\

$\text{SmpKeySub}$ & $T$ & $(n-1)\cdot n$ & $n_1'+n_2-2$ & \makecell[l]{$(n_1+n_2)\cdot (MS+Pm)+$$(n_2+n_1/n_1'+1)\cdot (Dp+MD)$} \\

$\text{DCDmp}$ & $T$ & $n'\cdot n=n_1\cdot n_2' \cdot n$ & $n_1+n_2'-1$ & \makecell[l]{$(n_1+n_2'+n/n'-1)\cdot (MS+Pm) + (n_2'+n/n')\cdot (Dp+MD)$}\\
\bottomrule
\end{tabular}
}
\caption{Rotation Complexity and Minimal Rtks Number Comparison of LinTrans $Z,T$ among original double-hoisting BSGS(dh-BSGS), dh-BSGS with Simple Key Substitution(SmpKeySub) and dh-BSGS with Diagonal Vector Convergence Decomposition(DCDmp). \textbf{MaxDiagNo.} denotes the maximum non-zero diagonal index in the lineartransformation(s), and \textbf{MinRotKeys} indicates the minimum number of keys required to perform the lineartransformation(s).} \label{tb:ZTRotCplx}
\end{table}

\begin{algorithm}[h]
    \renewcommand{\algorithmicrequire}{\textbf{Input:}}
    \renewcommand{\algorithmicensure}{\textbf{Output:}}
    \caption{Matrix Multiplication using Diagonal Convergence Decomposition (DMatrixMult)}
    \label{alg7}
    \begin{algorithmic}[1]
    \REQUIRE \; \\
        $\text{ct}(A),\text{ct}(B)$: ciphertexts encrypting matrices $A$ and $B$ with size $n\times n$; \\
        $n'_Z$: expected maximum non-zero diagonal index for $Z$; \\
        $n\cdot n_T'$: expected maximum non-zero diagonal index for $T$.
    \ENSURE ;\ \\
        $\text{ct}(AB)$: encrypted multiplication result between matrices $A$ and $B$ 
        \STATE $\{Z_{1i}\},\{Z_{2i}\}\leftarrow\text{DCDforZ}(n^2,n'_Z)$ 
        \STATE $\{T_i\}\leftarrow \text{DCDforT}(n^2,n\cdot n_T')$
        \STATE $ct(A_1),ct(A_2)\leftarrow ct(A)$
        \STATE $ct(B^{(0)})\leftarrow ct(B)$
        \FOR{$Z_{1i}$ in $\{Z_{1i}\}$}
            \STATE $\text{ct}(A_1)\leftarrow \text{LinTrans}(\text{ct}(A_1),Z_{1i})$ 
        \ENDFOR
        \FOR{$Z_{2i}$ in $\{Z_{2i}\}$}    
            \STATE $ct(A_2)\leftarrow \text{LinTrans}(ct(A_2),Z_{2i})$
        \ENDFOR
        \STATE $ct(A^{(0)})\leftarrow \text{Add}(ct(A_1),ct(A_2))$
        \FOR{ $T_{i}$ in $\{T_{i}\}$}
            \STATE $\text{ct}(B^{(0)})\leftarrow \text{LinTrans}(\text{ct}(B^{(0)}),T_{i})$ 
        \ENDFOR
        \STATE $\text{ct}(AB) \leftarrow \text{Mul}(\text{ct}(A^{(0)}),\text{ct}(B^{(0)}))$
        \FOR{ $1 \leq k < n$ }
            \STATE  $\text{ct}(A^{(k)}) \leftarrow \text{LinTrans}(\text{ct}(A^{(0)},C^k)$ 
            \STATE $\text{ct}(B^{(k)}) \leftarrow \text{LinTrans}(\text{ct}(B^{(0)},R^k)$ 
	    \STATE $\text{ct}(AB^{(k)})\leftarrow \text{Mul}(\text{ct}(A^{(k)}),\text{ct}(B^{(k)}))$ 
	    \STATE $\text{ct}(AB) \leftarrow \text{Add}(\text{ct}(AB^{(k)}),\text{ct}(AB))$
        \ENDFOR
    \STATE return $\text{ct}(AB)$         
    \end{algorithmic}
\end{algorithm}

%\noindent \textbf{Further Discussion}

%\noindent 我们观察到 \cite{jiang2018secure} 中提出实现矩阵转置的线性变换具有和 $T$ 类似的非零对角线分布，因此我们的对角线收敛分解策略或许可以应用在转置线性变换上，减少转置操作所需的密钥数量，并有效提升其速度，这一部分内容将在附录进行讨论。

\section{Homomorphic Covariance Matrix Computation for Large Datasets}
In this section, We move on to present the method to pack large datasets into ciphertexts which enables seamless execution of the homomorphic matrix multiplication (see Section \ref{sec: Ciphertext Packing}) and a homomorphic circuit that efficiently computes the covariance matrix of the packed and encrypted dataset.(see Section \ref{sec: Cov}.

%In the previous section, we presented the optimization for matrix multiplication. In this section, we will first introduce a method for ciphertext packing of large datasets, enabling seamless execution of matrix multiplication on encrypted data. Next, by leveraging homomorphic matrix multiplication and other homomorphic operations, we propose an algorithm that efficiently computes the covariance matrix of a dataset using only encrypted dataset elements.

\subsection{Ciphertext Packing}\label{sec: Ciphertext Packing}

Let $X$ be an $s \times t$ dataset, where $s$ represents the number of samples and $t$ represents the number of features. We partition $X$ into square submatrices with size $n\times n$, where $n^2$ equals the CKKS scheme parameter $N/2$. This allows us to encode each submatrix into one ciphertext. Features of the dataset are divided into $k$ partitions, where $k = \lceil \frac{t}{n} \rceil$. In this way, each ciphertext will contain one partition of features from $n$ samples/records. Note that $t$ is not always divisible by $n$, so zero-padding is necessary. Two possible methods for padding are as follows:
% Our strategy involves partitioning $X$ into square matrices, as this facilitates the execution of homomorphic square matrix operations. Therefore, we consider schemes where the number of message slots is $n^2 = N/2^i$, where $1 \leq i < \text{log}(N)$, allowing us to treat the ciphertext as an $n \times n$ matrix. We will divide the features into $k$ partitions, where $k = \lceil \frac{t}{n} \rceil$. Each ciphertext will contain one partition of features from $n$ samples/records, meaning that $k$ ciphertexts will collectively represent all the feature values of $n$ samples. Note that $t$ is not always divisible by $n$, so zero-padding is necessary. Two possible methods for padding are as follows:
\begin{enumerate}
    \item The first $k-1$ columns of submatrices each contain $n$ features, while the last column of submatrices contains the last $t-k\cdot n$ features and pads the rest of its space with zeros. This approach expands the feature count from $t$ to $k \cdot n$ and treats each ciphertext as an $n \times n$ matrix representing $n$ samples with $n$ features.
    \item All columns of submatrices store only $n-p$ valid features and pad the rest of their space with zeros, where $p=\lfloor(t - k\cdot n)/k \rfloor$. This approach allows each ciphertext to be viewed as an $n \times (n - p)$ matrix, taking advantage of some possible optimizations for rectangular matrix multiplication while still enabling operations on square matrices.
\end{enumerate}
In summary, $\lceil s/n \rceil \cdot k$ ciphertexts are required to pack the entire dataset $X$.

\subsection{Homomorphic Covariance Matrix Computation for Large Datasets}\label{sec: Cov}

We present a method for computing the covariance matrix of an encrypted dataset. Assume that we have partitioned $X$ into $\lceil s/n \rceil \cdot k$ ciphertexts, with each ciphertext representing an $n \times n$ submatrix of $X$. We denote $X_i$ as the $i$-th column submatrix of $X$, where $0 \leq i < k$. We also denote $X_i[\ell]$ as the $\ell$-th submatrix of $X_i$ and $\text{ct}(X_i[\ell]$ as the ciphertext of  $X_i[\ell]$. We need to calculate the mean 'vector' $\mu$ of the dataset if it is not centered. $\mu$ is represented by $k$ submatrices, where each column contains $n$ replication of one feature's mean value. $0\leq i<k$, denote $\mu[i]$ as the $i$-th submatrix of the mean vector. $\mu$ can be computed through the following equation: 
\begin{align}
  \mu[i] = \text{Aggregate}\left( \sum_{0\leq \ell < \lceil s/n \rceil} X_i[\ell];0 \right),  
\end{align}
where $\text{Aggregate}(X;\text{axis}), a \in {1,0}$ represents aggregating a matrix $X$ by columns ($\text{axis}=1$) or by rows ($\text{axis}=0$).This algorithm requires at most 1 multiplication depth (see Algorithm \ref{alg:agg}). 
\begin{algorithm}
    \renewcommand{\algorithmicrequire}{\textbf{Input:}}
    \renewcommand{\algorithmicensure}{\textbf{Output:}}
    \caption{Aggregate}
    \label{alg:agg}
    \begin{algorithmic}[1]
    \REQUIRE \; \\
        $ct(X)$: encrypted matrix $X$ with size $\text{row}\times \text{col}$; \\
        $\text{axis}$: aggregation axis.
    \ENSURE \; \\
        $ct(\text{Aggregate}(X;\text{axis}))$ 
    \STATE $ct(A)\leftarrow ct(X)$ 
    \IF{$\text{axis} == 0$}
	\FOR{$i\gets col; i< \text{row}\cdot \text{col};i= (i<<1)$}
		\STATE $ct(A)\leftarrow \text{Add}(\text{Rot}(ct(A);i),ct(A))$ 
        \ENDFOR
    \ENDIF
    \IF{$\text{axis} == 1$}
	\FOR{$i\gets 1;i<\text{col}; i = (i<<1)$}
		\STATE $ct(A)\leftarrow \text{Add}(\text{Rot}(ct(A);i),ct(A))$ 
        \ENDFOR
        \STATE $ct(A) \leftarrow \text{Mult}(ct(A), pt(M))$, where $M$ is a mask matrix with the same dimensions as $X$, except for the first column elements which are all 1 and other elements are 0
	\FOR{ $i:= -1;i>-col ; i=(i<<1)$}
		\STATE $ct(A)\leftarrow \text{Add}(\text{Rot}(ct(A);i),ct(A))$ 
        \ENDFOR
    \ENDIF
    \RETURN $\text{ct}(A)$         
    \end{algorithmic}
\end{algorithm}
\begin{comment}
\begin{theorem}
The multiplication depth of Algorithm \ref{alg:agg} is at most 1, and the time complexity is dominated by the number of rotations, which is $O(\text{log}(\text{row}))$ or $O(\text{log}(\text{col}))$.
\end{theorem}    
\end{comment}
Next, matrix $\mu^T \mu$ can be simply computed by transpositions and coordinate-wise multiplications:
\begin{align}
    \mu^T\mu_i[j] = \mu[j]^T \odot\mu[i].
\end{align}
We then move on to compute each submatrix unit in $X^TX$. For the submatrix $X^TX_i[j]$, we iterate through each submatrix in column $X_i$ and multiply it with the transposition of the corresponding submatrix in column $X_j$:
\begin{align}
  X^TX_i[j] = \sum_{0\leq \ell < \lceil s/n \rceil} X_j[\ell]^T X_i[\ell].  
\end{align}
Here, each matrix multiplication can be performed in ciphertext space using our improved homomorphic matrix multiplication. If $X$ is already centered, then $X^TX$ represents the covariance matrix. Otherwise, $X^TX$ should be subtracted by $\mu^T\mu$ to become the covariance matrix. It is worth noting that directly computing the covariance matrix homomorphically using the equation above is not the most efficient implementation. We provide the following optimizations: 
\begin{enumerate}
    \item Notice that for a given $0\leq j <k$, there is a one-to-many matrix multiplication scenario between $X_j[\ell]$ and matrix set $\{X_i[\ell]|0\leq i <k \}$ in the equation above. We can perform the inner product operations of $k$ columns of submatrices in parallel: $\sum_{0\leq \ell < \lceil s/n \rceil} X_j[\ell]^T X_i[\ell], 0\leq i <k$, and save time by storing $\{C^kZ \text{ct}(X_j[\ell]^T)|0\leq \ell <\lceil s/n \rceil\}$. 
    %Considering that in Section \ref{sec: Improved HMM}, we mentioned that when multiplying a matrix $A$ with a series of matrices $B_1, B_2, \dots$, we can reduce more than half of the matrix multiplication overhead by storing $\{C^kZ\mathbf{a}|k=0,\dots,n\}$. Therefore, for a given $j$, we can perform the inner product operations of the $k$ column submatrices in parallel: $\sum_{0\leq \ell < \lceil s/n \rceil} X_j[\ell]^T X_i[\ell], 0\leq i <k$, and save nearly half of the computation time by storing $\{C^kZ ct(X_j[\ell]^T)|0\leq \ell <\lceil s/n \rceil\}$. 
    \item We can further cache all ${R^kT ct(X_i[\ell])|0\leq \ell <\lceil s/n \rceil,0\leq j <k}$, so that each $X_i[\ell]$ only needs to undergo one $T$ transformation and $n-1$ row transformations throughout the entire covariance matrix computation process.
    %FIXME: add some points.
    \item Based on the fact that the covariance matrix is a symmetric matrix, it suffices to calculate the results of one side of the diagonal and transpose them to obtain the results for the other side, saving nearly half of the covariance computation time. 
\end{enumerate}
However, the ciphertexts required to be stored in Optimization 1 and 2 usually cannot be fully cached in memory due to space limitations. Therefore, it is often necessary to split the dataset into suitable segments by rows and apply Optimization 1 and 2 segment by segment.

We present the computation of the entire encrypted covariance matrix with the aforementioned Optimization 3 in Algorithm \ref{alg:Cov}. Optimization 1 and 2 are not explicitly shown in the algorithm since they involve restructuring the internal computation order of the DMtrixMult function. Theorem \ref{thm:CovCplx} provides the multiplication depth and computation complexity of Algorithm \ref{alg:Cov}, where the computation complexity is measured at the granularity of matrix multiplications since matrix multiplication can be implemented using different methods with varying complexities (such as whether to use Optimization 1 and 2).
\begin{algorithm}
    \renewcommand{\algorithmicrequire}{\textbf{Input:}}
    \renewcommand{\algorithmicensure}{\textbf{Output:}}
    \caption{Homomorphic Covariance Matrix Computation (HCovMtrx)}
    \label{alg:Cov}
    \begin{algorithmic}[1]
    \REQUIRE \; \\
        ${ \text{ct}(X_j[\ell]) | 0\leq j < k, 0\leq \ell < \lceil s/n \rceil }$: partitioned and encrypted dataset; \\
        $n_Z$: maximum diagonal index for transformation $Z$;\\
        $n_T$: maximum diagonal index $n_T$ for transformation $T$;\\
        % $G$: permutation matrix for matrix transposition with size $n^2\times n^2$.
    \ENSURE\; \\
        ${\text{ct}(\text{Cov}i[j])| 0\leq i,j < k}$: partitioned and encrypted covariance matrix.
    \FOR{ $0\leq j < k$ }
        \STATE Take $\text{ct}(X_j[\ell]),0\leq \ell < \lceil s/n \rceil$, $\text{ct}(\mu[j])\leftarrow \sum_{0\leq \ell < \lceil s/n \rceil}X_i[\ell]$ 
        \FOR{$n \leq t <n^2$ }
            \STATE $ct(\mu[j])\leftarrow \text{Add}(\text{ct}(\mu[j]),\text{Rot}(\text{ct}(\mu[j]);t))$ 
            \STATE $t\leftarrow t<<1$ 
        \ENDFOR
        \STATE $\text{ct}(X_j[\ell]^T)\leftarrow \text{LinTrans}(\text{ct}(X_j[\ell]),\text{G})$ \COMMENT {perform transposition with permutation matrix $G$ in parallel for $0\leq \ell < \lceil s/n \rceil$}
        \FOR{ $0\leq i <j$ } %lower triangular submatrix
           \STATE $\text{ct}(X_j[\ell]^T X_i[\ell])\leftarrow \text{DMtrxMul}(\text{ct}(X_j[\ell]^T),\text{ct}(X_i[\ell]),n,n_Z,n_T)$ \COMMENT {compute in parallel for $0\leq \ell < \lceil s/n \rceil$}
           \STATE $\text{ct}(X^TX_i[j])\leftarrow \sum_{0\leq i <k} \text{ct}(X_j[\ell]^TX_i[\ell])$; \COMMENT{aggregate the parallel computation results} 
        \ENDFOR
    \ENDFOR
    \FOR{ $0 \leq j < k$ }
        \FOR{$j \leq i <k$} % upper triangular submatrix
            \STATE $ct(X^TX_i[j])\leftarrow \text{LinTrans}(ct(X^TX_j[i]),\text{G})$ 
        \ENDFOR
    \ENDFOR
    \FOR{ $0\leq j <k$ }
        \STATE $\text{ct}(\mu[j]^T)\leftarrow \text{LinTrans}(\text{ct}(\mu[j]),T)$ 
        \FOR{ $0\leq i <k$}
            \STATE $\text{ct}(\mu^T\mu_i[j])\leftarrow \text{Mul}(\text{ct}(\mu[j]^T),\text{ct}(\mu[i]))$
            \STATE $\text{ct}(\text{Cov}_i[j]) \leftarrow \text{Add}(ct(X^TX_i[j]),ct(-\mu^T\mu_i[j]))$ 
        \ENDFOR
    \ENDFOR
    \STATE return  $\{\text{ct}(\text{Cov}_i[j])| 0\leq i,j < k\}$       
    \end{algorithmic}
\end{algorithm}

\begin{theorem}
\label{thm:CovCplx}
The multiplication depth of Algorithm \ref{alg:Cov} is $2+\text{max}(\lceil\frac{ (n-1)}{n_Z}\rceil,\lceil \frac{(n-1)*n}{n_T}\rceil)+2$, and the time complexity is dominated by the number of matrix multiplications, which is $O(\lceil \frac{k^2}{2} \rceil \cdot \lceil s/n \rceil)$.
\end{theorem}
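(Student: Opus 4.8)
\medskip
\noindent\emph{Proof strategy.} The statement splits into two essentially independent halves --- a multiplicative-depth bound and an asymptotic count of homomorphic matrix multiplications --- and the plan is to establish each by walking through Algorithm~\ref{alg:Cov} line by line. For the depth, I would trace the single ``critical path'' of ciphertexts that accumulates the most rescalings, charging one level to each plaintext multiplication inside a LinTrans call, one level to each ciphertext--ciphertext Mul, and nothing to Add and Rot; the constant factors $1/N$ and $1/N^2$ of the covariance formula are absorbed into plaintext operands that are multiplied anyway, so they cost no extra level. For the complexity, I would count the invocations of DMtrxMul and then argue that every other sub-routine is of strictly lower order.

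\medskip
\noindent\emph{The depth bound.} The mean blocks $\mu[j]$ are formed (lines 2--6) purely by rotations and additions, so they stay at the fresh level. Transposing $X_j[\ell]$ by the permutation matrix $G$ (line 8) spends one level. Inside DMtrxMul, the diagonal-convergence decomposition of Section~\ref{sec: RtkReduce} turns the $Z$-branch into $\lceil (n-1)/n_Z\rceil$ linear transformations applied in series to the $A$-side ciphertext and the $T$-branch into $\lceil (n-1)n/n_T\rceil$ linear transformations on the $B$-side ciphertext, so the deeper branch costs $\max\big(\lceil (n-1)/n_Z\rceil,\lceil (n-1)n/n_T\rceil\big)$ levels; one further level goes to the column/row permutation $C^k$ (resp.\ $R^k$) of Equation~(\ref{eq4}), and one more to the ciphertext--ciphertext Mul that combines the branches. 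Finally the symmetry optimization (Optimization~3) recovers the blocks $X^TX_i[j]$ with $i\ge j$ from the already-computed ones by one extra $G$-transposition. Summing along this worst-case path gives $2$ (the two $G$-transpositions) plus $\max\big(\lceil (n-1)/n_Z\rceil,\lceil (n-1)n/n_T\rceil\big)$ plus $2$ (the $C^k$ or $R^k$ permutation and the Mul), which is exactly the claimed quantity. It then remains to check that the parallel branch --- aggregating $\mu[j]$ (zero levels), transposing it by $T$ (one level), and forming $\mu^T\mu_i[j]$ by an elementwise Mul (one level) --- reaches depth only $2$, so the closing subtraction $\text{Cov}_i[j]=X^TX_i[j]-\mu^T\mu_i[j]$ is carried out at the shallower level and does not raise the total; and that reordering DMtrxMul under Optimizations~1 and 2 (caching $C^kZ\,\mathrm{ct}(X_j[\ell]^T)$ and $R^kT\,\mathrm{ct}(X_i[\ell])$) merely regroups, rather than lengthens, this path.

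\medskip
\noindent\emph{The complexity bound.} The encrypted covariance matrix is a $k\times k$ grid of ciphertext blocks. Since $X^TX$ is block-symmetric, Optimization~3 runs DMtrxMul only on the $\Theta(\lceil k^2/2\rceil)$ blocks of one triangle (plus the diagonal) and obtains the rest by cheap $G$-transpositions. Each such block $X^TX_i[j]=\sum_{0\le \ell<\lceil s/n\rceil}X_j[\ell]^T X_i[\ell]$ is an inner product over $\lceil s/n\rceil$ submatrix pairs, hence needs $\lceil s/n\rceil$ homomorphic matrix multiplications, for a total of $\lceil k^2/2\rceil\cdot\lceil s/n\rceil$. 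The remaining work --- $O(k\lceil s/n\rceil)$ transpositions of the $X_j[\ell]$'s, $O(k^2)$ symmetry transpositions, $k$ mean aggregations of $O(\log n)$ rotations each, $O(k^2)$ elementwise products for $\mu^T\mu$, and $O(k^2\lceil s/n\rceil)$ additions, each asymptotically cheaper than one DMtrxMul --- is dominated by those $\lceil k^2/2\rceil\cdot\lceil s/n\rceil$ matrix multiplications, which yields the stated asymptotics.

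\medskip
\noindent\emph{Where the difficulty lies.} The counting half is routine; the delicate half is the depth bookkeeping. The main obstacle is that the bound must hold for the algorithm \emph{as actually run}, i.e.\ with Optimizations~1--2 folded into DMtrxMul, so one has to confirm that caching the pre-transformed ciphertexts never creates a longer dependency chain than the naive order, and that the diagonal blocks $i=j$ (which have no transpose partner) are handled without any extra level. A secondary subtlety is that, because the $A$-side carries the additional $G$-transposition of $X_j[\ell]$, the stated formula is an \emph{upper} bound that is exactly tight only when the $Z$-branch dominates ($\lceil (n-1)/n_Z\rceil \ge \lceil (n-1)n/n_T\rceil$); when the $T$-branch is strictly deeper it may be loose by one level, which is harmless for a ``depth is at most'' claim but worth flagging.
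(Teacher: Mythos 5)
Your proof is correct and follows essentially the same accounting as the paper: two levels for the $G$-transpositions, $\max\bigl(\lceil (n-1)/n_Z\rceil,\lceil (n-1)n/n_T\rceil\bigr)$ levels for the decomposed $Z$ and $T$ transformations, and two more for the column/row shifting plus the ciphertext Mul, with the matrix-multiplication count $\lceil k^2/2\rceil\cdot\lceil s/n\rceil$ obtained exactly as the theorem states. Your additional checks (the shallower $\mu^T\mu$ branch, the effect of Optimizations 1--2, and the tightness caveat when the $T$-branch dominates) go beyond the paper's brief proof but do not change the route.
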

\begin{proof}
    We need 2 multiplication depths for matrix transpositions, 2 depths for column shiftings and ciphertext multiplications in the DMatrixMult function, and $k=max(k_1,k_2)$ depths for the linear transformations $Z$ and $T$, where $k_1=\lceil\frac{ (n-1)}{n_Z}\rceil$ is the number of depth used for decomposed $Z$, and $k_2=\lceil \frac{(n-1)\cdot n}{n_T}\rceil$ is the number of depth used for decomposed $T$.
\end{proof}

\section{Privacy-Preserving PCA using Homorphic PowerMethod}

We present the homomorphic evaluation circuit of the PowerMethod algorithm in the first subsection and demonstrate the process of performing privacy-preserving PCA utilizing our homomorphic PowerMethod circuit in the second subsection.

\subsection{Homomorphic PowerMethod Circuit}

We divide the demonstration of the homomorphic PowerMethod into three pivotal components: (i) the continuous transformation of the covariance matrix, (ii) the normalization of the approximate eigenvector, and (iii) the computation of the eigenvalue corresponding to the eigenvector. 

\subsubsection{Covariance Matrix Transformation}

For a column vector (or row vector) with $t$ components ($t$ is denoted as the number of features in Section \ref{sec: Ciphertext Packing}), we replicate it into a column (or row) of submatrices using the following approach:
\begin{enumerate}
    \item Divide the vector evenly into $k$ segments where each segment contains $n$ entries. The zero padding strategy is needed and should be consistent with the one used in ciphertext packing when $t$ is not divisible by $k$. 
    \item Horizontally (or vertically) replicate these $k$ segments of column vectors (or row vectors) $n$ times to construct $k$ submatrices. 
\end{enumerate}
The continuous linear transformation will be alternately applied to the horizontally replicated vector and the vertically replicated vector. Let $V_{t-1}$ be a set of sub-matrices representing a replicated column vector $\mathbf{v}_{t-1}$. For $0\leq i < k$, the covariance matrix transformation $\mathbf{v}_t=\text{Cov}\mathbf{v}_{t-1}$ can be computed as:
\begin{align}
\begin{split}
V_t[i]&=\text{Aggregate}\left(\sum_{0\leq j <k} (\text{Cov}_j[i]\odot V_{t-1}[j]^T)^T; 0\right) \\
&= \text{Aggregate}\left(\sum_{0\leq j <k} \text{Cov}_j[i]^T\odot V_{t-1}[j] ; 0\right).
\end{split}
\end{align}
Here, $V_t=\{V_t[i]|0\leq i<k\}$ represent the replicated row vector $v_t^T$. If $V_{t-1}$ represents a replicated row vector $\mathbf{v}^T_{t-1}$, then for $0\leq i < k$, the:  covariance matrix transformation $\mathbf{v}_t^T=\mathbf{v}^T_{t-1}\text{Cov}$ can be computed as:
\begin{align}
\begin{split}
    V_t[i]&=\text{Aggregate}\left(\sum_{0\leq j <k} V_{t-1}[j] \odot  \text{Cov}_i[j]^T; 1\right), \\
\end{split}
\end{align}
where $V_t$ represents the replicated column vector $\mathbf{v}_t$. This equation holds because the covariance matrix $\text{Cov}$ is symmetric, i.e., $\text{Cov}_j[i]^T = \text{Cov}_i[j]$. Now We can perform the covariance matrix transformation on a vector continuously using these two equations alternately. 

%From this, we can see that we can always perform the continuous linear transformation of the covariance matrix with Hadamard multiplication, summation, aggregation, and intermediate vectors alternating between row and column form.

The entire process of the continuous covariance matrix transformation can be described as the following. At the onset of the PowerMethod iteration, we initially select a row vector as the approximate eigenvector, denoted as $v_0$. We replicate it into a row of submatrices denoted as $V_{0}$. The set of submatrices representing the approximate eigenvector in the $t$-th iteration is computed as:
\begin{align}
\begin{split}
V_t[i]=&\text{Aggregate}\left(\sum_{0\leq j <k} V_{t-1}[j] \odot  \text{Cov}_j[i];t+1\text{ mod }2 \right)\cdot(t+1\text{ mod }2)+ \\
& \text{Aggregate}\left(\sum_{0\leq j <k} V_{t-1}[j] \odot  \text{Cov}_i[j];t\text{ mod }2 \right)\cdot(t\text{ mod }2).    
\end{split}
\end{align}

\subsubsection{Vector normalization}
Vector normalization should be performed in each iteration of the PowerMethod to the length of the eigenvectors from continually increasing or decreasing, which could lead to overflow. Normalizing a vector $\mathbf{v}$ is done by applying a scaling factor of $\frac{1}{||\mathbf{v}||_2}$ to each component of the vector. Specifically, the scaling factor is obtained by evaluating the InvSRT function at the point $\left< \mathbf{v},\mathbf{v} \right>$. We mentioned that we can only use the iterative InvSRT algorithm to approximate the effect of InvSRT in the homomorphic encryption scenario (see Section \ref{sec: InvSRT}). We summarize the following factors that significantly influence the accuracy of the iterative InvSRT algorithm: 

%In plaintext computation scenarios, it is customary to scale the approximate eigenvector results obtained in each round of the PowerMethod iteration and employ the scaled results as inputs for the next iteration. This is done to prevent the length of the eigenvectors from continually increasing or decreasing, which could lead to overflow. It is worth noting that the scaling performed during the iteration does not affect the convergence speed of the iteration. In the context of homomorphic encryption, where the actual values of the eigenvectors are not visible, it is difficult to scale the approximate eigenvectors using a fixed (set of) constant factors. Therefore, we adopt a normalization scaling approach, which ensures that the norm of the eigenvectors is kept within a reasonable range after each scaling. Given that current homomorphic encryption schemes generally do not support the evaluation of non-polynomial functions, achieving normalization necessitates approximating the InvSRT function within a certain range using an iterative approach. Based on existing research on the approximation of the InvSRT function using iterative algorithms in the homomorphic setting, we have summarized the following factors that significantly influence the accuracy of the approximation:
\begin{enumerate}
    \item Initial approximation: Given a fixed number of iterations, a closer initial approximation to the exact value yields more accurate results from the iterative algorithm. Conversely, an initial estimate that deviates significantly from the exact value may prevent the iterative algorithm from converging. Since the function $\frac{1}{\sqrt{x}}$ exhibits significant changes in slope on both sides of 1, resembling an "L" shape, determining an initial approximation that simultaneously approximates the exact value on both sides of 1 using a simple function (such as a low-degree polynomial) may be insufficient. However, constructing an initial approximation evaluation method that closely approximates the exact value on both sides of the evaluation interval can itself be a complex task. For example, approaches such as the piecewise function used in \cite{panda2022polynomial} or the rational polynomial function discussed in \cite{qu4258571improvements} yield initial estimates that are closer to the exact value. However, even these functions require an iterative approximation in the homomorphic setting.
    
    \item Number of iterations: The more iterations performed, the closer the output of the iterative InvSqrt algorithm will be to the exact value.
\end{enumerate}
To set up a good iterative InvSRT algorithm for the vector normalization in the homomorphic PowerMethod, our first step is to determine the upper bound of the norm of the approximate eigenvectors in order to establish the evaluation interval for the iterative InvSRT. Subsequently, we propose a Lazy Normalization strategy for the approximate eigenvector normalization in the PowerMethod. This strategy tolerates imprecise outputs of the iterative InvSRT algorithm within the input interval (0,1] during some rounds of the PowerMethod. However, it aims to achieve highly accurate outputs in the final iteration to ensure that the resulting eigenvectors are normalized.

%Faced with these factors, our first step is to determine the upper bound of the norm of the approximate eigenvectors in order to establish the evaluation range for the square root reciprocal iteration algorithm. Subsequently, we propose a Lazy Normalization strategy for scaling the approximate eigenvectors. This strategy tolerates imprecise outputs of the square root reciprocal iteration algorithm within the interval (0,1] during some rounds of the PowerMethod. However, it aims to achieve highly accurate outputs in the final iteration to ensure that the resulting eigenvectors are normalized.

\paragraph{Estimating the Upper Bound of the Norm}
%\noindent \textbf{Estimating the Upper Bound of the Norm}

After the dataset has undergone normalization and centering, we assume that all of its units are bounded within $b > 0$. For each element in the covariance matrix $\text{Cov}$, we have the bound $|\text{Cov}_{ij}| = \frac{1}{n}|\left< \mathbf{x}_i,\mathbf{x}_j \right>| \leq b^2$, where $\mathbf{x}_i, 0 \leq i < d$ represents the vector composed of the $i$-th feature values of all samples in the dataset. Therefore, the Euclidean norm of vector $\text{Cov}_i$ is bounded by $|| \text{Cov}_i ||_2 \leq \sqrt{b^4 \cdot d}$. With this information, we can estimate the upper bound of the norm for an approximate eigenvector $\mathbf{y}$ produced by the covariance matrix transformation $\mathbf{y} = \text{Cov} \cdot \mathbf{v}$ with the following inequality:
\begin{align}
    ||\mathbf{y}||_2 = \sqrt{\sum_{i=1}^d|y_i|^2}=\sqrt{\sum_{i=1}^d (\text{Cov}_{i1}v_1+\text{Cov}_{i2}v_2+\dots+\text{Cov}_{id}v_d)^2}\leq b^2\cdot c\cdot d,
\end{align}
where the Euclidean norm of the input approximate eigenvector $\mathbf{v}$ is bounded by $c$. In practical applications, different feature columns $\mathbf{x}_i$ may have different bounds $b_i$. If the cloud service provider has legal access to these bounds, tighter upper bounds can be constructed. In this case, we have $\text{Cov}_{ij} = \frac{1}{n}|\left< \mathbf{x}_i,\mathbf{x}_j \right>| \leq b_i b_j$, $||\text{Cov}_i||2 \leq \sqrt{\sum_{0 \leq j < d} b_i^2 b_j^2}$, and the upper bound of the Euclidean norm of $\mathbf{y}$ can be expressed as:
\begin{align}
    ||\mathbf{y}||_2 = \sqrt{\sum_{i=1}^d (\text{Cov}_{i1}v_1+\text{Cov}_{i2}v_2+\dots+\text{Cov}_{id}v_d)^2}\leq \sqrt{\sum_{i=1}^d(\sum_{j=1}^d b_i^2b_j^2)\cdot c^2)}.
\end{align}

\paragraph{Lazy Normalization}
%\noindent \textbf{Lazy Normalization} 

Based on the upper bound of the Euclidean norm, we design a vector normalization method called \textit{Lazy Normalization} to scale the eigenvectors. Let $\mathbf{y}$ remain the approximate eigenvector obtained from the covariance matrix transformation $\mathbf{y}=\text{Cov}\cdot \mathbf{v}$, and denote the upper bound for its Euclidean norm as $B$. We then set the evaluation interval for the iterative InvSRT algorithm as $(0,B^2]$. Next, we perform an odd-degree Taylor expansion $T(x)$ of $f(x) = \frac{1}{\sqrt{x}}$ at the midpoint $B^2/2+1$ which serves as the initial approximation function for the iterative InvSRT. It has been proven that such a Taylor expansion allows the subsequent iterative InvSRT to converge within a finite number of iterations \cite{qu4258571improvements}. The iterative InvSRT then takes the initial approximation $T(\left< \mathbf{y},\mathbf{y}\right>)$ as input and iterate $\tau$ times to obtain a more accurate value of $\frac{1}{||\mathbf{y}||_2}$. The approximate eigenvector is then multiplied by the output of the iterative InvSRT to perform the normalization.

For any $x\in (0,b^4\cdot c^2\cdot d^2] $, it holds that $T(x) \leq \frac{1}{\sqrt{x}}$ \cite{qu4258571improvements}. Furthermore, the iterative InvSRT output is no greater than the ground truth of InvSRT (Theorem \ref{thm:itr_invsqrt}). Hence, we can estimate that both the Euclidean have an upper bound of $1$ after the normalization. This allows us to recompute the upper bound of the Euclidean norm for the next approximate eigenvector $\mathbf{y}' = \text{Cov}\cdot \mathbf{y}$. In particular, if the length of $\mathbf{v}$ is initially $1$, then the upper bound of $||\mathbf{y}'||_2$ remains unchanged and equal to the upper bound of $||\mathbf{y}||_2$.

The degree of $T(x)$ is minimized to prevent generating extremely small high-degree coefficients that exceed the precision range of the homomorphic encryption scheme. This may lead to a result that the accuracy of $T(x)$ in the interval $(0,1)$ is lower compared to the interval $[1,B^2]$, due to the significant difference in slopes on both sides of $1$ for $\frac{1}{\sqrt{x}}$. Subsequent iterative InvSRT inherits this disparity of accuracy. It is impractical to achieve the same level of accuracy for the iterative InvSRT across the evaluation interval $(0,1)$ as in the interval $[1,B^2]$, since a significant number of iterations is required. The best we can do is to make an assumption that the covariance matrix transformation does not shrink the length of vectors by more than a factor of $S$. Based on this assumption, we can attempt to ensure that (i) the length of the eigenvector does not underflow during the PowerMethod iterations and (ii) the length of the final output eigenvector from the PowerMethod is sufficiently close to 1. % 要让 iterative InvSRT 在整个求值区间 $(0,1)$ 上的输出都具有和在求值区间 $[1,B^2]$ 上的输出相同的精确度需要付出大量的迭代，是不实际的。我们能做的是假设协方差矩阵变换对向量的缩小效应不超过一个界 $B'$，然后尝试在这个假设下使得 (i) 特征向量的长度在 PowerMethod 迭代过程中不会下溢出，(ii) PowerMethod 最终输出的特征向量的长度足够接近 1. 

Therefore, our strategy is as follows. Let $B'$ be a precision bound, $e$ be an error bound, and $t$ be the expected number of PowerMethod iterations. Denote $S$ as the contraction coefficient of the covariance matrix transformation, which means that any vector, when transformed by the covariance matrix, is reduced in length by at most a factor of $S$. We then simulate the PowerMethod iterations on a unit-length vector to obtain the best-fit number of iterative InvSRT iterations in each PowerMethod iteration:
\begin{enumerate}
    \item We start with an initial length $l$ of 1. In each PowerMethod iteration, we first update $l$ directly as $l \leftarrow l \cdot S$, without performing any actual covariance matrix transformation.
    \item Subsequently, we evaluate $T(l^2)$ and feed it into the iterative InvSRT. The resulting output, denoted as $S'$, must satisfy the following two conditions: (i) $(l\cdot S'\cdot S)^2$ should not underflow beyond the predetermined precision bound $B'$, and (ii) the value $1 - l\cdot S'$ must be smaller than $e$ in the final iteration of the PowerMethod. %The output of iterative InvSRT $S'$ should 满足以下两个条件：(i) $(l\cdot S'\cdot S)^2$ 不会下溢出某个预先设定的精度范围 (ii) 如果处于最后一轮 PowerMethod 迭代中，那么误差 $e= 1-l\cdot S'$ 必须要小于某个预先设定的范围。
    \item $l$ is updated as $l\leftarrow l\cdot S'$ in preparation for the subsequent round of PowerMethod iteration.
    \item  We maintain a record of the number of iterations employed by the iterative InvSRT during each iteration of the PowerMethod. In the actual homomorphic PowerMethod computation, we rely on these recorded iteration counts to perform the iterative InvSRT. %在真正的同态 PowerMethod 计算中，就按照记录的这些迭代轮数来进行 iterative InvSRT。 %记录下每一 PowerMethod iteration 中使用的 iterative InvSRT 迭代数。它们就是 Lazy Normalization 策略中调用 iterative InvSRT 时需要迭代的次数。
\end{enumerate}
Up to this point, our Lazy Normalization strategy can be uniquely characterized by its evaluation interval $(0, B^2]$, initial approximation function $T$, precision bound $B'$, error bound $e$, number of PowerMethod iterations $t$, and the number of iterative InvSRT iterations per PowerMethod iteration. Consequently, we can conclude that the Lazy Normalization strategy is capable of accommodating $t$ PowerMethod iterations that involve covariance matrix transformations with a contraction coefficient of $S$ while maintaining an error bound $e$ and a precision bound $B'$. %So far，我们的 Lazy Normalization 可以由其求值区间 $(0,B^2]$、 initial approximation function $T$、precision bound $B'$、error bound $e$、PowerMethod 迭代数 $t$，以及每轮 PowerMethod 迭代中 iterative InvSRT 的迭代数来唯一标识。We can then say that the Lazy Normalization strategy can tolerate $t$ PowerMethod iterations involving covariance matrix transformations with a contraction coefficient of $S$ up to an error of $e$.

\begin{theorem}
    \label{thm:itr_invsqrt}
        If Algorithm \ref{alg:Newton} is given an input $y_0 \leq \frac{1}{\sqrt{x_0}}$, then in any iteration, if $y_{i-1} \leq \frac{1}{\sqrt{x_0}}$ and $y_{i-1} > 0$, it follows that $y_{i-1} \leq y_i \leq x_0$.
\end{theorem}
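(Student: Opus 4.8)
The plan is to collapse the whole statement to one elementary polynomial inequality via the natural rescaling $u := \sqrt{x_0}\,y_{i-1}$. Under the hypotheses $y_{i-1}>0$ and $y_{i-1}\le \frac{1}{\sqrt{x_0}}$ we have exactly $u\in(0,1]$, and the Newton update of Algorithm~\ref{alg:Newton} rescales cleanly: writing $g(u):=\tfrac12 u(3-u^2)$ one gets $\sqrt{x_0}\,y_i = g(u)$. So the two claimed bounds $y_{i-1}\le y_i$ and $y_i\le \frac{1}{\sqrt{x_0}}$ become $u\le g(u)$ and $g(u)\le 1$ for $u\in(0,1]$. (I read the printed right-hand side $x_0$ as the fixed point $\frac{1}{\sqrt{x_0}}$ of the iteration; that is the ceiling actually used in the Lazy Normalization argument, where the point is precisely that the iterative InvSRT never overshoots the true value $1/\sqrt{\langle\mathbf y,\mathbf y\rangle}$.)

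First I would establish the monotonicity $u\le g(u)$: rearranging, $g(u)-u=\tfrac12 u(1-u^2)=\tfrac12 u(1-u)(1+u)\ge 0$ since every factor is nonnegative on $(0,1]$. The same line shows $g(u)>0$ (as $3-u^2\ge 2$), so positivity of the iterate is automatic; hence the positivity hypothesis propagates without extra work.

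Next, the ceiling $g(u)\le 1$ is the inequality $3u-u^3\le 2$, i.e.\ $u^3-3u+2\ge 0$. The one genuinely nontrivial step is the factorization $u^3-3u+2=(u-1)^2(u+2)$, which is nonnegative for all $u\ge -2$ and in particular on $(0,1]$, with equality only at $u=1$. Equivalently one can note that $u=1$ is a \emph{double} root because it is the fixed point of Newton's iteration for the inverse square root, so the tangency is forced; this is the whole content of the bound, and I expect it to be the only place requiring a moment's thought.

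Finally I would package these into a short induction on the iteration index: the base case is the given $0<y_0\le \frac{1}{\sqrt{x_0}}$, and the two facts above show that $0<y_{i-1}\le\frac{1}{\sqrt{x_0}}$ implies both $y_{i-1}\le y_i$ and $0<y_i\le\frac{1}{\sqrt{x_0}}$, so the invariant survives all $d$ rounds and the sequence $(y_i)$ is nondecreasing and bounded above by the true value. No analytic machinery is needed; the entire argument is the substitution plus the cubic factorization, with the rest being one-line algebra.
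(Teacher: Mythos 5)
Your proof is correct and takes essentially the same route as the paper's: the paper likewise gets $y_{i}\ge y_{i-1}$ from $0<x_0y_{i-1}^2\le 1$ and establishes $y_i\le \frac{1}{\sqrt{x_0}}$ by a direct polynomial expansion (writing $y_{i-1}=\frac{1}{\sqrt{x_0}}-\epsilon$), which is the same cubic inequality you prove via the rescaling $u=\sqrt{x_0}\,y_{i-1}$ and the factorization $u^3-3u+2=(u-1)^2(u+2)$. Your reading of the stated ceiling $x_0$ as the fixed point $\frac{1}{\sqrt{x_0}}$ is also exactly the bound the paper's proof actually delivers.
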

\begin{proof}
    When $y_{i-1} \leq \frac{1}{\sqrt{x_0}}$ and $y_{i-1} > 0$, we have $0 < x_{0}y_{i-1}^2 \leq 1$. Therefore, $y_i \geq y_i$. On the other hand, set $y_{i-1} = x_0 - \epsilon$ where $0 \leq \epsilon < \frac{1}{\sqrt{x_0}}$. We can express $y_i$ as follows:
    \begin{align}
    \begin{split}
        y_i & = \frac{1}{2}(\frac{1}{\sqrt{x_0}}-\epsilon)\left(3-x_0\cdot \left( \frac{1}{\sqrt{x_0}}-\epsilon \right)^2  \right) \\
        & = \frac{1}{2}(\frac{1}{\sqrt{x_0}}-\epsilon)(2+2\sqrt{x_0}\epsilon - x_0\epsilon^2) \\ 
        & = \frac{1}{\sqrt{x_0}} - \frac{\sqrt{x_0}\epsilon^2}{2} - \sqrt{x_0}\epsilon^2 + \frac{x_0\epsilon^3}{2} \\ 
        & \leq \frac{1}{\sqrt{x_0}}.
    \end{split}
    \end{align}
\end{proof}

\begin{algorithm}[htbp]
    \renewcommand{\algorithmicrequire}{\textbf{Input:}}
    \renewcommand{\algorithmicensure}{\textbf{Output:}}
    \caption{TaylorInit}
    \label{alg:TaylorInit}
    \begin{algorithmic}[1]
    \REQUIRE
        $ct(x)$: ciphertext encrypting value $x$;
        $B$: upper bound for the iterative InvSRT evaluation interval;
        $o$: truncation order.
    \ENSURE
        $ct(x')$, where $x' = T(x)$, and $T(x)$ is the Taylor expansion polynomial of $\frac{1}{\sqrt{x}}$ at point $B/2+1$ with truncated order $o$.    
    \end{algorithmic}
\end{algorithm}

\begin{algorithm}[htbp]
    \renewcommand{\algorithmicrequire}{\textbf{Input:}}
    \renewcommand{\algorithmicensure}{\textbf{Output:}}
    \caption{IterativInvSRTbyNewton}
    \label{alg:InvSRTbyNewton}
    \begin{algorithmic}[1]
    \REQUIRE
        $ct(x)$: ciphertext encrypting point; $x$ %\mathbf{x}$,
        $ct(\text{Guess})$: initial approximation of iterative InvSRT at point $x$;
        $\tau$: number of iterations. 
    \ENSURE
        $ct(x'), {x' \approx \frac{1}{\sqrt{x}}}$    
    \STATE $ct(x')\leftarrow ct(\text{Guess})$  
    \FOR{i range 0 to $\tau-1$ }
	\STATE $ct(\text{Temp})\leftarrow \text{Mult}\left (\text{Mult}(ct(x'),ct(x')),\text{Mult}(ct(x'),ct(x))\right)$ 
	\STATE $ct(\text{Temp})\leftarrow \text{Sub}(ct(\text{Temp}),\text{PMult}(pt(3), ct(x')))$ 
        \STATE $ct(\text{Temp})\leftarrow \text{PMult}(pt(1/2),ct(\text{Temp}))$ 
	\STATE $ct(x')\leftarrow ct(\text{Temp})$
    \ENDFOR
    \STATE return $ct(x')$ 
    \end{algorithmic}
\end{algorithm}

\subsubsection{Eigenvalue Computation}

The process of computing the eigenvalue is similar to completing another round of Power Method iterations. Let $\mathbf{v}$ be the eigenvector obtained after $t$ iterations. The eigenvalue $\lambda$ is given by $\lambda = \frac{\left< \text{Cov}\cdot\mathbf{v},\mathbf{v} \right>}{\left< \mathbf{v}, \mathbf{v} \right>}$. We know that $\mathbf{v}$ has been normalized after $t$ iterations. Therefore, if we are confident enough in the accuracy of the normalization process, the eigenvalue computation can be simplified to $\lambda = \left< \text{Cov}\cdot\mathbf{v},\mathbf{v} \right>$. The process of $\text{Cov}\cdot \mathbf{v}$ is exactly the same as the first half of a new round of Power Method iterations. However, $\text{Cov}\cdot \mathbf{v}$ and $\mathbf{v}$ have different orientations, i.e., if one is a column vector, the other is a row vector, and vice versa. Therefore, when performing the dot product, we need to flip the orientation of $\mathbf{v}$ to match that of $\text{Cov}\cdot \mathbf{v}$. The axis flipping algorithm (AxisFlipping) is provided in Algorithm (\ref{alg:AxisFlip}). Next, computing $\left< \text{Cov}\cdot\mathbf{v},\mathbf{v} \right>$ only requires $k$ ciphertext multiplications and one aggregation operation. We can incorporate the eigenvalue computation into the PowerMethod iteration, specifically in the last round of the Power Method (total $t+1$ rounds), where the iteration is used to compute the eigenvalue instead of updating the eigenvector. Note that we need to retain $\mathbf{v}^T$ to compute the covariance matrix after the eigenvalue transformation.

\begin{algorithm}[htbp]
    \renewcommand{\algorithmicrequire}{\textbf{Input:}}
    \renewcommand{\algorithmicensure}{\textbf{Output:}}
    \caption{AxisFlipping}
    \label{alg:AxisFlip}
    \begin{algorithmic}[1]
    \REQUIRE \; \\
        $\{ct(V[i])|0\leq i < k\}$: a set of encrypted matrices representing one replicated vector; \\
        $\text{axis}$: the axis of the vector. \\
    \ENSURE \; \\
        $\{ct(V[i]^T)|0\leq i < k\}$   
    \STATE $\text{Mask} \leftarrow I$ 
    \FOR{i range 0 to $k-1$ }
	\STATE $ct(V[i]^T) \leftarrow \text{PMult}(\text{Mask},ct(V[i]))$ 
	\STATE $ct(V[i]^T)\leftarrow\text{Aggregate}(ct(V[i]^T);\text{axis}+1 \text{ mod }2)$ 
    \ENDFOR
    \RETURN $\{ct(V[i]^T)|0\leq i < k\}$ 
    \end{algorithmic}
\end{algorithm}

So far, we have discussed the three modules of the Power Method algorithm, namely the linear transformation of the covariance matrix, normalization of the eigenvectors, and computation of the eigenvalues. These modules cover all the processing details of the Power Method, and the complete process is provided in Algorithm \ref{alg:PowerMethod}, where the invoked iterative InvSRT algorithm is implemented by Newton's method with a Taylor expansion as the initial approximation (see Algorithm \ref{alg:TaylorInit} and \ref{alg:InvSRTbyNewton}). We analyze the multiplication depth and the computation complexity of Algorithm \ref{alg:PowerMethod} in Theorem \ref{thm:PMCplx}. % Notice that our multiplication depth complexity outperforms the result $O(l_P\cdot (\log(n)+\log{(order)}+z\cdot l_N+2))$ complexity reported in \cite{panda2021principal} ($z$ is a constant no smaller than 3), and our rotation complexity is also more favorable than 
% We can observe that it takes 3 levels to complete one iteration of Newton's method, and the Taylor expansion requires approximately $\log{(order)}$ levels. Hence, the multiplication depth of Algorithm \ref{alg:PowerMethod} is $O(l_P\cdot (3+\log{(order)}+3\cdot l_N+1)+2)$ which is a more favorable asymptotic complexity compared to the $O(l_P\cdot (\log(n)+\log{(order)}+z\cdot l_N+2))$ complexity reported in \cite{panda2021principal} ($z$ is a constant no smaller than 3). The computational complexity of Algorithm \ref{alg:PowerMethod} is dominated by the number of ciphertext multiplications, which is $O(l_P\cdot (k^2+k+\log{(order)}+3\cdot l_N)$, and the number of ciphertext rotations, which is $O(\log(n)\cdot (2\cdot l_P+1))$.

\begin{theorem}
\label{thm:PMCplx}
    The multiplication depth of Algorithm \ref{alg:PowerMethod} is $O(l_P\cdot (3+\text{log(order)}+3\cdot l_N+1)+3)$, and the time complexity is dominated by $O((l_P+1)\cdot (k^2+k+c+3\cdot l_N+1)-3\cdot l_N-1)$ ciphertext multiplications and $O(\frac{3k+3}{2}\log{n}\cdot (l_P+1))$ ciphertext rotations, where $c$ is the number of ciphertext multiplications needed for the Taylor polynomial evaluation.
\end{theorem}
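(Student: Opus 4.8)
The plan is to account separately for the three modules identified in Section 6.1 — covariance matrix transformation, vector normalization (iterative InvSRT with a Taylor initial approximation), and eigenvalue computation — and then sum their contributions over the $l_P+1$ rounds of the PowerMethod loop (the final round being devoted to the eigenvalue instead of an eigenvector update). First I would fix notation: $k = \lceil t/n \rceil$ is the number of feature partitions, $l_N$ denotes the number of Newton iterations (called $\tau$ in Algorithm \ref{alg:InvSRTbyNewton}), and $c$ is the number of ciphertext multiplications used to evaluate the Taylor polynomial $T(x)$ of order \textit{order}. The multiplication-depth count then proceeds round by round: one covariance matrix transformation costs $3$ depths (this matches the ``2 depths for column shiftings and ciphertext multiplications'' plus the transposition/aggregation overhead bookkeeping already used in Theorem \ref{thm:CovCplx}); the Taylor initialization costs $\log(\text{order})$ depths by a standard Paterson–Stockmeyer / repeated-squaring evaluation of a degree-\textit{order} polynomial; each Newton iteration (line 3--6 of Algorithm \ref{alg:InvSRTbyNewton}: two nested \textbf{Mult}s for $x'^2 \cdot (x' x)$, one \textbf{Sub} against $3x'$, one \textbf{PMult} by $1/2$) consumes $3$ depths, giving $3 l_N$; and one final multiplication applies the InvSRT output to the eigenvector, adding $1$. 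Multiplying the per-round cost $3 + \log(\text{order}) + 3 l_N + 1$ by $l_P$ and adding the constant $+3$ for the trailing eigenvalue round (its $\mathrm{Cov}\cdot\mathbf v$ transformation plus the final dot product) yields the claimed $O\!\big(l_P(3 + \log(\text{order}) + 3 l_N + 1) + 3\big)$.

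For the time complexity I would count ciphertext multiplications and ciphertext rotations separately, since Algorithm \ref{alg:PowerMethod} is explicitly measured at this granularity. Per PowerMethod round: the covariance transformation evaluates $\sum_{0\le j<k}\big(\mathrm{Cov}_j[i]\odot V[j]\big)$ for each of the $k$ output submatrices, i.e.\ $k^2$ ciphertext multiplications, followed by $k$ aggregations; the dot product $\langle \mathbf y,\mathbf y\rangle$ feeding the normalization adds $k$ more multiplications and one aggregation; the Taylor initialization adds $c$; the $l_N$ Newton iterations add $3 l_N$ multiplications (two \textbf{Mult} plus we count the \textbf{PMult}s appropriately, matching the constant in the statement) plus $1$ for rescaling the eigenvector, so a per-round total of $k^2 + k + c + 3 l_N + 1$. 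Over $l_P+1$ rounds this is $(l_P+1)(k^2+k+c+3 l_N+1)$, from which I subtract $3 l_N + 1$ because the terminal eigenvalue round performs no normalization (no Newton iterations, no final eigenvector rescale), giving the stated $O\!\big((l_P+1)(k^2+k+c+3 l_N+1) - 3 l_N - 1\big)$. For rotations, each \textbf{Aggregate} call costs $O(\log n)$ rotations (Algorithm \ref{alg:agg}); per round we invoke it $k$ times inside the covariance transformation, plus the axis-flipping of Algorithm \ref{alg:AxisFlip} contributes another $O(k\log n)$, plus the dot-product aggregation; collecting these as roughly $\tfrac{3k+3}{2}\log n$ rotations per round and multiplying by $l_P+1$ gives the rotation bound.

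The main obstacle I expect is not any single estimate but making the bookkeeping of the constants honest: in particular justifying why the covariance transformation inside the PowerMethod costs $3$ depths (as opposed to the full $\geq 5$ of Theorem \ref{thm:CovCplx}) — the point is that here $\mathrm{Cov}$ and $V[j]$ are already packed compatibly so only a Hadamard product, an aggregation, and the occasional transposition/mask are needed per round, not a full DMatrixMult with its $Z,T$ linear transformations — and why the $-3l_N-1$ correction for the last round is exactly right. I would handle this by writing out one generic round as a table of (\textbf{Mult}, \textbf{PMult}, \textbf{Rot}, depth) tallies for each of the three modules, then explicitly noting which entries vanish in the eigenvalue round, and finally observing that the asymptotic $O(\cdot)$ absorbs the discrepancy between ``$3 l_N$ multiplications'' counted loosely versus the precise per-iteration count of \textbf{Mult}/\textbf{PMult} in Algorithm \ref{alg:InvSRTbyNewton}. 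The $\log n$ factor in the rotation count follows immediately from the halving loops in Algorithm \ref{alg:agg}, so no further work is needed there.
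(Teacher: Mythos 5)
Your proposal is correct and mirrors the paper's own proof: a per-round tally of depths, ciphertext multiplications, and rotations for the covariance transformation plus inner product, the Taylor initialization, the $l_N$ Newton iterations (3 depths and 3 multiplications each), and the final scaling, summed over $l_P+1$ rounds with the normalization terms dropped in the terminal eigenvalue round. The only bookkeeping difference is in the rotation count: the paper obtains the $\frac{3k+3}{2}\log n$ factor by averaging the two alternating per-round costs $(2k+1)\log n$ and $(k+2)\log n$ (row- versus column-axis aggregation in the covariance transformation and inner product), whereas you attribute part of it to the axis flipping, which in fact occurs only in the final eigenvalue round --- an immaterial discrepancy at the $O(\cdot)$ level.
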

\begin{proof}
    In each round of the PowerMethod, a maximum of 3 multiplication depths, $k^2+k$ ciphertext multiplications, and $(2k+1)\cdot \log{n}$ (or $(k+2)\cdot \log{n})$ ciphertext rotations are required to accomplish the covariance matrix transformation and one inner product. Subsequently, $\log{(order)}$ depths and $c$ ciphertext multiplications are needed for the Taylor Expansion. Once we enter Newton's Method iterative algorithm, each iteration necessitates 3 depths and 3 ciphertext multiplications. Lastly, a single multiplication depth is required to perform the ciphertext multiplication for scaling the approximate eigenvector. After conducting $l_P$ rounds of PowerMethod iterations, an additional covariance matrix transformation and inner product are required to complete the eigenvalue calculation.
    %每一轮 PowerMethod 中，我们首先需要花费至多 3 乘法深度、$k^2+k$ 次密文乘法、$(2k+1)\cdot \log{n}$（或$(k+2)\cdot \log{n})$）次密文旋转来完成 Covariance matrix transformation 与近似特征向量的内积操作，然后我们需要 $\log{order}$ 深度、$c$ 次密文乘法来进行泰勒展开的求值，进入到 Newton's Method 迭代算法以后我们需要 3 层级、3 密文乘法来完成一次迭代，最后需要一乘法深度来完成用于缩放近似特征向量的密文乘法。在进行了 $l_P$ 次 PowerMehto 迭代以后，我们还需要进行一次 Covariance matrix transformation 与一次向量内积来完成特征值计算。
\end{proof}

\begin{algorithm}[tp]
    \renewcommand{\algorithmicrequire}{\textbf{Input:}}
    \renewcommand{\algorithmicensure}{\textbf{Output:}}
    \caption{Homomorphic EigenShift (HEigenShift)}
    \label{alg:EigenShift}
    \begin{algorithmic}[1]
    \REQUIRE \; \\
        $ct(\lambda)$,$\{ct(V[i])|0\leq i < k\}$: the dominant eigenvalue and the corresponding eigenvector;\\
        $\{ct(V[i]^T)|0\leq i < k\}$: the transposed eigenvector;\\
        $\{\text{ct}(\text{Cov}_i[j])| 0\leq i,j < k\}$: the covariance matrix; 
    \ENSURE \; \\
         $\{\text{ct}(\text{SCov}_i[j])| 0\leq i,j < k\}$: the $1$-shifted version of $\text{Cov}$.
    \FOR{ $0\leq j <k$} 
	\FOR{ $0\leq i <k$}
		\STATE $\text{ct}(\lambda\cdot V^TV_i[j])\leftarrow \text{Mul}(ct(\lambda),\text{Mul}(\text{ct}(V[j]^T),\text{ct}(V[i])))$
		\STATE $\text{ct}(\text{SCov}_i[j]) \leftarrow \text{Sub}(ct(\text{Cov}_i[j]) ,ct(\lambda\cdot V^TV_i[j]))$
        \ENDFOR
    \ENDFOR
    \STATE return  $\{\text{ct}(\text{SCov}_i[j])| 0\leq i,j < k\}$ 
    \end{algorithmic}
\end{algorithm}

\begin{algorithm}
    \renewcommand{\algorithmicrequire}{\textbf{Input:}}
    \renewcommand{\algorithmicensure}{\textbf{Output:}}
    \caption{Homomorphic PowerMethod(HPowerMethod)}
    \label{alg:PowerMethod}
    \begin{algorithmic}[1]
    \REQUIRE \; \\
        $\{\text{ct}(\text{Cov}_i[j])| 0\leq i,j < k\}$: the encrypted covariance matrix; \\
        $l_P$: iterations for the PowerMethod;\\        
        $l_N$: iterations for the InvSRTByNewton;\\ 
        $\{ct(V_0[i])|0\leq i < k\}$: encrypted submatrices representing the initial selection for the approximated eigenvector;\\ 
        $\text{axis}$: the axis of the current approximate eigenvectors, where $\text{axis}=0$ represents row vectors with all rows being equal and $\text{axis}=1$ represents column vectors with all columns being equal;\\
        $B$: upper bound for the iterative InvSRT evaluation interval;\\
        $order$: truncation order for the Taylor initialization.\\
    \ENSURE \; \\
        $ct(\lambda)$, $\{ct(V_{l_P-1}[i]^T)|0\leq i < k\}$, $\{ct(V_{l_P-1}[i])|0\leq i < k\}$: Dominant eigenvalue of the covariance matrix, the corresponding eigenvector, along with its transposition.
    \FOR{ $t$ range 0 to $l_P$}
        \FOR{i range 0 to k-1}
		\STATE $ct(\text{Sum}) \leftarrow ct(\mathbf{0})$ 
		\FOR{ $j=0;j<k;j++$}
			\IF{ $\text{axis}==0$} 
				\STATE $ct(\text{Sum})\leftarrow \text{Add}(\text{Mult}(ct(\text{Cov}_j[i]),ct(V_t[i])),ct(\text{Sum}))$ 
			\ELSIF{$\text{axis} == 1$} 
				\STATE $ct(\text{Sum})\leftarrow \text{Add}(\text{Mult}(ct(\text{Cov}_i[j]),ct(V_t[i])),ct(\text{Sum}))$
                \ENDIF
            \ENDFOR
		\STATE $ct(V_{t+1}[i])\leftarrow \text{Aggregate}(ct(\text{Sum});\text{axis}+1 \text{ mod }2)$ 
        \ENDFOR
	\STATE $\text{axis} \leftarrow \text{axis} + 1 \text{ mod } 2$
	\STATE $ct(\text{Sum})\leftarrow ct(\mathbf{0})$ 
	\FOR{ i range 0 to k-1} 
		\STATE $ct(\text{Sum})\leftarrow \text{Add}(\text{Mult}(ct(V_{t+1}[i]),ct(V_{t+1}[i])),ct(\text{Sum}))$ 
	\ENDFOR
        \STATE $ct(\text{Sum})\leftarrow \text{Aggregate}(ct(\text{Sum});\text{axis}+1 \text{ mod }2)$ 
	\IF{ $n <  l_P$}
		\STATE $ct(\text{Guess})\leftarrow \text{TaylorInit}(ct(\text{Sum}),B,order)$ 
		\STATE $ct(\text{Sum})\leftarrow \text{InvSqrtByIteration}(ct(\text{Guess}),ct\text{Sum},l_N)$ 
	    \FOR{i range 0 to k-1 }
		      \STATE $ct(V_{t+1}[i])\leftarrow\text{Mult}(ct(V_{t+1}[i]),ct(\text{Sum}))$
            \ENDFOR
        \ENDIF	
        \IF{$t =  l_P$}
		\STATE $\{ct(V_t[i]^T)|0\leq i < k\}\leftarrow \text{AxisFlipping}( \{ct(V_t[i])|0\leq i < k\},\text{axis}+1\text{ mod }2 )$
		\STATE $ct(\text{Sum}) \leftarrow ct(\mathbf{0})$ 
		  \FOR{i range 0 to k-1}
			\STATE $ct(\text{Sum})\leftarrow \text{Add}(\text{Mult}(ct(V_{t+1}[i]),ct(V_t[i])),ct(\text{Sum}))$ 
            \ENDFOR		
            \STATE $ct(\lambda)\leftarrow \text{Aggregate}(ct(\text{Sum});\text{axis}+1\text{ mod }2)$ 
        \ENDIF
    \ENDFOR
    \STATE return $ct(\lambda)$, $\{ct(V_{l_P-1}[i]^T)|0\leq i < k\}$, $\{ct(V_{l_P-1}[i])|0\leq i < k\}$
    \end{algorithmic}
\end{algorithm}

\subsection{Privacy-Preserving PCA}

The final step is to combine the Power Method with the EigenShift algorithm to compute multiple dominant eigenvectors and their corresponding eigenvalues.  After each execution of the Power Method, we replace the covariance matrix with its $1$ shifted version, thereby ensuring that for $1\leq k$, the $k$-th round of the Power Method yields the $k$-th dominant eigenvector of the original covariance matrix of the dataset. The implementation of the homomorphic $1$-th EigenShift algorithm is presented in Algorithm \ref{alg:EigenShift}. It utilizes the dominant eigenvalue $\lambda$ and the corresponding eigenvector and its transpose $\mathbf{v},\mathbf{v}^T$ obtained in the previous Power Method iteration to update the covariance matrix as $\text{Cov}' = \text{Cov} - \lambda\cdot \mathbf{v}^T \mathbf{v}$. It can be proven that the $1$-shifted covariance matrix always has a smaller amplification effect on the vectors than its non-shifted version (see Appendix \ref{apd:B} for details). Thereby, we do not need to recompute the upper bound for the norm of vectors transformed by the shifted covariance matrix in each execution of PowerMethod. The entire process of the privacy-preserving PCA algorithm is provided in Algorithm \ref{alg:PPPCA}. 

%The implementation of the EigenShift algorithm is presented in Algorithm \ref{alg:EigenShift}. It utilizes the eigenvalue $\lambda$ and the eigenvector and its transpose $\mathbf{v},\mathbf{v}^T$ obtained in the previous Power Method iteration to update the covariance matrix as $\text{Cov}' = \text{Cov} - \lambda\cdot \mathbf{v}^T \mathbf{v}$. After each execution of the Power Method, we replace the covariance matrix with its eigen-shifted version, thereby ensuring that for $1\leq k$, the $k$-th round of the Power Method yields the $k$-th dominant eigenvector of the original covariance matrix. It can be demonstrated that each eigen-shifted covariance matrix amplifies vectors of identical length no more than the amplification effect of the original covariance matrix. Consequently, there is no necessity to recalculate the upper bound for the length of vectors transformed by the eigen-shifted covariance matrix.  %我们每完成一次 PowerMethod，就将协方差矩阵替换为它的 eigen-shifted 版本，使得对于 $1\leq k$，第 $k$ 轮 PowerMethod 计算得到的特征向量是原初协方差矩阵的第 $k$ dominant eigenvector. 可以被证明的是，每个 eigen-shifted 协方差矩阵对任意长度相同的向量的放大效果都不会超过原初协方差矩阵的放大效果，因此我们不需要重新计算经过 eigen-shifted 协方差矩阵变换后的向量的长度上界。

One more important consideration is the need for modulus refreshes when the multiplication depth of our algorithm exceeds the maximum multiplication depth supported by the underlying CKKS scheme.  %最后一件重要的事情：我们需要在算法中进行密文的模数刷新，因为我们的算法的乘法深度大概率会超过 CKKS 所能支持的最大乘法深度。
%the underlying CKKS scheme used in our algorithm only supports homomorphic operations up to a limited multiplication depth. This implies that we need to refresh the ciphertexts that have reached the maximum multiplication depth to higher modulus levels in order to support further homomorphic computation. 
It is always necessary to carefully select appropriate positions in the algorithm to perform modulus refreshes in order to minimize the overall number of refresh operations. We defer the discussion on the timing of the modulus refresh operations in our privacy-preserving PCA algorithm to Appendix \ref{apd:C} for further details. %我们把对 The timing of the modulus refresh operations in our privacy-preserving PCA algorithm 的讨论留到 Appendix \ref{apd:B} 进行.

\begin{algorithm}
    \renewcommand{\algorithmicrequire}{\textbf{Input:}}
    \renewcommand{\algorithmicensure}{\textbf{Output:}}
    \caption{Privacy-Preserving PCA}
    \label{alg:PPPCA}
    \begin{algorithmic}[1]
    \REQUIRE \; \\
        ${ \text{ct}(X_j[\ell]) | 0\leq j < k, 0\leq \ell < \lceil s/n \rceil }$: partitioned and encrypted dataset;\\ 
        $n_Z$: maximum diagonal index for transformation $Z$;\\
        $n_T$: maximum diagonal index $n_T$ for transformation $T$; \\
        $l_E$: expected number of eigenvectors or principal components. 
        $l_P$: iterations for the PowerMethod; \\ 
        $l_N$: iterations for the InvSRTByNewton; \\
        $B$: upper bound for the iterative InvSRT evaluation interval; \\
        $order$: truncation order for the Taylor initialization.        
    \ENSURE \; \\
        $\{\{ ct(V[i]) |0\leq i <k \}_j |0\leq j <l_E\}$ , $\{ct(\lambda_j)|0\leq j <l_E\}$: $l_E$ dominant eigenvectors and the corresponding eigenvalues.
    \STATE $\{ct(\text{Cov}_j[\ell])\}\leftarrow \text{HCovMtrx}({ \text{ct}(X_j[\ell]) | 0\leq j < k, 0\leq \ell < \lceil s/n \rceil },n_Z,n_T)$
    \FOR{$m:=0; m<l_E;m++$}
	\STATE $\{ct(V_0[i])|0\leq i < k\}\leftarrow \text{Randomly Generate with }\ell_2=1$
	\STATE $\text{axis}\leftarrow 1$ 
	\STATE $ct(\lambda_m),\{ct(V)\}_m,\{ct(V^T)\}_m,\leftarrow\text{HPowerMethod}(\{ct(\text{Cov}_j[\ell])\},l_P,l_N,\{ct(V_0[i])\},\text{axis},B,\text{order})$ 
	\STATE $\{ct(\text{Cov}_j[\ell])\}\leftarrow \text{EigenShift}(ct(\lambda_m),\{ct(V)\}_m,\{ct(V^T)\}_m,\{ct(\text{Cov}_j[\ell])\})$
    \ENDFOR
    \STATE return $\{\{ ct(V[i]) |0\leq i <k \}_j |0\leq j <l_E\}$ , $\{ct(\lambda_j)|0\leq j <l_E\}$ 
    \end{algorithmic}
\end{algorithm}

\section{Implementation}
We implement all the proposed algorithms described in the previous sections using the Lattigo library V4.1.0 \cite{lattigo}, which provides an implementation of the full-RNS CKKS scheme in Golang. The implementation of all the algorithms can be found in \cite{Ma2023}. Our experiments are conducted on a machine running Windows 10, equipped with an AMD Ryzen 7 7735HS (3.19 GHz) processor and 32GB of memory.

First, we present the performance results of the $Z$ and $T$ linear transformations using our diagonal convergence decomposition technique proposed in Section 4. Then, we proceed to report the performance of our privacy-preserving PCA and compare it with the state-of-art work by Panda. \cite{panda2021principal}. Their work is the most relevant to our scenario, as it also focuses on designing privacy-preserving PCA based on the CKKS scheme in a cloud computing service setting.

%All Our experiments were conducted using the full RNS CKKS approximate numerical homomorphic encryption scheme implemented in Lattigo V4.1.0. All experimental data were obtained using a machine equipped with an AMD Ryzen 7 7735HS with Radeon Graphics (3.19 GHz) processor, running on a 64-bit Windows 10 operating system with 32 GB of memory. 我们首先给出第 4 节中提出的 Diagonal Convergence Decomposition 对 double-hoisting 矩阵乘法的优化效果，然后对 Privacy-Preserving PCA using PowerMethod 的实现细节与表现进行展示。

\subsection{Performance of $Z$ and $T$ Using Diagonal Convergence Decomposition}

We conduct tests on the space optimization effect of our Diagonal Convergence Decomposition on the $128\times 128$ double-hoisting matrix multiplication. We evaluate the performance of the $Z$ and $T$ transformations decomposed at different maximum expected diagonal indices and compare them with the original double-hoisting ones with no decomposition applied (see Table \ref{tb:ZTData}). The evaluation is conducted on a concrete CKKS instance that offers a maximum available modulus level of $14$ with parameters: $\log{N} = 15, \log{b}=760,\sigma=3.2$ (see Appendix \ref{apd:A} for detailed parameter explanation). A baseline level of $14-4=10$ is set, and the initial modulus levels for all tested transformations are assigned appropriately to ensure that all transformation results fall down to the baseline level. 
%since We primarily focus on the reduction of the key space for the transformations $Z$ and $T$ achieved by the decomposition.
From Table \ref{tb:ZTData}, we can observe that for a fixed inner loop count, a smaller expected maximum diagonal index leads to a greater reduction in key space. In the decomposition with an expected maximum diagonal index four times smaller than 128, we successfully reduce the rotation key space of $Z$ by 58\% for an inner loop of 8 and by 35\% for an inner loop of 16. 

Additionally, our decomposition improve the computation speed of $Z$ and $T$. This is because, Within the matrix transformation product chain obtained after decomposition, the transformations that are initially performed at high levels (the ones on the right-hand side of the chain) require only a small amount of complexity, and the main complexity arises from the final matrix transformation in the chain (the one on the left-hand side), which is performed at the same level as the original scheme and has a smaller scale compared to the transformation that was not decomposed. Therefore, our decomposition scheme provides an optimized trade-off between space and speed, making the overall matrix multiplication scheme more flexible. % 我们用 $128\times 128$ 的矩阵乘法来测试我们的 Diagonal Convergence Decomposition 对 double-hoisting 矩阵乘法的空间优化效果。我们主要关注该 decomposition 对变换 $Z$ 和 $T$ 的密钥空间减少效果。通过选定一组 full-RNS CKKS 参数：$\text{log}QP=760, \text{log}N=15, \text{log(Slots)}=14$，其中可用的最高模数层级为 $14$ 层，我们对配备 Diagona Convergence Decomposition 的变换 $Z$, $T$ 在不同的最大期望对角线序号下的表现进行了测试，并将其与原 double-hoisting 矩阵乘法方案进行比较（见表 \ref{tb:ZTData}）。我们测试了分别消耗2层和4层模数（对应分解后最大对角线序号是 128 的 1/2 和 1/4）的decomposition在$Z$ 和 $T$ 上的表现。为此我们设置了一个底线层级 $Maxlevel - 4 = 10$，并为测试的所有线性变换分别设置起始模数层级，使得他们在完成后都能跌落到相同的底线层级上。从中我们可以看出，对于固定的内层循环数，分解程度越深，对密钥空间的减少就越多，其中对于消耗 4 层级的分解，在外层循环为 8 时我们成功将 $Z$ 的旋转密钥空间降低了百分之 58，在外层循环为 16 时降低了百分之 35。 此外，我们的分解还提升了 $Z$ 和 $T$ 的计算速度，这是因为我们的分解只需要在初始高层级上进行少量的旋转运算后就快速跌落到低层级上，而在低层级上的线性变换的规模又小于原始方案的规模。因此我们的分解方案给出了进一步对空间与速度的取舍方案，使得整个矩阵乘法方案更具弹性。

% More specific experimental data is provided in Table (\ref{tb:ZTData}), using the Z and T transformations in a 128 x 128 matrix multiplication as an example. It shows the key size, maximum runtime space, and execution time for different internal loop counts and diagonal convergence decompositions with different expected maximum non-zero diagonal indexes. It should be noted that the internal loop count ($n_1$) is linearly related to the maximum runtime space of the linear transformation. From the table, we can observe that our diagonal convergence decomposition reduces the number of rotation operations required for the linear transformation as the maximum diagonal index of the decomposed matrices decreases. When maintaining a low internal loop count (e.g., $n_1=8$), the reduction in rotation key count is more than half, allowing for more instances of linear transformations to be parallelized within a fixed space. As a result, our approach achieves higher efficiency, lower runtime space requirements, and reduced rotation key count. Finally, we provide the complete process of ciphertext decomposition matrix multiplication with diagonal convergence decomposition in Algorithm \ref{alg7}, omitting the internal optimization details such as double-hoisting and hoisting, and presenting a high-level illustrative flow.

\begin{table}[tb]
%\centering
\ra{1.5}
    \resizebox{\textwidth}{!}{
    \tabcolsep=0.2cm
    \begin{tabular}{@{}cccccccc@{}}\toprule
    \textbf{Scheme} &\textbf{Matrix} & $n_1$ &   \textbf{MaxDiagNo.} &  \textbf{Rtk(MB)} & \textbf{Ct(MB)} & \textbf{time(s)} & \textbf{Depth} \\
    \midrule
    \text{dh-BSGS} & Z & 8  & 127 & 1755 & 53 & 2.004 & 1 \\
    %\hdashline
    \text{DCDmp} & Z & 8  & 64  & 1080 & 53 & 1.362 & 2 \\
%\hdashline
    \text{DCDmp} & Z & 8  & 32  & \textbf{720}  & 53 & 1.249 & 4 \\
\hdashline
    %\text{dh-Ours} & Z & 8  & 16  & 540  & 53 & 1.707 & 8 \\
    %\hline
    \text{dh-BSGS} & Z & 16 & 127 & 1395 & 113 & 1.286 & 1 \\
%\hdashline
    \text{DCDmp} & Z & 16 & 64  & 1080  & 113 & \textbf{1.094} & 2 \\
%\hdashline
    \text{DCDmp} & Z & 16 & 32  & 900  & 113 & 1.179 & 4 \\
\hline
    \text{dh-BSGS} & T & 8  & $127\cdot 128$ &  1035 & 53 & 1.049 & 1 \\
%\hdashline
    \text{DCDmp} & T & 8  & $64\cdot 128$ &  720 & 53 & 0.701 & 2 \\
%\hdashline
    \text{DCDmp} & T & 8  & $32\cdot 128$ &  \textbf{540} & 53 & 0.652 & 4 \\
\hdashline
    \text{dh-BSGS} & T & 16 & $127\cdot 128$ &  1035 & 113 & 0.725 & 1 \\
%\hdashline
    \text{DCDmp} & T & 16 & $64\cdot 128$ &  900   &  113 & \textbf{0.572} & 2 \\
%\hdashline
    \text{DCDmp} & T & 16 & $32\cdot 128$ &  810   & 113 & 0.613 & 4 \\
    \bottomrule
    \end{tabular}
    }
\caption{Run-time performance Comparison of LinTrans $Z,T$ between original double-hoisting BSGS(dh-BSGS) and our Diagonal Convergence Decomposed scheme(DCDmp). \textbf{Rtk(MB)} represents the total space required for rotation keys, while \textbf{Ct(MB)} represents the total space required for caching hoisted ciphertexts during linear transformations, both measured in MB (megabytes).}  \label{tb:ZTData}
\end{table}

\subsection{Privacy-preserving PCA}
\subsubsection{Communication Model}
Our Privacy-preserving PCA algorithm serves the cloud computing scenario. We describe the communication model on which our algorithm is based in the following.
\begin{itemize}
    \item Encryption of User Data: The user employs homomorphic encryption to encrypt their data, ensuring the confidentiality of the data, and transmits it to the cloud.
    \item Cloud-side Computation: Upon receiving the encrypted data from the user, the cloud service provider conducts the privacy-preserving PCA algorithm on their end, leveraging homomorphic encryption techniques to carry out computations without decrypting the user's data.
    \item Return of Computation Results: Once the computation is completed, the cloud service provider returns the results to the user in encrypted form. The user can then locally decrypt the ciphertext results using their private key to obtain the plaintext results of the computation.
\end{itemize}
Through this communication model, the user's data remains encrypted at all times, and the plaintext is only accessible after decryption on the user's end. 

\subsubsection{Accuracy Measurement}
We provide two metrics to measure the accuracy of our Privacy-Preserving PCA algorithm. The first metric is $R2(X)$, which represents the R2 score between the reconstructed dataset obtained from privacy-preserving PCA and the original dataset. It is demonstrated that a PCA algorithm is considered to have good accuracy if it achieves an $R2(X)$ score larger than 0.3 in \cite{panda2021principal}. % 0.7

The second metric is $R2(V)$, which quantifies the similarity between the principal components computed by privacy-preserving PCA and the ground truth principal components computed using the \textit{np.linalg.eig} function from the \textit{numpy} package in Python. A higher value of $R2(V)$, closer to 1, indicates that the principal components obtained from privacy-preserving PCA are more similar to the ground truth.

%The second metric is $R2(V)$, which directly represents the R2 score between the principal components computed by privacy-preserving PCA and the ground truth principal components computed by the \textit{np.linalg.eig} function in the \textit{numpy} package of Python. $R2(V)$ 的值越接近 1，说明隐私保护 PCA 计算得到的主成分越接近于ground truth。. We obtain the ground truth principal components by computing all the eigenvectors of the covariance matrix using the \textit{np.linalg.eig} function in the \textit{numpy} package of Python.
%我们提供两种用于度量我们的Privacy-Preserving PCA 算法的准确率的指标. 第一种度量指标是 $R2(X)$, It represents the R2 score between the reconstructed dataset obtained from privacy-preserving PCA and the original dataset, which 在 \cite{panda2021principal} 中被认为如果能取得处于 $0.3\sim 0.7$ 之间的 $R2(X)$ 值的 PCA 算法具有较好的准确率。第二种度量指标是 $R2(V)$，它直接代表了由 privacy-Preserving PCA 计算得到的主成分和明文下 ground truth 主成分之间的 $R2$ 分数，其中 ground truth 主成分我们使用 Python numpy 包中的 np.lialg.eig 函数来计算得到协方差矩阵的所有特征向量来得到。

\subsubsection{Experimental Setup}

\paragraph{Modulus Refresh Strategy Selection}

The modulus refresh operation in our privacy-preserving PCA algorithm can be theoretically achieved through the following two approaches based on the cloud service scenario.
\begin{enumerate}
    \item The interactive strategy: the cloud service provider sends the ciphertexts that have reached the maximum depth to the user. The user then decrypts these ciphertexts, re-encrypts them on the maximum modulus level, and sends them back to the cloud for further homomorphic computation. This approach introduces communication overhead and requires the user to listen to the cloud continuously.
    \item The non-interactive strategy: the cloud service provider performs bootstrapping locally on the ciphertexts. While several bootstrapping methods designed for CKKS have been proposed in recent years, their software implementations are still relatively time-consuming. Therefore, the number of modulus refreshes will become one of the dominant factors in the algorithm's time complexity when using this strategy.
    %The non-interactive strategy: cloud service provider performs bootstrapping locally on the ciphertexts. 虽然近年来许多为 CKKS 设计的 Bootstrapping 方法被提出，但他们的软件实现仍然是相对耗时的，因此使用这种模数刷新方法时，模数刷新的次数将成为主宰算法时间复杂度的因素之一。
\end{enumerate}

We use the interactive modulus refresh strategy throughout our experiment since our algorithm consumes a large number of modulus levels and does not yield satisfactory results when bootstrapping routine is utilized to achieve the non-iterative modulus refresh strategy. %FIXME：Create a 脚注。
However, it is worth noting that there exist hardware-accelerated solutions for bootstrapping that can significantly reduce its computational overhead. Our privacy-preserving PCA algorithm may become truly practical under the non-interactive modulus refresh strategy with such hardware optimizations.  \\ 

\paragraph{Parameterization for Underlying Homomorphic Encryption Scheme}

We set up a concrete CKKS instance with parameters $\log{N}=15, b=720,\sigma=3.2,\delta = 40$ as the underlying homomorphic encryption scheme of our privacy-preserving PCA algorithm (see Appendix \ref{apd:A} for the explanation of parameters), ensuring 128-bit security. This instance provides 13 levels of 40-bit moduli and achieves 20 bits of precision.\\

%We set up two concrete CKKS instances to test the performance of our privacy-preserving PCA under the two different modulus refresh strategies above. For the interactive modulus refresh strategy, we initialize a CKKS instance with the set of parameters $\log{N}=15$, $\log{Slots}=14,\log{QP}=720,\sigma=3.2$. This parameter set provides 13 levels of 40-bit moduli. For the non-interactive modulus refresh strategy, we choose the set of parameters $\log{N}=16,\log{Slots}=14,\log{QP}=1078+465,\sigma=3.2$ for the CKKS instance initialization, which is a recommended parameter set provided by the Lattigo library to conduct the bootstrapping. The value $1078$ represents the total bits of moduli used for the bootstrapping circuit, and the remaining $465$ bits provide $9$ valid levels of $45$-bit moduli. Both sets of parameters ensure 128-bit security.

\paragraph{Dataset Preprocessing} 

We conduct our privacy-preserving PCA algorithm on the MNIST \cite{deng2012mnist}, Fashion-MNIST \cite{xiao2017fashion}, and YALE \cite{belhumeur1997eigenfaces} datasets. The preprocessing of the three datasets, including cropping, is performed exactly as described in the work by Panda \cite{panda2021principal}. The images in the MNIST and Fashion-MNIST datasets are cropped to 16x16 pixels, and the Yale samples are converted to grayscale and cropped from 195 x 231 pixels to 16x16 pixels. In the subsequent experiments, we will extract a specific number of samples $s$ from these datasets to construct new datasets for testing our algorithm. All the datasets used in the experiments are divided into submatrices of size $128\times 128$ since our underlying homomorphic encryption scheme offers a plaintext space of $\mathbb{C}^{14}$ where each ciphertext can accommodate exactly one submatrix. \\
%按照我们在第 5.1 节介绍的密文打包策略，我们会使用 $128\times 128$ 的子矩阵来对这些数据集进行划分，这样一来 MNIST 和 Fashion-MNIST 都将由 $496\times 2$ 个子矩阵构成，而 Yale 数据集将由 $2\times 2$ 个子矩阵构成。

\paragraph{Parameterization for Covariance Matrix Computation} 

Our covariance matrix computation is designed to harness the advantages of parallel matrix multiplications. We employ a single thread to compute the covariance matrix in cases where the dataset size is no larger than $2\times 2$ sub-matrices. As for dataset sizes surpassing this threshold, we allocate seven threads to compute the corresponding covariance matrix. The $128\times 128$ matrix multiplications within the covariance matrix computation are implemented by our optimized matrix multiplication algorithm equipped with the hoisting techniques and a Diagonal Convergence Decomposition strategy with an expected maximum non-zero diagonal index of 32. This configuration reduces 35\% and 20\% in the requisite number of rotation keys for the $Z$ and $T$ transformations with an inner loop count of $16$, respectively. Such optimization mitigates the risk of memory saturation, diminishes the communication overhead between the user and the cloud, and decreases computational time. We also implement all three covariance matrix computation optimizations mentioned in Section \ref{sec: Cov}. Specifically, we group every 1400 rows of samples into one segment and apply optimizations 1 and 2 to each segment when the dataset contains more than 1400 rows. The covariance matrices computed from different segments are aggregated to obtain the final result.\\

%In order to achieve such parallelism without encountering memory saturation, we further implemented the Diagonal Convergence Decomposition strategy on the $128\times 128$ matrix multiplications with an expected maximum non-zero diagonal index of 1/4 of the original maximum index. This specific approach results in a reduction of 58\% and 47\% in the requisite number of rotation keys for the transformations of $Z$ and $T$, respectively. \\
%另外，我们的协方差矩阵计算需要用到多线程加速，在我们的设备上，分配给该计算的线程超过 7 后由于内存饱和其速度不再有显著提升，因此我们设定对于数据集规模不大于 $2\times 2$（以 $128\times 128$ 子矩阵作为单位）时，两线程足够计算协方差矩阵，而对于数据集规模超过这个规模的情况，我们采用 $7$ 线程计算协方差矩阵。

%对于 $128\times 128$ 规模的矩阵乘法，我们选择使用期望最大非零对角线序号为原最大序号的 1/4 的 Diagoanl Convergence Decomposition 策略，以将矩阵乘法中 $Z$ 和 $T$ 变换需要的密钥数量分别降低 58\% 和 47%.

\paragraph{Parameterization for Vector Normalisation}

In section 6.1.2, we estimate that performing a covariance matrix transformation on a vector of length $c$ results in a transformed vector with an upper bound length of $b^2\cdot c\cdot d$, where $b$ is the upper bound of all elements in the covariance matrix and $d$ is the number of features in the dataset. Since the upper bound length of the transformed vector will be used to set up the evaluation interval of the iterative InvSRT function, we should not let this upper bound be too big or it will be too expensive to evaluate the InvSRT function across the interval. Thereby, we scale all samples in the preprocessed datasets by a factor of 1/255, resulting in an upper bound $b=1$ for all elements in their covariance matrices.

Following the Lazy Normalization strategy, we set the evaluation interval of the iterative InvSRT function to $[0.1^{-7}, b^2\cdot c\cdot d]$ and use a linear Taylor expansion at point $2^{16}/2+1$ to obtain the initial approximation of the iterative InvSRT algorithm. We determine through experimentation that the iterative InvSRT requires a minimum of $k=12$ basic iterations to achieve the desired accuracy within the evaluation interval $[1,2^{16}]$. In order to make the Lazy Normalization strategy be capable of accommodating up to $l_P$ PowerMethod iterations that involve covariance matrix transformations with a contraction coefficient of $0.5$ while maintaining an error bound $e=1e-5$ and a precision bound $p=1e-3$, PowerMethod simulations are conducted to select the number of iterative InvSRT iterations in each PowerMethod iteration, for $l_P=4$ and $l_P =14$ (see Figure \ref{fig:VecN}). Consequently, we manage to employ an average of $14.5$ iterative InvSRT iterations per PowerMethod iteration for $l_P=4$, and an average of $13.8$ iterative InvSRT iterations per PowerMethod iteration for $l_P=14$.

\begin{comment}
PowerMethod simulations are conducted to select the optimal values for the remaining parameters, including the period $m$ and the additional iterations of iterative InvSRT $k_1$ and $k_2$, with a contraction factor of $0.5$ (see \ref{fig:VecN}). 

From the figure, it can be observed that if the period $m$ is set too large (e.g., $m=16$), the length of the vector may underflow. Although the length of the vector does not underflow for $m=4$ and $m=8$, there is a risk of underflow for the input $(l\cdot S)^2$ of iterative InvSRT during the 16th iteration of PowerMethod. Despite our homomorphic encryption scheme supporting approximately 8 decimal places of precision, we need to consider this underflow risk. Therefore, we ultimately choose the parameter set $m=2$, $k=12$, $k_1=3$, $k_2=6$ to instantiate our Lazy Normalization strategy, which tolerates $16$ PowerMethod iterations involving covariance matrix transformations with a contraction coefficient of $0.5$ up to an error of $1e-5$. 

%PowerMethod simulations are conducted to 来选择其余最优的参数（包括周期 $m$ 和额外 iterative InvSRT 迭代数 $k_1,k_2$），在收缩系数为 $0.5$ 的设定下（see Figure \ref{fig:}）。从图中可以看出，如果把周期设的过长（如 $m=16$），那么向量的长度就可能下溢出，虽然 $m=4,m=8$ 的时候向量的长度并未溢出，但考虑在第 $16$ 轮 PowerMethod 迭代时，iterative InvSRT 的输入 $(l\cdot S)^2$ 就有可能出现下溢出的风险，尽管我们的同态加密方案支持大约 $8$ 位十进制位的精度。因此我们最终选取 $m=2,k=12,k_1=3,k_2=6$ 这组参数来实例化我们的 Lazy Normalization，它可以以小于 $1e-5$ 的误差计算 $16$ 轮含有缩小系数为 $0.5$ 的协方差矩阵变换的 PowerMethod 迭代。    
\end{comment}

\begin{comment}
\begin{figure}[htbp]
\centerline{\includegraphics[scale=0.6]{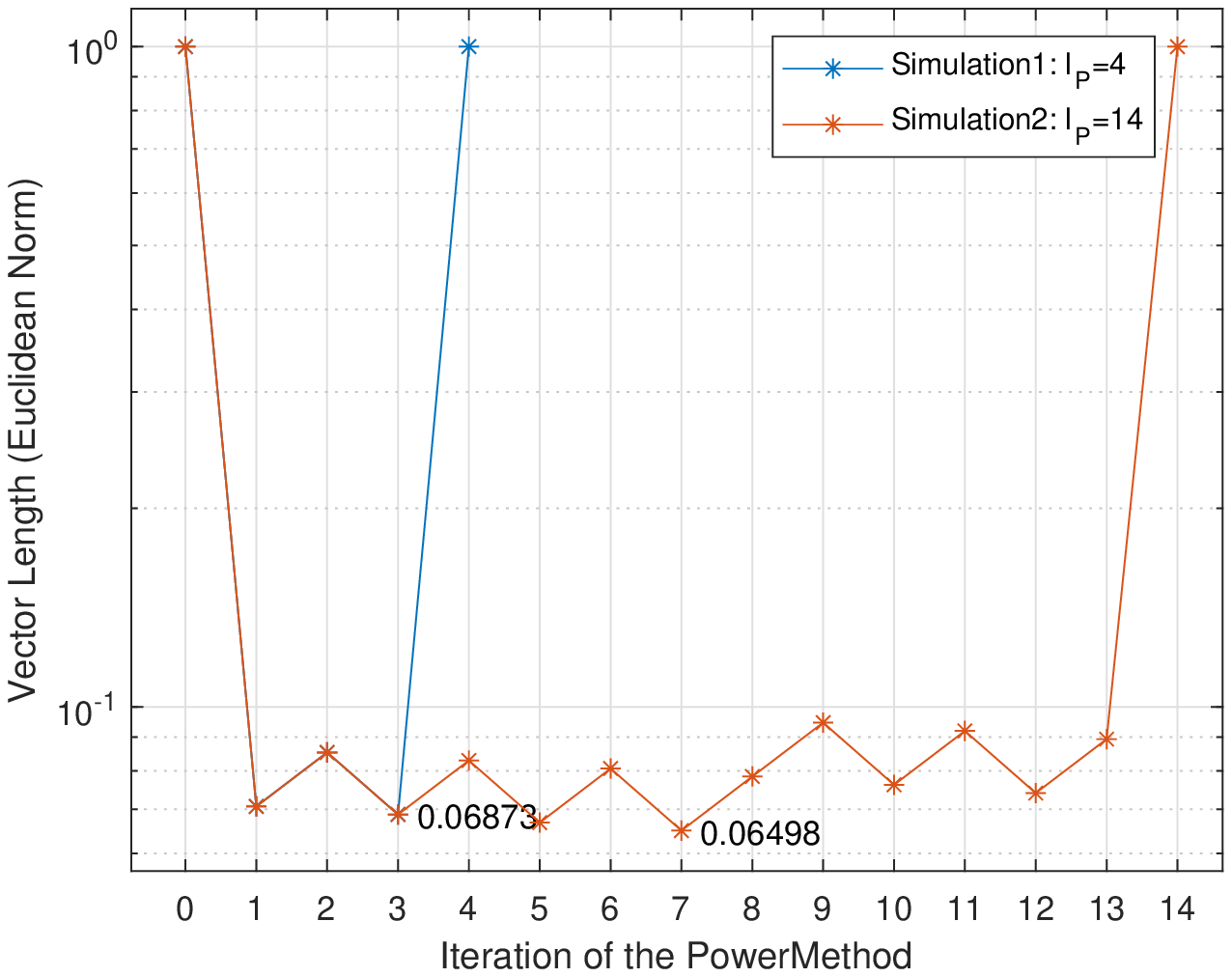}} %%图像路径：pic是文件夹名。
\caption{PowerMethod Simulations for Parameterizing the Lazy Normalization Strategy}
\label{fig:VecN}
\end{figure}    
\end{comment}

\begin{figure}[htbp]
\centering
\begin{minipage}[t]{0.48\textwidth}
\centering
\includegraphics[width=8cm]{VecN.eps}
%\caption{World Map}
\end{minipage}
\begin{minipage}[t]{0.48\textwidth}
\centering
\includegraphics[width=8cm]{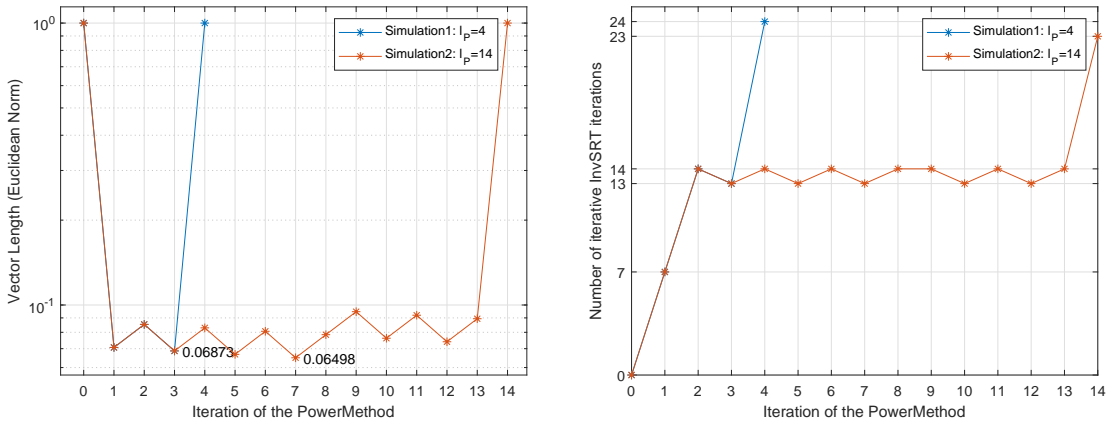}
%\caption{Concrete and Constructions}
\end{minipage}
\caption{PowerMethod Simulations for Vector Normalization Parameterization. The figures presented on the left and right respectively illustrate the evolution of vector lengths and the number of iterative InvSRT iterations per PowerMethod iteration in 4 and 14 rounds of simulations. The simulations are conducted with an error bound of $e=1e-5$ and a precision bound of $p=1e-3$. %左右两图分别展示了 $4$ 轮和 $14$ 轮 PowerMethod simulation 中向量的长度变化与每 PowerMethod iteration 所用的 iterative InvSRT 迭代数，对于误差上界 $e=1e-5$ 以及精度上界 $p=1e-3$。
}
\label{fig:VecN}
\end{figure}

\begin{table}[ht]
\ra{1.3}
    \resizebox{\textwidth}{!}{
    \tabcolsep=0.2cm
    \begin{tabular}{@{}llllllllll@{}} \toprule
    %\hline
$\text{Scheme}$ & $\text{DataSet}$ & $s$ &  $l_E$ & $l_P$ & $l_N$
& $time$(min) & \makecell[l]{$R2(X)$} & \makecell[c]{$R2(V)$}   & Lvs/PM \\ \midrule
%\hline
%1
$\text{\cite{panda2021principal}}$ & $\text{MNIST}$ & 200 & $4$ & $4$ & $6$  & $\text{5.21}$ & \makecell[l]{$0.448$} & $0.047$ &  $77$\\
%\hline
%2
$\text{Ours}$ & $\text{MNIST}$ & 200 & $4$ & $4$ & $14.5$  & $\uline{\text{0.56+2.16}}$ & \makecell[l]{$0.485$} & \makecell[l]{$0.174$} &  $61$ \\
%\hline
%3

$\text{\cite{panda2021principal}}$ & $\text{MNIST}$ & 200 & $4$ & $14$ & $6$   & ${\text{18.42}}$ & \makecell[l]{${0.490}$} & \makecell[l]{${0.196}$} & $77$ \\

$\text{Ours}$ & $\text{MNIST}$ & 200 & $4$ & $14$ & $13.8$   & ${\text{0.58+8.41}}$ & \makecell[l]{$\uline{0.491}$} & \makecell[l]{$\uline{0.349}$} & $54$ \\

%\hline
$\text{\cite{panda2021principal}}$ & $\text{MNIST}$ & 60000 & $8$ & $14$ & $6$ & $\text{1628}$ &  \makecell[l]{$0.306$} & \makecell[l]{$0.445$}  & $77$ \\

$\text{Ours}$ & $\text{MNIST}$ & 60000 & $8$ & $14$ & $13.8$  & $\uline{45+14}$ & \makecell[l]{$\uline{0.317}$} & \makecell[l]{$\uline{0.728}$}  & $54$ \\
\hline
%\hdashline
$\text{\cite{panda2021principal}}$ & $\text{F-MNIST}$ & 200 & $4$ & $4$ & $6$  & $\text{5.3}$ & \makecell[l]{$0.411$} & $0.471$ & $77$\\
%\hline
$\text{Ours}$ & $\text{F-MNIST}$ & 200 & $4$ & $4$ & $14.5$  & $\uline{{0.58+2.23}}$ & \makecell[l]{$\uline{0.478}$} & \makecell[l]{$\uline{0.650}$} & $61$\\ 
%\hline
$\text{\cite{panda2021principal}}$ & $\text{F-MNIST}$ & 60000 & $8$ & $14$ & $6$   & $\text{1628}$ & \makecell[l]{$0.508$} & \makecell[l]{$0.703$}  & $77$ \\

$\text{Ours}$ & $\text{F-MNIST}$ &  60000 & $8$ & $14$ & $13.8$  & $\uline{{45+14}}$ & \makecell[l]{$\uline{0.577}$} & \makecell[l]{$\uline{0.911}$} & $54$\\ 
\hline
%\hdashline
$\text{\cite{panda2021principal} }$ & $\text{Yale}$ & 165 & $6$ & $14$ & $6$ & ${27.5}$ & \makecell[l]{$0.665$} & $0.792$ & $77$\\
%\hline
$\text{Ours}$ & $\text{Yale}$ & 165  & $6$ & $14$ & $13.8$ & ${\uline{0.62+11}}$ & \makecell[l]{$\uline{0.674}$} & \makecell[l]{$\uline{0.917}$} & $54$ \\  \bottomrule
%\hline
    \end{tabular}
    }
\caption{Performance Comparison between privacy-preserving PCA algorithms in \cite{panda2021principal} and Ours under interactive modulus refresh strategy. $l_E$ denotes the desired number of principal components, $l_P$ represents the number of iterations for the PowerMethod, $l_N$ represents the average number of iterative InvSRT, $Lvs/PM$ represents the average modulus levels consumed per PowerMethod iteration, and $time$ denotes the time for the entire computation, where our approach's time consists of two parts: the computation time of the covariance matrix (left) and the computation of principal components, i.e., the PowerMethod and EigenShift algorithms (right). }     \label{tb:FinalData}
\end{table}

\subsubsection{Performance Comparison}

Based on the aforementioned experimental setup, we conduct our privacy-preserving PCA algorithm and compare its performance with the previous work of Panda. \cite{panda2021principal}. Table \ref{tb:FinalData} presents the performance of our privacy-preserving PCA algorithm and the algorithm proposed in \cite{panda2021principal}., both using the interactive modulus refresh strategy.  \\ % Table \ref{tb:FinalData2} specifically showcases the performance of our privacy-preserving PCA algorithm using the non-interactive modulus refresh strategy, which was not considered in the work by Panda et al. 
%Based on the experimental setup described, we conducted our privacy-preserving PCA algorithm and compared its performance with the previous work of Panda et al. \cite{panda2021principal}. Table \ref{tb:FinalData} 中展示了我们的以及 \cite{panda2021principal} 的、使用交互式模数刷新策略的 privacy-preserving PCA 算法的表现。Table \ref{tb:FinalData2} 单独展示了我们的、使用非交互式模数刷新策略的 privacy-preserving PCA 算法的表现，这种策略是 \cite{panda2021principal} 所没有考虑的。

% \noindent \textbf{Interactive Modulus Refresh Strategy}
\paragraph{Efficiency and Scalability}
% \noindent \textbf{Efficiency and Scalability}

We can observe that for the same number of principal components and PowerMethod iterations, our algorithm is approximately 1.9 times faster in terms of the overall runtime. Specifically, our PowerMethod runs at a speed of approximately 8.2 seconds per iteration, which is around 2.3 times faster than the 19.6 seconds reported by Panda per iteration. This notable improvement can be attributed to a lower number of ciphertext rotations achieved by our PowerMethod (see Theorem \ref{thm:PMCplx}) compared to their reported complexity of $O(l_P\cdot (2\log{d}\cdot \frac{s\cdot d}{N/2} + \log{\frac{N/2}{d}}))$ (where $s$ remains the number of samples and $d$ the number of features). This advantage becomes more apparent when dealing with larger sample sizes. As an example, we are able to compute the top 8 eigenvectors of a dataset with 60,000 samples in approximately 1 hour using 7 threads, which is estimated to be 8 times faster than the work by Panda using the same number of threads. Such scalability is not reported in previous research. \\ %因此 Our privacy-preserving PCA algorithm 同时还展现出更好的可扩展性：我们使用 7 线程在大约 1 小时内计算出具有60000个样本的数据集的 8 个特征向量，我们估计这大约是 Panda \textit{et al.} 的方案使用相同线程数时10倍的速度。

%我们的 PowerMethod 的密文旋转次数 （see Theorem \ref{thm:PMCplx}）要优于他们报告的密文旋转次数 $O(l_P\cdot (2log(d)\frac{s\cdot d}{\text{Slots}} + \log{\frac{Slots}{d}})$（其中 $s$ 是样本数量，$d$ 是特征数），在数据集样本数量较大时。

\paragraph{Accuracy and Practicality}

Our privacy-preserving PCA demonstrates an average improvement in $R2(X)$ accuracy of $0.05$ and $R2(V)$ accuracy of $0.195$ compared to the results reported by Panda for the same number of eigenvectors and PowerMethod iterations. This is attributed to the higher precision of our vector normalization strategy (Lazy Normalization) compared to theirs. They simply fix the evaluation interval of their iterative InvSRT to $[0.001,750]$, use a linear initial approximation function $y = -0.00019703x+0.14777278$, and employ 6 iterations of iterative InvSRT in each PowerMethod iteration. If we evaluate the effectiveness of their vector normalization strategy using the PowerMethod Simulation we employ in Lazy Normalization, their strategy only accommodates $4$ PowerMethod iterations for an error bound of $1e-2$ involving covariance matrix transformations with a contraction coefficient of $1$. Furthermore, their approach requires pre-scaling the datasets to match the evaluation interval of the vector normalization, posing a high probability of overflow (or underflow) when pre-scaling is done inappropriately by users. In contrast, our strategy determines the evaluation interval based on the upper bound of the dataset elements, eliminating the risk of such failures and demonstrating better practicality.\\ 

%更不理想的是，他们需要用户事先将数据集缩放至与求值区间相匹配的规模才能正确地执行 privacy-preserving PCA 算法，这就存在数据集事先未得到恰当缩放，中间值溢出而导致算法执行失败的风险。而我们的策略是根据数据集中元素的上界来确定求值区间，杜绝了这种失败的风险，显然具有更好的实用性。

%Our privacy-preserving PCA shows an average improvement of $0.18$ in terms of the $R2(X)$ score compared to the results reported by Panda et al. for the same number of eigenvectors and PowerMethod iterations. 这是由于我们的向量规范化策略（Lazy Normalization）要比他们有更高的精确度。Panda 等人的工作缺乏像我们这样系统的参数化向量规范化操作的策略，他们简单地将他们的 iterative InvSRT 的求值区间固定为 $[0.001,750]$、initial approximation function 固定为线性函数 $y = -0.00019703x+0.14777278$, 并在每轮 PowerMethod 迭代中设置 6 轮 iterative InvSRT 迭代。如果用我们在 Lazy Normalization 中使用的、对向量规范化有效性进行衡量的方法来衡量他们的向量规范化策略，那么他们的策略 only accommodates $4$ PowerMethod iterations that involve covariance matrix transformations with a contraction coefficient of $1$ with an error bound of $1e-2$.

%However, we observe that the $R2(V)$ scores vary across different datasets, even with the same number of PowerMethod iterations. This variation is due to the convergence speed of the PowerMethod, which depends on the proximity of the dominant eigenvalues of the covariance matrix. When the dominant eigenvalues are close in value, the iteration speed of the PowerMethod tends to slow down, requiring more iterations to obtain more accurate eigenvectors.
%我们的 $R2(X)$ 分数比Panda \textit{et al} 报告的提升了平均 $0.18$，在相同的目标特征向量数、PowerMethod 迭代数上。然而，从 $R2(V)$ 中可以看出，相同的 PowerMethod 迭代次数在不同的数据集上有不同的表现，这是由于 PowerMethod 的迭代速度取决于协方差矩阵的 dominant 特征值之间的接近度，如果两个 dominant 特征值相差很小的时候，PowerMethod 的迭代速度就会相对变慢，更多的 PowerMethod 需要被投入以获得更准确的特征向量。

\paragraph{Modulus Utilization}

Additionally, our modulus utilization is higher than the approach proposed by Panda. For a total of 4 PowerMethod iterations, our approach requires an average of 61 modulus levels for each PowerMethod iteration with an average of 14.5 iterative InvSRT iterations, while their approach requires an average of 77 modulus levels to complete one PowerMethod iteration with only 6 iterative InvSRT iterations. This discrepancy arises due to two factors: (i) We composite two iterative InvSRT iterations into one and employ more ciphertext multiplications to save 1 modulus level per iterative InvSRT iteration. (ii) their PowerMethod implementation has a multiplication depth that is dependent on the dataset's number of features. In contrast, our approach's multiplication depth is independent of the dataset's features. As a result, we have more flexibility in performing additional rounds of the iterative InvSRT algorithm to achieve a more accurate vector normalization. %, thereby obtaining results closer to those computed by the un-encrypted routine (i.e. having a much smaller $error$). \\

\section{Conlusion and further discussion}
In this paper, we have presented a novel and efficient privacy-preserving PCA scheme using homomorphic in the context of cloud computing, which has addressed several challenges presented in previous approaches. Firstly, we have successfully tackled the obstacle of homomorphically computing the covariance matrix, a limitation in previous methods. This is achieved by designing an efficient homomorphic covariance calculation algorithm, leveraging our optimized matrix multiplication as a core component, and utilizing parallel computations of multiple matrix multiplication instances to enhance speed.  Secondly, we have introduced an efficient homomorphic circuit for the PowerMethod algorithm, which incorporates a universal vector normalization strategy to address the issue of potential accuracy loss. The experimental results demonstrate that our approach surpasses state-of-the-art methods, offering reduced runtime, improved accuracy, and enhanced scalability.
% In conclusion, 我们首先解决了过去方案中无法同态地计算协方差矩阵的障碍，通过设计一个高效的同态协方差计算算法，其中该算法以我们的优化的矩阵乘法作为核心组件，利用了多矩阵乘法并行计算来提升速度。其次，我们为用于计算主成分的 PowerMethod 算法设计了高效的同态求值算法，其中包含了一个通用向量规范化策略及其系统的参数化方法以解决过往方案中潜在的结果不精确的问题。我们的方案对比 state-of-art 方案具有更少的运行时间，更高的准确度与更好的可扩展性。

There are several topics for further discussion, including but not limited to:
\begin{enumerate}
    \item A significant portion of the complexity in homomorphic matrix multiplication comes from row and column transformations. Recent work by Jang \textit{et al.} introduced optimizations for matrix multiplication that effectively reduce the complexity of these transformations \cite{jang2022privacy}. Their optimizations could be complementary to ours, and combining them may lead to even better performance.
    \item The performance of PowerMethod depends on the characteristics of the dataset itself, especially the distribution of eigenvalues. If the eigenvalues of the covariance matrix are close to each other, the accuracy of PowerMethod may be reduced. It may be worth considering alternative PCA algorithms to achieve more accurate results. Notably, our homomorphic covariance matrix computation algorithm can serve as a black box function for alternative privacy-preserving PCA algorithm design. %后续研究讨论可以利用我们的 homomorphic covariance matrix computation algorithm ，借助显式计算好的协方差矩阵来做 PCA 算法的进一步设计与改良。 
    \item Our proposed solution can serve as a candidate for multi-party privacy-preserving PCA. Specifically, our approach can be migrated to multi-party homomorphic encryption such as the work proposed by Mouchet \textit{et al.} \cite{mouchet2021multiparty} and provide possible solutions for both vertical and horizontal federated privacy-preserving PCA scenarios. %我们的方案可以作为多方隐私保护 PCA 的候选方案。具体来说，我们的方案可以被移植到多方同态加密上，并能支持纵向和横向的多方隐私保护 PCA。
\end{enumerate}

\begin{comment}
In conclusion, we first optimized the key space for homomorphic matrix multiplication, reducing the total number of non-zero diagonal vectors and improving the efficiency of rotation operations. We designed circuits for computing the covariance matrix using homomorphic matrix multiplication. Furthermore, we proposed a new homomorphic PowerMethod circuit that takes advantage of the symmetry of the covariance matrix, allowing the linear transformations between the eigenvectors and the covariance matrix to alternate along different axes. We also provided an analysis of the normalization process and introduced a lazy normalization method using the InvSRT iteration algorithm, which offers flexibility and adaptability in PCA parameter deployment.  
\end{comment}

\section{Acknowledgements}
This research is supported by the National Key R\&D (Research and Development) Program of China 2021YFF0704102. 

\bibliographystyle{alpha}
\bibliography{sample}

\appendix
\section{Detailed Information of Full-RNS CKKS scheme}\label{apd:A}
We use $\Phi_M(X)$ to denote the $M$-th cyclotomic polynomial and denote its $N=\varphi(M)$ roots as $\zeta_1,\zeta_3,\dots,\zeta_{\varphi(M)}$, where $\zeta_i =\zeta_1^{i}$. The symbol ${R}$ represents the cyclotomic polynomial ring $\mathbb{Z}[X]/\Phi_M(X)$. Let $Q_L=q_0\times q_1\times \dots \times q_L$ be the product of prime numbers $q_i$. The elements in ring ${R}_{Q_L}$ can be uniquely represented in the RNS field as $R_{q_0}\times R_{q_1} \times \dots \times R_{q_L}\cong R_{Q_L}$. The introduction of full-RNS CKKS can be divided into the following four modules. 

%\noindent {\large \textbf{General Module}}
\subsection{General Moudle}

This module includes the parameterization for full-RNS CKKS, the encoding method to convert a complex vector to a native plaintext form, the encryption method to convert plaintext to ciphertext, as well as the key switching technique to change the decryption key associated with a specific ciphertext.
 
\noindent \textbf{Parameter Generation} $\text{Setparams}(N,b,\sigma)$ sets $M=2N$ and selects the $M$-th cyclotomic polynomial $\Phi_M(X)=X^N+1$. Two sets of prime numbers $\{q_i|i=0,\dots,L\}$ and $\{p_i|i=0,\dots,\alpha-1\}$ are chosen such that their products are at most $b$ bits. These primes satisfy ${q_i,p_i\equiv 1 \text{ mod }2N}$. Let $Q_L=\prod_{i=0}^L q_i$ and $P=\prod_{i=0}^{\alpha} p_i$. The ring $R_{Q_L}= \mathbb{Z}[X]_{Q_L}/\Phi_M(X)$ is referred to as the plaintext space of the CKKS scheme. The distribution over $R$ for generating the secret key is denoted as $R\leftarrow \chi_s$, and the error distribution over $R$ as $R\leftarrow \chi_e$, which is a truncated discrete Gaussian distribution with standard deviation $\sigma$.

\noindent \textbf{Encoding:} $\text{Encode}(\mathbf{z},\Delta,n,\ell)$ takes a message vector $\mathbf{z}\in \mathbb{C}^{n}$, where $1\leq n <N$ and $n$ divides $N$, and converts it to a plaintext polynomial $m\in R_{Q_\ell}$. To prevent a significant precision loss in the conversion, a scaling factor $\Delta$ is applied to the message. For any plaintext in $R_{Q_\ell}$, denote $\ell$ as its modulus level.

\noindent \textbf{Decoding:} $\text{Decode}(m,\Delta,n,\ell)$ takes a plaintext polynomial $m\in {R}_{Q_\ell}$ as input and converts it to a message vector $\mathbf{z}\in \mathbb{C}^{n}$, where $1\leq n <N$ and $n$ divides $N$. The decoding process is the inverse operation of encoding. 

\noindent \textbf{Secret Key Generation:} $\text{SecKeyGen}(\cdot)$ samples $s\leftarrow \chi_s$ and outputs the secret key $s$.

\noindent \textbf{Public Key Generation:} $\text{PubKeyGen}(s)$ samples $e\leftarrow \chi_e$ and $a\in_u R_{Q_L}$, and outputs $(-as+e,a)$.

\noindent \textbf{Encryption:} $\text{Enc}(m,pk)$ takes a plaintext polynomial $m\in {R}_{Q_\ell}$ and a public key $pk\in R^2_{Q_L}$ as input. It samples $a\in {R}_{Q_\ell}$  and outputs $(a\cdot pk_0,a\cdot pk_1) + (m+e_0,e_1)\in {R}^2_{Q_\ell}$ as a ciphertext encrypting $m$. For any ciphertext in ${R}^2_{Q_\ell}$, denote $\ell$ as its own modulus level.

\noindent \textbf{Decryption:} $\text{Dec}(c,sk)$ takes a ciphertext $c\in R_q^2$ and a secret key $sk$ as input and outputs a plaintext polynomial $m'\in {R}$. It computes and outputs $m' = c_0 + c_1\cdot sk\in R_{Q_\ell}$.

%Besides the public key and secret key for encryption and decryption, switching keys are required for some homomorphic evaluation routines. They are used to convert a ciphertext under secret $s$ to the ciphertext under another secret $s'$ without changing its underlying plaintext. 

\noindent \textbf{Switch Key Generation:} $\text{SwitchKeyGen}(s,s',\mathbf{b})$ for $\mathbf{b}\in \mathbb{Z}^\beta$ a vector representation of a certain integer decomposition basis, \text{SwitchKeyGen} takes in $s,s'$ and outputs a vector as the switching key: $swk_{s,s'} = (swk^{(0)}_{s,s'},\dots, swk^{(\beta-1)}_{s,s'})$ where $swk^{(i)}_{s,s'} \in R_{PQ_L}^2$. 

%\noindent \textbf{Relinearization Key Generation} $\text{RelinKeyGen}(s,s^2,\mathbf{b})$ $\mathbf{b}$ is a vector representing a certain integer decomposition basis with $\beta$ units. The algorithm computes $\text{rlk} =\left((-a_0s^2+sb^{0}P+e_0,a_0),\dots,(-a_\beta s^2+sb^{\beta}P+e_\beta,a_\beta)\right)$, where for $0\leq i \leq \beta$, $a_i$ is sampled from $R_{PQ_L}$, and $e_i\leftarrow \chi_e$.

%\textbf{Rotation Key Generation:} $\text{RotKeyGen}(s,s^{5^k},\mathbf{b})$ generates rotation keys. Here, $\mathbf{b}$ is a vector representing a certain integer decomposition basis with $\beta$ units. The algorithm computes $\text{rtk}*k =\left((-a_0s^{5^k}+sb^{0}P+e_0,a_0),\dots,(-a*\beta s^{5^k}+sb^{\beta}P+e_\beta,a_\beta)\right)$, where for $0\leq i \leq \beta$, $a_i$ is sampled from $R_{PQ_L}$, and $e_i\leftarrow \chi_e$.

\noindent \textbf{Key Switching:} $\text{KeySwitch}(c,swk_{s,s'})$ takes a polynomial $c=(a,b)\in R_{Q_\ell}^2$ and a switching key $swk_{s,s'}$ as input. It first decomposes $a$ with respect to the decomposition basis $\mathbf{b}$ of $swk_{s,s'}$, i.e., $a = \left< \mathbf{a},\mathbf{b} \right>$. Then, it computes and returns $(a_0,a_1) = \lfloor P^{-1}\cdot \left< \mathbf{a},swk_{s,s'} \right> \rceil\text{ mod } Q_\ell$. 

\subsection{Addition Module}
This part introduces the function provided by full-RNS CKKS to achieve homomorphic addition between ciphertexts, which induces the underlying plaintexts to perform Hadamard addition.

\noindent \textbf{Plaintext Addition:} $\text{PAdd}(c,m)$ takes a plaintext polynomial $m\in {R}_{Q_\ell}$ and a ciphertext $c\in {R}_{Q_\ell}^2$ as input, both having the same scaling factor. It outputs a ciphertext $c' = c+(m,0)$.

\noindent \textbf{Ciphertext Addition:} $\text{Add}(c_1,c_2)$ takes two ciphertexts $c_1,c_2 \in {R}_{Q_\ell}^2$ with the same scaling factor as input and outputs $c_1+c_2$. 

%\noindent {\large \textbf{Multiplication Module}}
\subsection{Multiplication Module}
The multiplication module contains the functions to achieve homomorphic multiplication among ciphertexts and plaintexts, which induces the underlying plaintexts to perform the Hadamard product.
% \textbf{Constant Real Multiplication:} $\text{CMult}(c,a\in \mathbb{R},\Delta')$: For a ciphertext $c\in \mathcal{R}_{Q_\ell}^2$ with scaling factor $\Delta$, compute and return $\lfloor \Delta' a \rceil\cdot c$. Here, $c$ has a scaling factor of $\Delta\Delta'$, and the underlying plaintext is multiplied by $a$.

\noindent \textbf{Plaintext Multiplication:} $\text{PMult}(c,m)$: Given a plaintext polynomial $m\in {R}_{Q_\ell}$ and a ciphertext $c\in {R}_{Q_\ell}^2$ with scaling factors $\Delta$ and $\Delta'$ respectively, compute and return $c' = (c_0\cdot m , c_1\cdot m)$. The resulting ciphertext $c'$ has a scaling factor of $\Delta\Delta'$.

\noindent \textbf{Ciphertext Multiplication:} $\text{Mult}(c_1,c_2)$: For ciphertexts $c$ and $c'$ with scaling factors $\Delta$ and $\Delta'$ respectively, compute $(d_0,d_1,d_2) = (c_0c_0',c_0c_1'+c_1c_0',c_1c_1')$ and return $d=(d_0,d_1)+\text{SwitchKey}(d_2,\text{rlk})\in {R}_{Q_\ell}^2$ as the resulting ciphertext, where $\text{rlk}$ is a switching key $swk_{s,s^2}$. The resulting ciphertext $d$ has a scaling factor of $\Delta\Delta'$.

A rescale operation is commonly used to transfer a ciphertext from a ring with a large modulus to a ring with a smaller modulus, reducing the expansive scaling factor in the ciphertext and limiting the error introduced by multiplications.

% \noindent \textbf{Level Dropping:} $\text{Drop}(c,k)$: Given a ciphertext $c\in {R}_{Q_\ell}^2$, return $c \in {R}^2_{Q_{\ell-k}}$.

\noindent \textbf{Rescaling:} $\text{Rescale}(c)$: For a ciphertext $c\in {R}_{Q_\ell}^2$ with scaling factor $\Delta$, return $\lfloor q_\ell^{-1}\cdot c \rceil$. The resulting ciphertext has a scaling factor of $\Delta/q_{\ell}$.

% \textbf{Set Scaling Factor:} $\text{SetScale}(c,\Delta')$: For a ciphertext $c\in \mathcal{R}_{Q_\ell}^2$ with scaling factor $\Delta$, return $\text{Rescale}(\text{CMult}(c,\Delta'/\Delta ,q_{\ell}))$. The resulting ciphertext has a scaling factor of $\Delta'$ and belongs to $\mathcal{R}^2_{Q_{\ell-1}}$.
% However, for a given set of modulus products $Q_L$, the homomorphic operations are limited to a multiplication depth of at most $L$ because rescaling consumes the modulus level of ciphertexts. Once the modulus level of a ciphertext reaches 0, it can no longer participate in homomorphic multiplication operations. To perform operations with a greater multiplication depth, we need to conduct an operation called modulus refresh which raises the modulus level of the ciphertext. The modulus refresh can be achieved either by decrypting and re-encrypting the ciphertexts or by applying the bootstrapping function to the ciphertexts. %However, 对于给定的一组模数乘积 $Q_L$，CKKS 只能支持乘法深度至多为 $L$ 的同态运算，因为 Rescaling 需要消耗的密文的模式层级是有限的，一旦一个密文的模式层级降至 0，那么它无法再参与同态乘法运算。要想进一步进行乘法深度更大的运算，我们需要将低模数层级的密文刷新至较高的模数层级，这可以通过解密，重新加密实现，也可以通过对密文应用自举函数实现。
% However, since the modulus is finite, once it is depleted, the scheme can no longer use rescaling. Therefore, we need techniques to bring the ciphertext back to a ring with a larger modulus, which is the purpose of bootstrapping \cite{cheon2018bootstrapping}.

\noindent \textbf{Bootstrapping:} $\text{Bootstrap}(c)$: Bring a ciphertext $c\in R^2_{Q_\ell}$ back to $R^2_{Q_{L-k}}$, where $k$ is the multiplication depth of the bootstrapping circuit.  

%We refer to the bootstrapping operation and the process of decrypting and re-encrypting (re-encryption) as modulus refresh operations, as they both support transferring the ciphertext from a lower-level modulus ring to a higher-level modulus ring. \\ %We refer bootstrapping and re-encryption operations collectively as \textbf{modulus refresh} operations, as they both bring the ciphertext from a lower-level modulus ring to a higher-level modulus ring. 

%\noindent {\large \textbf{Rotation Module}}
\subsection{Rotation Module}
The rotation module includes the functions to achieve homomorphic position shifting in a ciphertext. Specifically, full-RNS CKKS allows us to homomorphically rotate the position of the components of the underlying plain complex vector in a ciphertext. 

\noindent \textbf{Rotation:} $\text{Rot}(c_1,k)$: Given a ciphertext $c\in {R}_{Q_\ell}^2$ and a rotation key $\text{rtk}_k$, compute and output $(c_0^{5^k},0)+\text{SwitchKey}(c_1^{5^k},\text{rot}_k)$, where $\text{rot}$ is a switching key $swk_{s,s^{5^k}}$. This ciphertext represents the result of rotating all components of the message vector $\mathbf{z}\in \mathbb{C}^{N/2}$ encrypted by $c$ to the left by $i$ steps (or positions), denoted as $\mathbf{z}' = \rho(\mathbf{z}; i)$. \\

\section{Proofs for the Vector Normalization Strategy} \label{apd:B}

\subsection{Amplification Upper Bound of the Shifted Covariance}

\begin{theorem}\label{thm: EigShiftnoBig}
    For the $1$-shifted covariance variance matrix $\Sigma' = \Sigma - \lambda\mathbf{v}\mathbf{v}^T$ of the covariance matrix $\Sigma$, where $\mathbf{v}$ is the normalized approximate dominant eigenvector of $\Sigma$ obtained using the PowerMethod, and $\lambda=\frac{\left<\Sigma\mathbf{v},\mathbf{v}\right>}{\left<\mathbf{v},\mathbf{v} \right>}$, the matrices $\Sigma'$ and $\Sigma$ have the same scaling upper bound. In other words, the amplification effect of the transformation $\Sigma'$ on a vector does not exceed the amplification effect of the transformation $\Sigma$ on the same vector. %对于协方差矩阵 $\text{Cov}$ 的 $1$-shifted 协方差方差矩阵 $\text{Cov'} = \text{Cov} - \lambda\mathbf{v}\mathbf{v}^T$，其中 $\mathbf{v}$ 是用 PowerMethod 求得的 $\text{Cov}$ 的规范化的近似 dominant eigenvector，$\lambda=\frac{\left<\text{Cov}\mathbf{v},\mathbf{v}\right>}{\left<\mathbf{v},\mathbf{v} \right>}$，那么 $\text{Cov}'$ 和 $\text{Cov}$ 具有相同缩放上界，即 $\text{Cov}'$ 变换对一个向量的放大效果不会超过 $\text{Cov}$ 变换对相同向量的放大效果。 
\end{theorem}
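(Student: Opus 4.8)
The plan is a short spectral-norm argument, but the first thing I would do is pin down what ``scaling (amplification) upper bound'' is supposed to mean. For the PowerMethod length analysis the quantity attached to a matrix $A$ is its operator norm $\|A\|_2 = \sup_{\mathbf{w}\neq\mathbf{0}} \|A\mathbf{w}\|_2 / \|\mathbf{w}\|_2$, i.e.\ the worst-case amount by which $A$ can stretch a vector, so the statement to prove is $\|\Sigma'\|_2 \le \|\Sigma\|_2$. I would point out explicitly that the literal pointwise reading ``$\|\Sigma'\mathbf{w}\|_2 \le \|\Sigma\mathbf{w}\|_2$ for every $\mathbf{w}$'' is false --- for a $\mathbf{w}$ orthogonal to $\Sigma\mathbf{v}$ but not to $\mathbf{v}$ the shift can strictly increase the output length --- so the operator-norm reading is the one that is both true and useful.

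Next I would invoke the structure of a covariance matrix: $\Sigma$ is symmetric positive semidefinite, so all its eigenvalues lie in $[0,\lambda_1]$ with $\lambda_1 = \lambda_{\max}(\Sigma) = \|\Sigma\|_2 = \max_{\|\mathbf{w}\|_2=1}\langle\Sigma\mathbf{w},\mathbf{w}\rangle$. Since $\mathbf{v}$ is a unit vector and $\lambda = \langle\Sigma\mathbf{v},\mathbf{v}\rangle/\langle\mathbf{v},\mathbf{v}\rangle$ is a Rayleigh quotient of $\Sigma$, this already gives $0 \le \lambda \le \lambda_1$, a fact that holds regardless of how accurate an approximate eigenvector $\mathbf{v}$ actually is. Because $\Sigma' = \Sigma - \lambda\mathbf{v}\mathbf{v}^T$ is again symmetric, $\|\Sigma'\|_2 = \max_{\|\mathbf{w}\|_2=1}\lvert\langle\Sigma'\mathbf{w},\mathbf{w}\rangle\rvert$, and it suffices to bound this by $\lambda_1$.

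The core estimate is a two-sided bound on the Rayleigh quotient of $\Sigma'$. For any unit vector $\mathbf{w}$,
\begin{align}
\langle\Sigma'\mathbf{w},\mathbf{w}\rangle = \langle\Sigma\mathbf{w},\mathbf{w}\rangle - \lambda\,\lvert\langle\mathbf{v},\mathbf{w}\rangle\rvert^2 .
\end{align}
For the upper bound, $\lambda \ge 0$ and $\lvert\langle\mathbf{v},\mathbf{w}\rangle\rvert^2 \ge 0$ give $\langle\Sigma'\mathbf{w},\mathbf{w}\rangle \le \langle\Sigma\mathbf{w},\mathbf{w}\rangle \le \lambda_1$. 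For the lower bound, $\langle\Sigma\mathbf{w},\mathbf{w}\rangle \ge 0$ (positive semidefiniteness), $\lvert\langle\mathbf{v},\mathbf{w}\rangle\rvert \le \|\mathbf{v}\|_2\|\mathbf{w}\|_2 = 1$ (Cauchy--Schwarz), and $\lambda \le \lambda_1$ give $\langle\Sigma'\mathbf{w},\mathbf{w}\rangle \ge -\lambda\,\lvert\langle\mathbf{v},\mathbf{w}\rangle\rvert^2 \ge -\lambda \ge -\lambda_1$. Hence $\lvert\langle\Sigma'\mathbf{w},\mathbf{w}\rangle\rvert \le \lambda_1$ for every unit $\mathbf{w}$, so $\|\Sigma'\|_2 \le \lambda_1 = \|\Sigma\|_2$. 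I would close by noting the practical consequence: the crude entrywise bound on $\|\mathrm{Cov}\cdot\mathbf{v}\|_2$ from Section~6.1.2 therefore remains valid for the shifted matrix, so the evaluation interval of the iterative InvSRT need not be recomputed after each EigenShift.

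The only genuinely delicate point is the interpretation step in the first paragraph; everything afterward is elementary linear algebra. I would also flag that positive semidefiniteness of $\Sigma$ is doing real work here: it is what forces $\lambda \ge 0$ (so the subtracted rank-one term cannot push the quotient above $\lambda_1$) and $\langle\Sigma\mathbf{w},\mathbf{w}\rangle \ge 0$ (so the quotient cannot fall below $-\lambda_1$), and it is exactly why a two-sided estimate is required --- $\Sigma'$ itself is in general not positive semidefinite, since the shift is designed to annihilate the dominant eigenvalue and may overshoot slightly when $\mathbf{v}$ and $\lambda$ are only approximate.
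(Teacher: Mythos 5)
Your proof is correct, and its core is the same key fact the paper's proof rests on: $\lambda$ is a Rayleigh quotient of $\Sigma$ at a unit vector, hence $0\le\lambda\le\lambda_1=\|\Sigma\|_2$ (the paper establishes this by expanding $\mathbf{v}$ in the eigenbasis of $\Sigma$; you invoke it directly, together with positive semidefiniteness). Where you genuinely go further is in what you do with that fact. The paper's proof stops at ``we only need to prove $\lambda\le\lambda_1$'' and never spells out why this controls the amplification of $\Sigma'$; that reduction is immediate only when $\mathbf{v}$ is an exact eigenvector (then $\Sigma'$ has eigenvalues $\lambda_1-\lambda,\lambda_2,\dots,\lambda_n$), whereas in the actual setting $\mathbf{v}$ is only the approximate output of the PowerMethod. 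You supply the missing step for that case: the two-sided estimate $-\lambda_1\le\langle\Sigma\mathbf{w},\mathbf{w}\rangle-\lambda\,|\langle\mathbf{v},\mathbf{w}\rangle|^2\le\lambda_1$ for unit $\mathbf{w}$, using $\Sigma\succeq 0$, $0\le\lambda\le\lambda_1$ and Cauchy--Schwarz, which gives $\|\Sigma'\|_2\le\|\Sigma\|_2$ even though $\Sigma'$ need not be positive semidefinite. Your preliminary clarification is also a real sharpening rather than pedantry: the literal pointwise reading of the statement fails for approximate $\mathbf{v}$ (for $\mathbf{w}\perp\Sigma\mathbf{v}$ with $\langle\mathbf{v},\mathbf{w}\rangle\neq 0$ one gets $\|\Sigma'\mathbf{w}\|_2^2=\|\Sigma\mathbf{w}\|_2^2+\lambda^2\langle\mathbf{v},\mathbf{w}\rangle^2$), and the operator-norm reading is exactly what the Lazy Normalization analysis needs so that the InvSRT evaluation interval computed for $\Sigma$ remains valid after EigenShift. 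In short: same key lemma, but your argument is the complete and correctly interpreted one, while the paper's buys brevity by leaving the reduction implicit.
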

\begin{proof}
    Let the covariance matrix have $n$ eigenvectors: $x_1,x_2,\dots,x_n$, corresponding to eigenvalues $\lambda_1,\lambda_2,\dots,\lambda_n$. We only need to prove that $\lambda \leq \lambda_1$. When $\mathbf{v}$ is equal to $\mathbf{x}_1$, the inequality is obviously true. When $\mathbf{v}$ is not equal to $\mathbf{x}_1$, it can be represented as $\mathbf{v} = c_1\mathbf{x}_1+c_2\mathbf{x}_2+\dots+c_n\mathbf{x}_n$, where $c_1^2 + c_2^2+\dots +c_n^2 = 1$. In this case, $\left<\Sigma\mathbf{v},\mathbf{v}\right> = \lambda_1c_1^2+\lambda_2c_2^2+\dots +\lambda_nc_n^2 \leq \lambda_1(c_1^2+\dots +c_n^2) = \lambda_1$, while $\left<\mathbf{v},\mathbf{v} \right>$ is always equal to 1. Therefore, $\lambda \leq \lambda_1$ holds true. %设协方差矩阵有 $n$ 个特征向量：$x_1,x_2,\dots,x_n$，对应特征值 $\lambda_1,\lambda_2,\dots,\lambda_n$，那么我们只需要证明 $\lambda \leq \lambda_1$ 即可。当 $\mathbf{v}$ 即为 $\mathbf{x}_1$ 时，不等式显然成立。当 $\mathbf{v}$ 不为 $\mathbf{x}_1$ 时，它一定能被表示成 $\mathbf{v} = c_1\mathbf{x}_1+c_2\mathbf{x}_2+\dots+c_n\mathbf{x}_n$，其中 $c_1^2 + c_2^2+\dots c_n^2 = 1$。那么此时 $\left<\text{Cov}\mathbf{v},\mathbf{v}\right> = \lambda_1c_1^2+\lambda_2c_2^2+\dots +\lambda_nc_n^2 \leq lamda_1(c_1^2+\dots +c_n^2) = lamda_1$，而 $\left<\mathbf{v},\mathbf{v} \right>$ 恒为 1，因此 $\lamda \leq \lambda_1$ 成立。
\end{proof}
\subsection{Proof of Effectiveness for the PowerMethod Simulation}
\begin{theorem}\label{thm: ContractionCoeff}
    If a specific instance of the Lazy Normalization strategy can maintain an error bound of $e$ under a PowerMethod simulation with a Contraction Coefficient of $S$ for $t$ iterations, then for any covariance matrix with eigenvalues all greater than $S$, this instance guarantees that the length of the output approximate eigenvector within $t$ PowerMethod iterations is close to $1$ with an error no bigger than $e$.
\end{theorem}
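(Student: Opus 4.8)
The statement compares two length trajectories over the $t$ PowerMethod iterations: the \emph{simulated} one $\hat{l}^{(0)}=1,\hat{l}^{(1)},\dots,\hat{l}^{(t)}$ produced by the calibration procedure (which, by hypothesis, ends with $\hat{l}^{(t)}>1-e$), and the \emph{real} one $l^{(0)}=1,l^{(1)},\dots,l^{(t)}$, where $l^{(i)}=\|\mathbf{v}^{(i)}\|_2$ and $\mathbf{v}^{(i)}$ is the approximate eigenvector after the $i$-th iteration on a covariance matrix $\Sigma$ all of whose eigenvalues exceed $S$. The plan is to establish two facts and combine them: (i) $l^{(i)}\le 1$ for every $i$; (ii) $l^{(i)}\ge \hat{l}^{(i)}$ for every $i$. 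Granting both, $l^{(t)}\in[\hat{l}^{(t)},1]\subseteq(1-e,1]$, so $|1-l^{(t)}|<e$, which is the claim.

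For (i), I would argue that the pre-normalization vector at iteration $i$ is $\mathbf{y}_i=\Sigma\mathbf{v}^{(i-1)}$, whose norm $m_i=\|\mathbf{y}_i\|_2$ is at most the bound $B$ derived in Section~6.1.2 — and, by Theorem~\ref{thm: EigShiftnoBig}, this bound survives any EigenShift, so it applies to the $\Sigma$ at hand. Hence the evaluation point $x_i=m_i^2$ lies in the declared interval $(0,B^2]$ of the iterative InvSRT, on which the Taylor initialization satisfies $0<T(x_i)\le 1/\sqrt{x_i}$ (the inequality is the property of $T$ recalled from \cite{qu4258571improvements}; positivity holds for the truncation degree actually used). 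Theorem~\ref{thm:itr_invsqrt} then shows each Newton step of Algorithm~\ref{alg:Newton} keeps the iterate $\le 1/\sqrt{x_i}$, so the scaling factor applied in the normalization is $\le 1/m_i$ and $l^{(i)}=m_i\cdot(\text{scaling factor})\le 1$.

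For (ii), I would induct on $i$, with base case $l^{(0)}=1=\hat{l}^{(0)}$ (the initial eigenvector is generated with unit norm). The key algebraic observation is that, writing $\rho=y\sqrt{x}$, one Newton step $y\mapsto\tfrac12 y(3-xy^2)$ becomes the \emph{$x$-independent} map $\rho\mapsto\phi(\rho):=\tfrac12\rho(3-\rho^2)$; hence the post-normalization length, viewed as a function of the pre-normalization length $m$, equals exactly $G_i(m)=\phi^{(\tau_i)}\!\bigl(m\,T(m^2)\bigr)$, where $\tau_i$ is the InvSRT iteration count recorded by the calibration for round $i$. Since $\phi'(\rho)=\tfrac32(1-\rho^2)\ge 0$ and $\phi([0,1])\subseteq[0,1]$, the map $\phi^{(\tau_i)}$ is non-decreasing on $[0,1]$, and $m\,T(m^2)=u(m^2)\in[0,1]$ for $u(x):=\sqrt{x}\,T(x)$. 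By the eigenvalue hypothesis, $\|\Sigma\mathbf{v}\|_2\ge\lambda_{\min}\|\mathbf{v}\|_2>S\|\mathbf{v}\|_2$, so $m_i^{\mathrm{real}}\ge S\,l^{(i-1)}\ge S\,\hat{l}^{(i-1)}=m_i^{\mathrm{sim}}$, the last equality being precisely the pessimistic ``$l\leftarrow l\cdot S$'' step of the simulation. It thus remains to show $u(x_i^{\mathrm{real}})\ge u(\hat{x}_i^{\mathrm{sim}})$ with $x_i^{\mathrm{real}}\ge\hat{x}_i^{\mathrm{sim}}$, both in $(0,B^2]$; applying the monotone $\phi^{(\tau_i)}$ then yields $l^{(i)}\ge\hat{l}^{(i)}$.

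The main obstacle is exactly this monotonicity step for $u$, and it is where the calibration of Lazy Normalization enters. Note $\hat{x}_i^{\mathrm{sim}}=S^2(\hat{l}^{(i-1)})^2\le S^2\le x_0$, where $x_0=B^2/2+1$ is the Taylor centre. On $(0,x_0]$ the function $u$ is increasing (for the linear initialization this is immediate from $u'(x)\propto x_0-x$; for a general odd-degree truncation it follows from the sign pattern of the Taylor coefficients of $x^{-1/2}$ together with $T\le 1/\sqrt{x}$, with equality only at $x_0$). Hence if $x_i^{\mathrm{real}}\le x_0$, monotonicity on $(0,x_0]$ gives $u(x_i^{\mathrm{real}})\ge u(\hat{x}_i^{\mathrm{sim}})$ directly; and if $x_i^{\mathrm{real}}>x_0$, then $u(x_i^{\mathrm{real}})\ge\min_{[x_0,B^2]}u=u(B^2)$, while $u(\hat{x}_i^{\mathrm{sim}})\le u(S^2)$, so one needs only the mild calibration inequality $u(S^2)\le u(B^2)$ — automatic whenever $B$ is substantially larger than $S$, which always holds here since $S<1$ and $B=\Theta(\sqrt{b^2 d})$. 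The remaining pieces (the relative-error reformulation of Newton, the monotonicity of $\phi$, and the inductive bookkeeping) are routine, so the whole difficulty is concentrated in controlling the shape of $u=\sqrt{x}\,T(x)$ on the evaluation interval.
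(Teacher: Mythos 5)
Your argument is correct in substance, but it takes a much longer (and more complete) route than the paper's. The paper's entire proof of Theorem~\ref{thm: ContractionCoeff} is the single eigenvalue estimate that appears as one line in your induction: writing a unit vector $\mathbf{v}=c_1\mathbf{x}_1+\dots+c_n\mathbf{x}_n$ in the eigenbasis, $\|\Sigma\mathbf{v}\|_2^2=\sum_i\lambda_i^2c_i^2\geq\lambda_{\min}^2$, so the real contraction per round is at least $S$; the paper then treats it as self-evident that the simulated trajectory (which applies exactly the factor $S$ and the same recorded InvSRT counts) is a pointwise lower bound on the real trajectory, and that real lengths never exceed $1$. You instead make both implicit steps explicit: the upper bound $l^{(i)}\leq 1$ via $T(x)\leq 1/\sqrt{x}$ and Theorem~\ref{thm:itr_invsqrt}, and the domination $l^{(i)}\geq\hat{l}^{(i)}$ via the reparametrization $\rho=y\sqrt{x}$, which turns each Newton step into the $x$-independent, monotone map $\phi(\rho)=\tfrac12\rho(3-\rho^2)$ on $[0,1]$ and reduces everything to monotonicity of $u(x)=\sqrt{x}\,T(x)$. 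This is a genuine strengthening: the comparison between the real and simulated traces is not automatic precisely because the normalization is only approximate, and your $\phi$/$u$ analysis is what actually licenses the ``worst-case simulation'' reading that the paper asserts without proof. What the paper's proof buys is brevity (one quadratic-form inequality); what yours buys is a rigorous bridge from the per-round contraction bound to the final error bound $|1-l^{(t)}|\leq e$.

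One caveat on your own soft spot: the monotonicity of $u$ on $(0,x_0]$ is fully verified only for the linear Taylor initialization (where indeed $u'(x)\propto x_0-x$), which is the case the paper actually implements; for a general odd-degree truncation your appeal to ``the sign pattern of the Taylor coefficients'' is a sketch, not a proof, and the auxiliary inequality $u(S^2)\leq u(B^2)$ is a genuine calibration condition rather than something that follows from the theorem's hypotheses alone. Since the paper's statement is phrased for an arbitrary instance of Lazy Normalization, you should either state these as explicit assumptions on the instance or restrict to the linear initialization; with that caveat your proof stands.
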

\begin{proof}
    We need to prove that the scaling effect of the covariance matrix transformation on any vector $\mathbf{v}$ with a length of 1 is not smaller than its minimum eigenvalue $\lambda_n$. Let $\lambda_1,\lambda_2,\dots,\lambda_n$ represent the eigenvalues of $\Sigma$ arranged in descending order, and $\mathbf{x}_1,\dots,\mathbf{x}_n$ be their corresponding eigenvectors. It is known that $\mathbf{v}$ can be expressed as: $\mathbf{v} = c_1\mathbf{x}_1+c_2\mathbf{x}_2+\dots+c_n\mathbf{x}_n$, where $c_1^2 + c_2^2+\dots+c_n^2 = 1$. Then $||\Sigma\mathbf{v}||_2^2 = \lambda_1^2c_1^2+\dots+\lambda_n^2c_n^2 \geq \lambda_n^2(c_1^2+\dots+c_n^2)=\lambda_n$.
    %我们只需要证明协方差矩阵变换对任意具有长度 1 的向量 $\mathbf{v}$ 的缩放效果均不小于它的最小特征值 $\lambda$ 即可。设 $\lamda_1,\lambda_2,\dots,\lambda_n$ 代表 $\Sigma$ 的从大到小排列的特征值，$\mathbf{x}_1,\dots,\mathbf{x}_n$ 是它们对应的特征向量。已知 $\mathbf{v}$ 能被表示成：$\mathbf{v} = c_1\mathbf{x}_1+c_2\mathbf{x}_2+\dots+c_n\mathbf{x}_n$，其中 ，$c_1^2 + c_2^2+\dots c_n^2 = 1$。那么 $||\Sigma\mathbf{v}||_2^2 = \lambda_1^2c_1^2+\dots +\lamda_n^2c_n^2 \geq \lambda_n^2(c_1^2+\dots+c_n^2)=\lambda_n^2$。
\end{proof}

\section{Modulus Refresh Timing in the Privacy-preserving PCA algorithm} \label{apd:C}
\begin{figure}[htbp]
\centerline{\includegraphics[scale=0.5]{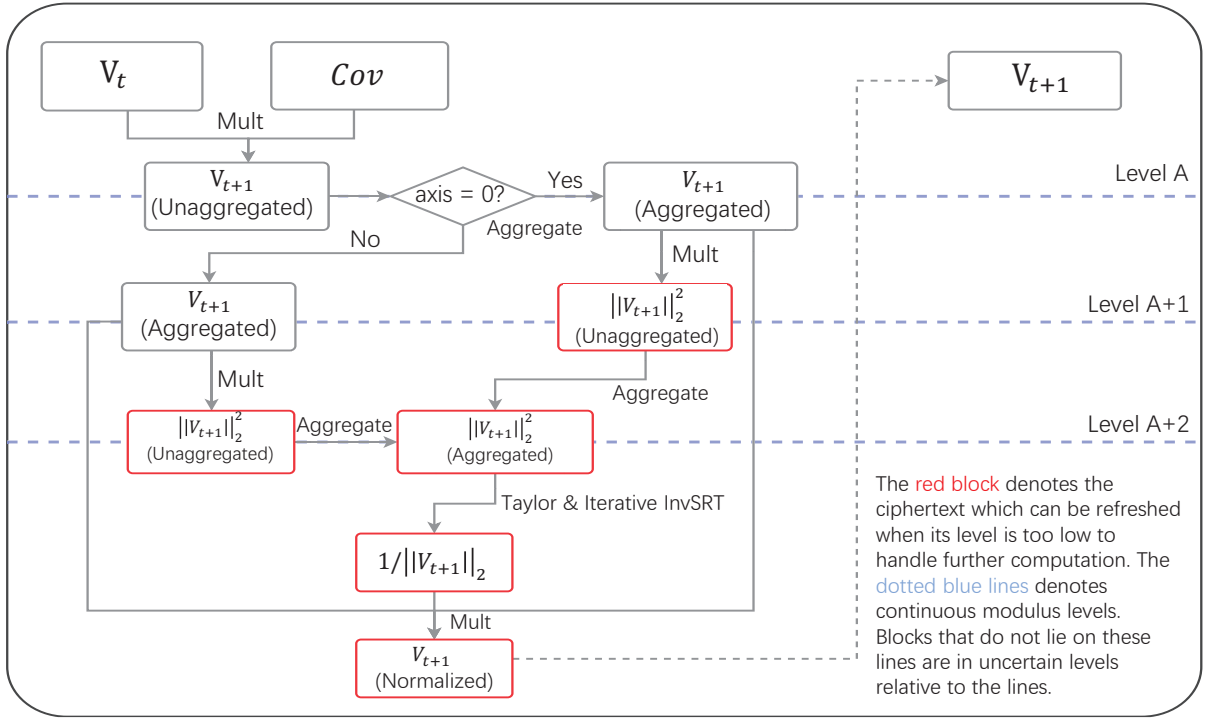}} %%图像路径：pic是文件夹名。
\caption{Flow chart of the modulus refresh strategy in homomorphic PowerMethod }
\label{fig:flowchart}
\end{figure}

%One advantage of our privacy-preserving PCA scheme is that 
No modulus refresh operations are necessary during the computation of the covariance matrix, as its multiplication depth can be handled easily within concrete CKKS schemes. Regarding the PowerMethod, the timing of modulus refresh operations is illustrated in Flowchart \ref{fig:flowchart}. As depicted in the flowchart, we avoid performing modulus refresh during the covariance matrix transformation because doing so would increase the frequency of modulus refreshes proportional to the size of the covariance matrix. Instead, we only refresh some intermediate values or the approximate eigenvector. Specifically, we manage to perform modulus refresh only once when refreshing the approximate eigenvector. This is achieved by extracting the components of the approximate eigenvector distributed across different ciphertexts and encoding them into a single ciphertext. After the refresh, the components can be re-distributed into different ciphertexts. This strategy effectively decouples the number of modulus refreshes from the number of vector components.    
In addition, it is worth noting that the cloud users in our scheme are not involved in any aspect of the privacy-preserving PCA algorithm, apart from encryption and decryption operations. Unlike previous approaches that require user participation in the privacy-preserving algorithmic process \cite{rathee2018faster}, \cite{lu2016using}, our scheme not only safeguards the algorithm privacy of the cloud service provider but also minimizes the coupling between the privacy-preserving PCA algorithm and the design of the communication protocol between the cloud and the user. % Regardless of the chosen modulus-level refresh strategy, 在我们方案中，除了加解密以外，云服务商的用户不参与隐私保护 PCA 算法中的任意环节，对比过往的一些需要用户参与算法过程的方案，我们的方案一方面保护了云服务商的算法权益，令一方面将隐私保护 PCA 算法与云服务通信协议设计之间的耦合度降至最低。

\end{document}